\newcommandx{\yaHelper}[2][1=\empty]{%
\ifthenelse{\equal{#1}{\empty}}%
  { \ensuremath{ \scriptstyle{ #2 } } } 
  { \raisebox{ #1 }[0pt][0pt]{ \ensuremath{ \scriptstyle{ #2 } } } }  
}
\newcommandx{\yrightarrow}[4][1=\empty, 2=\empty, 4=\empty, usedefault=@]{%
  \ifthenelse{\equal{#2}{\empty}}
  { \xrightarrow{ \protect{ \yaHelper[ #4 ]{ #3 } } } } 
  { \xrightarrow[ \protect{ \yaHelper[ #2 ]{ #1 } } ]{ \protect{ \yaHelper[ #4 ]{ #3 } } } } 
}
\newcommandx{\yleftarrow}[4][1=\empty, 2=\empty, 4=\empty, usedefault=@]{%
  \ifthenelse{\equal{#2}{\empty}}
  { \xleftarrow{ \protect{ \yaHelper[ #4 ]{ #3 } } } } 
  { \xleftarrow[ \protect{ \yaHelper[ #2 ]{ #1 } } ]{ \protect{ \yaHelper[ #4 ]{ #3 } } } } 
}
\newcommandx{\yRightarrow}[4][1=\empty, 2=\empty, 4=\empty, usedefault=@]{%
  \ifthenelse{\equal{#2}{\empty}}
  { \xRightarrow{ \protect{ \yaHelper[ #4 ]{ #3 } } } } 
  { \xRightarrow[ \protect{ \yaHelper[ #2 ]{ #1 } } ]{ \protect{ \yaHelper[ #4 ]{ #3 } } } } 
}
\newcommandx{\yLeftarrow}[4][1=\empty, 2=\empty, 4=\empty, usedefault=@]{%
  \ifthenelse{\equal{#2}{\empty}}
  { \xLeftarrow{ \protect{ \yaHelper[ #4 ]{ #3 } } } } 
  { \xLeftarrow[ \protect{ \yaHelper[ #2 ]{ #1 } } ]{ \protect{ \yaHelper[ #4 ]{ #3 } } } } 
}
\pgfplotsset{compat=newest}
\definecolor{tudcyan}{RGB}{0,166,214}
\definecolor{tudmagenta}{RGB}{109,23,127}
\definecolor{tudpurple}{RGB}{29,28,115}
\definecolor{tudgraygreen}{RGB}{107,134,137}
\colorlet{lighttudcyan}{tudcyan!20}
\colorlet{lighttudmagenta}{tudmagenta!20}
\newlength{\hatchspread}
\newlength{\hatchthickness}
\newlength{\hatchshift}
\newcommand{\hatchcolor}{}
\tikzset{hatchspread/.code={\setlength{\hatchspread}{#1}},
	hatchthickness/.code={\setlength{\hatchthickness}{#1}},
	hatchshift/.code={\setlength{\hatchshift}{#1}},
	hatchcolor/.code={\renewcommand{\hatchcolor}{#1}}}
\tikzset{hatchspread=7pt,
	hatchthickness=0.5pt,
	hatchshift=0pt,
	hatchcolor=black}
\def\centerarc[#1](#2)(#3:#4:#5)
	\providecommand\BibTeX{{%
			\normalfont B\kern-0.5em{\scshape i\kern-0.25em b}\kern-0.8em\TeX}}}
\DeclareMathOperator{\Pre}{Pre}  
\DeclareMathOperator{\Post}{Post}
\def\norm[#1]{\left|#1\right|}
\def\shortnorm[#1]{|#1|}
\def\bisim{\ensuremath{\cong}}
\def\MAS{\ensuremath{\mathsf{MAS}}}
\def\equivAS{\ensuremath{\simeq_{AS}}}
\def\trsize{\mathsf{TranSize}}
\def\leAS{\ensuremath{\preceq_{AS}}}
\def\smin{\mathcal{S}_{\mathsf{min}}}
\def\isop{\ensuremath{\cong_{\mathsf{is}}}}
\def\MAS{\ensuremath{\text{MAS}}}
\def\MAS{\ensuremath{R^{\max}}}
\newcommand{\aA}{\mathcal{A}}
\newcommand{\parts}{\mathsf{Part}}
\newcommand{\tr}[4][]{\ensuremath{{#2}\yrightarrow{#3}[-2pt]_{#1}{#4}}}
\DeclareMathOperator{\trnw}{Post}
\def\No{\mathbb{N}_{0}}
\def\N{\mathbb{N}}
\def\R{\mathbb{R}}
\def\Is{\mathcal{I}}
\def\Rs{\mathcal{R}}
\def\Ss{\mathcal{S}}
\def\P{\mathcal{P}}
\def\xv{\boldsymbol{x}}
\def\uv{\boldsymbol{u}}
\def\wv{\boldsymbol{w}}
\def\O{{\normalfont\textbf{0}}}
\def\edge#1{\yrightarrow{#1}[-2pt]}
\newcommand{\fakeparagraphnospace}[1]{\vspace{1mm}\noindent\textbf{#1.}}
\newcommand{\fakeparagraph}[1]{\fakeparagraphnospace{#1}}
\begin{document}
	
	\title{A Simpler Alternative: Minimizing Transition Systems Modulo Alternating Simulation Equivalence}
	
	\author{Gabriel de A.~Gleizer}
	\email{g.gleizer@tudelft.nl}
	\affiliation{%
		\institution{TU Delft}
		\streetaddress{Mekelweg}
		\city{Delft}
		\country{The Netherlands}}
	
	\author{Khushraj Madnani}
	\email{K.N.Madnani-1@tudelft.nl}
	\affiliation{%
		\institution{TU Delft}
		\streetaddress{Mekelweg}
		\city{Delft}
		\country{The Netherlands}}
	
	\author{Manuel Mazo Jr.}
	\email{m.mazo@tudelft.nl}
	\affiliation{%
		\institution{TU Delft}
		\streetaddress{Mekelweg}
		\city{Delft}
		\country{The Netherlands}}
	
	\begin{abstract}
		This paper studies the reduction (abstraction) of finite-state transition systems for control synthesis problems. We revisit the notion of alternating simulation equivalence (ASE), a more relaxed condition than alternating bisimulations, to relate systems and their abstractions. As with alternating bisimulations, ASE preserves the property that the existence of a controller for the abstraction is necessary and sufficient for a controller to exist for the original system. Moreover, being a less stringent condition, ASE can reduce systems further to produce smaller abstractions. We provide an algorithm that produces minimal AS equivalent abstractions. The theoretical results are then applied to obtain (un)schedulability certificates of periodic event-triggered control systems sharing a communication channel. A numerical example illustrates the results.
	\end{abstract}
    \keywords{Alternating simulation, minimization, combinatorial games, controller synthesis, event-triggered control, scheduling.}
	\maketitle

	\section{INTRODUCTION}

    Control synthesis for finite transition systems (FTS), the problem of finding a controller (a strategy) that enforces specifications on a closed-loop system, is a long investigated problem \cite{ramadge1989control}. Supervisory control, as it is often also referred to, has many applications in e.g. automation of manufacturing plants, traffic control, scheduling and planning, and control of dynamical and hybrid systems \cite{cassandras2008introduction, tabuada2009verification}. The clearest advantage of using finite transition systems to model a control problem is that a large class of control problems in finite transition systems are \emph{decidable}, meaning that the controller can be obtained automatically through an algorithm, or that a definitive answer that no controller can enforce the specifications is obtained. The disadvantage is often a very practical one: the problem may be too large to be solved in practice, owing to the large number of states and transitions the control problem may have. In particular, this is the case of scheduling the transmissions of event-triggered control (ETC) systems in a shared network \cite{mazo2018abstracted}, whose traffic models can be abstracted as FTSs \cite{gleizer2020scalable}: often it is not possible to synthesize schedulers for more than a handful of ETC systems, due to the state explosion of the composed system.
    This state-space explosion problem is pervasive, and thus significant attention has been devoted to reducing transition systems. The reduction requires a formal relation between original and reduced system; for verification purposes, the most well-known relation is that of \emph{simulation} \cite{milner1971algebraic, baier2008principles}. Algorithms to reduce systems modulo simulations soon emerged: the first being a reduction modulo \emph{bisimulation}, where algorithms using \emph{quotient systems} are often used \cite{baier2008principles}; later, minimization modulo \emph{simulation equivalence} was devised in \cite{bustan2003simulation}. Simulation equivalence is a weaker relation than bisimulation but allows to verify most of the same properties; in particular, any linear temporal logic (LTL) property that can be verified on a system also holds for a simulation equivalent system.%
    \footnote{Larger classes of logic properties can be verified, such ACTL*, ECTL*, ECTL, ACTL as its sublogics, see \cite{bustan2003simulation}. For control, we are typically interested in LTL specifications.}
    
    For control synthesis, reducing the system using mere simulation notions is not enough. Control synthesis can be seen as a game over a finite alphabet, where the controller plays against an antagonistic environment, and simulations preserve all possible moves from both players, \emph{including moves that are irrational} for the game. The notion that appropriately captures the game aspect of control synthesis problems is that of \emph{alternating simulation,} introduced for multi-agent systems by Alur et al.~in \cite{alur1998alternating}.
    Surprisingly, though, there has been little investigation of the problem of reducing systems modulo \emph{alternating bisimulations} or \emph{alternating simulation equivalence}. Reducing systems using alternating simulation notions has many practical benefits: not only the synthesis problems become smaller, and by extension the obtained controllers, making them easier to implement in limited hardware; but it becomes even more important, we argue, when solving control synthesis problems on a parallel composition of systems, one classic example being scheduling. In this case, the size of the game grows exponentially with the number of systems to be scheduled, hence any reduction on the individual systems results in an exponential reduction of the size of the composed game.
    
    In this work we present a novel algorithm to reduce systems w.r.t.~alternating-simulation equivalence (ASE), a different and relaxed notion than the more popular relative \emph{alternating-bisimulation relation}. ASE is nonetheless stronger than alternating simulation relations, as it guarantees not only that controllers can be transferred from abstraction (the reduced system) to concrete (the original system), but also that non-existence of a controller in the abstraction implies non-existence of a controller for the concrete system. Hence the reduction via ASE is sound and complete for control synthesis. We prove that our algorithm in fact obtains a minimal system that is alternating-simulation equivalent to the original. The algorithm is composed of five steps: (i) computing the maximal alternating simulation relation from the system to itself; (ii) forming the quotient system; (iii) eliminating irrational and/or redundant actions from the controller; (iv) eliminating irrational transitions from the environment; and (v) deleting  states which are inaccessible from any of the initial states. The complexity of the algorithm is $O(m^2)$, where $m$ is the number of transitions in the system to be reduced. This result is a very interesting theoretical contribution on its own right, generalizing the results in \cite{bustan2003simulation}. Because these simulation relations are closed under composition, the presented algorithm has a strong practical relevance for synthesis over composed systems. We demonstrate these benefits  on a case study --- one which in fact motivated the investigation of our problem: scheduling of multiple periodic event-triggered control (PETC) \cite{aastrom2002comparison, tabuada2007event, heemels2013periodic} systems on a shared channel. The insights from our algorithm allow to prove that, under some conditions, ETC and self-triggered control (STC, \cite{velasco2003self, anta2008self, mazo2010iss}) are equally schedulable. Additionally, we use our algorithm on a numerical case study, obtaining in the best case a system 50x smaller than the original one. This resulted in a reduction in CPU time of the scheduling problem of several orders of magnitude in some cases. Furthermore, the reduced systems also provide important insights to the user, as the reduced system indicates somehow the bottlenecks that must be addressed to improve schedulability.
    
	\subsection{Related Work}
	Algorithms for reducing state space preserving bisimulation using quotient systems have been extensively studied \cite{kanellakis1990ccs, lee1992online}, see \cite{baier2008principles, bergstra2001handbook} for an overview. For many practical results, simulation equivalence, a coarser equivalence relation, is preferable. Various algorithms to obtain quotients based on simulation equivalence have been proposed,e.g.,~\cite{henzinger1995computing, ranzato2007lics}, as well as their associated quotients \cite{cleaveland2001equivalence}. However, unlike bisimulation, creating quotients based on simulation equivalence does not result in minimization \cite{bustan2003simulation}.
	Our algorithm and results are akin to those of \cite{bustan2003simulation}; we have here a generalization of its results, as alternating simulation reduces to simulations if one of the players has only one choice in every state.
	
	The reduction of systems using alternating simulation equivalence has been addressed in \cite{majumdar2003symbolic, henzinger2005classification}. Different from the current work, Majumdar et al.~ propose a semi-algorithm that aims at reducing infinite systems into finite systems (not necessarily minimal); instead, here we want to minimize finite systems by reducing the number of states and transitions. These two approaches are complimentary and can be used in combination to obtain minimal finite realizations of certain classes of infinite systems (namely, class 2 systems as per \cite{majumdar2003symbolic}).
	
	Reduction of other types of finite transition systems has been addressed, as in, e.g., \cite{fritz2002state} for alternating B\"uchi automata modulo different notions of simulations, namely direct, fair, and delayed simulations. Although such automata also represent games, they are defined differently than what is usual for control: an alternating B\"uchi automaton accepts a word if the controller can ensure it by playing against the environment; every such word forms the language of the automaton, and simulations must preserve this language in some sense. This is fundamentally different than most control problems, where one is not interested in specific words, but rather that the set of all words generated by the system satisfies some specifications. In addition, \cite{fritz2002state} does not contain results on minimality.
	
	\subsection{Notation}
	
		We denote by $\No$ the set of natural numbers including zero, $\N \coloneqq \No \setminus \{0\}$, $\N_{\leq n} \coloneqq \{1,2,...,n\}$.  
		For a relation $R \subseteq X_a \times X_b$, its inverse is denoted as $R^{-1} = \{(x_b, x_a) \in X_b \times X_a \mid (x_a, x_b) \in R\}$. Every function $F: X_a \mapsto X_b$ can be read as a relation, namely $\{(x_a, x_b) \in X_a \times X_b \mid x_b = F(x_a)\}.$

	\section{Preliminaries}
	\subsection{Labelled Transition Systems}
     A (finite) LTS is a 6-tuple $\Ss \coloneqq (X,X_0,U,Y,\delta,H)$, where $X$ is a (finite) set of states, $X_0 \subseteq X$ is the set of initial states, $U$ is the (finite) set of edge labels called inputs or actions, $Y$ is the set of outputs or observations, $\delta \subset X \times U \times X$ is the set of transitions and $H:X \mapsto Y$, the output map, maps states to their corresponding outputs. %
     Figure \ref{fig:twexample} shows one example of a finite LTS, which is our running example throughout this paper; its meaning is going to be explained in Section \ref{sec:casestudy}. 
     
     \begin{figure}
    \centering
    \footnotesize
    \begin{tikzpicture} [node distance = 1.75cm, on grid, auto]
        \node (q00) [state with output, initial] {$q_{0,1}$ \nodepart{lower} \texttt{T}};
        \node (q01) [state with output, right = of q00] {
            $q_{0,2}$ \nodepart{lower} \texttt{W}    };
        \node (q10) [state with output, initial, below = of q00] {$q_{1,1}$ \nodepart{lower} \texttt{T}};
        \node (q11) [state with output, right = of q10] {
            $q_{1,2}$ \nodepart{lower} \texttt{W}};
        \node (q12) [state with output, right = of q11] {
            $q_{1,3}$ \nodepart{lower} \texttt{W}};
        \node (q13) [state with output, right = of q12] {
            $q_{1,4}$ \nodepart{lower} \texttt{W}};
            
        \path [-stealth, thick]
            (q00) edge node {\texttt{w}}   (q01)
            (q00) edge [loop above]  node {\texttt{s}} (q00)
            (q01) edge [bend right] node[above] {\texttt{w, s}}   (q00)
            (q01) edge node {\texttt{w, s}}   (q10)
            (q10) edge node {\texttt{w}}   (q11)
            (q10) edge [loop above]  node {\texttt{s}} (q10)
            (q11) edge [bend left] node[above] {\texttt{s}}   (q10)
            (q11) edge node {\texttt{w}}   (q12)
            (q12) edge node {\texttt{w}}   (q13)
            (q12) edge [bend left] node {\texttt{w, s}}   (q10)
            (q13) edge [bend left] node {\texttt{w, s}}   (q10);
    \end{tikzpicture}
    \caption{A finite LTS representing a PETC traffic model with scheduler actions. Node labels are states (top) and their outputs (bottom), and edge labels are actions.}
    \label{fig:twexample}
\end{figure}
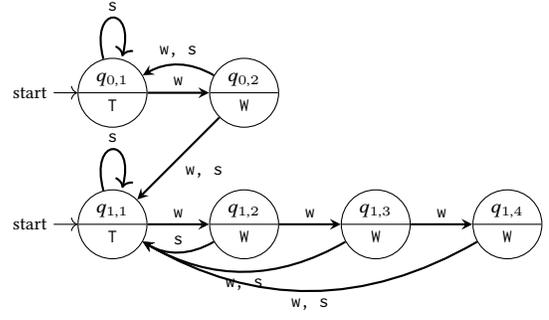

     The \emph{size} of an LTS, denoted by $|\Ss|,$ is the triplet $(|X|, |X_0|, \allowbreak |\delta|)$. This induces a partial order amongst systems sizes using the natural extension of $\leq$ on numbers, i.e., $|(X,X_0,U,Y,\delta,H)| \leq |(X',X'_0,U',Y',\delta',H')|$ iff $|X|<=|X'|$, $|X_0|\leq|X'_0|,$ and $|\delta|\leq|\delta'|$.
     For any $u \in U$ and $x,x' \in X$, We use $\tr{x}{u}{x'}$ to denote the fact that $(x,u,x') \in \delta$. We denote by $U(x) \coloneqq \{u \in U \mid \exists x' \in X, \tr{x}{u}{x'}\}$ the set of input labels available at state $x$, $\Post(x,u) \coloneqq \{x' \in X \mid \tr{x}{u}{x'} \}$ the set of $u$-successors of $x$ and $\Pre(x,u) \coloneqq \{x' \in X \mid \tr{x'}{u}{x} \}$. 
     When the system $\Ss$ is not clear from context, we use, respectively, $\tr[\Ss]{x}{u}{x'}, \Post^\Ss(x,u),$ and $U^\Ss(x)$. 
     System $\Ss$ is said to be deterministic if for every $x \in X$ and $u \in U(x)$, we have $|\Post(x,u)| = 1.$ For a state $x \in X$, we denote by $\Ss(x) \coloneqq \{X, {x}, U, Y, \delta, H\}$ the system $\Ss$ initialized at $x$.
     
     Finite LTSs represent dynamical systems that evolve in discrete state spaces upon the occurrence of actions or events in $U$. They can represent computer programs, machines or factories, but also infinite dynamical systems through the method of abstractions, see \cite{tabuada2009verification}. The problem of control design in finite LTSs is to design a \emph{controller} or \emph{strategy} that chooses the action in $U$ at any point of the run $r$ of the system such that a given specification $\phi$ is satisfied. This has a game aspect in that the controller must ensure $\phi$ no matter what the \emph{environment} does; hence, one can see the environment, i.e., the entity that picks transitions in $\delta$ given the outbound state $x$ and the control action $u$, as antagonist to the controller objectives. The specification $\phi$ is typically given in terms of linear temporal logic (LTL), from which two popular particular cases are safety and reachability. In our scheduling case study (§\ref{sec:casestudy}), we have a safety problem, which is to avoid collisions during transmissions over a shared communication channel.

\subsection{Alternating Simulation and Equivalence}

The concept of alternating simulations was first proposed by \cite{alur1998alternating} for multi-player games on structures called alternating transition systems. It was later simplified by Tabuada for a two-player game, where the \emph{controller} chooses actions in $U$ to meet some specification against an antagonist \emph{environment} that chooses the transitions. The following definition is an adaptation of Tabuada's \cite{tabuada2009verification}:
\begin{definition}[Alternating simulation (AS)]\label{def:altsim}
		Consider two systems $\Ss_a \coloneqq (X_a,X_{a0},U_a,Y_a,\delta_a,H_a)$ and $\Ss_b = (X_b,X_{b0},U_b,Y_b,\delta_b,\allowbreak,H_b)$. We say that $\Ss_b$ is an \emph{alternating simulation} of $\Ss_a$, denoted by $\Ss_a \preceq_{AS} \Ss_b$, if there exists a relation $R \subseteq X_a \times X_b$ satisfying following requirements:
		\begin{enumerate}[label=(\roman*)]
			\item $\forall x_{b0} \in X_{b0} ~\exists x_{a0} \in X_{a0}$ such that $(x_{a0},x_{b0}) \in R$;
			\item $\forall (x_a,x_b) \in R$, it holds that $H(x_a) = H(x_b)$;
			\item $\forall (x_a,x_b) \in R, \forall u_a \in U_a(x_a) ~\exists u_b \in U_b(x_b)$ such that $\forall x_b' \in \allowbreak\Post^{\Ss_b}(x_b,u_b),$ $\exists x_a' \in \Post^{\Ss_a}(x_a,u_a)$ s.t.~$(x_a',x_b') \in R.$
		\end{enumerate}
		We call $R$ an \emph{alternating simulation relation (ASR)} from $\Ss_a$ to $\Ss_b$. When using a specific relation $R,$ we use the notation $\Ss_a \preceq_R \Ss_b.$
	\end{definition}
It is easy to see that if two relations $R_1$ and $R_2$ satisfy $\Ss_a \preceq_{R_1} \Ss_b$ and $\Ss_a \preceq_{R_2} \Ss_b$, then $\Ss_a \preceq_{R_1 \cup R_2} \Ss_b$. The union of all ASRs from $\Ss_a$ to $\Ss_b$ is called the \emph{maximal alternating simulation relation} from $\Ss_a$ to $\Ss_b$.

Intuitively, given LTS $\Ss_a$ and $\Ss_b$, an ASR from an LTS $\Ss_a$ to $\Ss_b$, implies that every controller move of $\Ss_a$ can be ``replicated'' by the controller of $\Ss_b$ and every environment move of $\Ss_b$ can be ``replicated'' by that of $\Ss_a$. Informally, this means that the controller of $\Ss_b$ is at least as powerful as that of $\Ss_a$ and the environment of $\Ss_a$ is at least as powerful as that of $\Ss_b$. This interpretation is also behind our modification of the definition w.r.t.~\cite{tabuada2009verification}, where condition (i) is reversed: in our definition, the ``environment'' picks the initial state, so every initial state in $\Ss_b$ must be matched in $\Ss_a$.%
\footnote{Note that Tabuada's definition and ours are not fundamentally different. In both cases, one could have a single initial state, and condition (i) of Def.~\ref{def:altsim} would be a consequence of condition (iii) by adding silent transitions from the initial state to the ``real'' initial state set; for Tabuada's definition, condition (i) would be derived by (iii) if instead the controller would have a different action for each of these transitions.}

The importance of alternating simulations for control stem from the following fact: given any temporal-logic specification $\phi$ over the alphabet $Y$, if $\Ss_a \preceq_{AS} \Ss_b,$ then the existence of a controller for $\Ss_a$ such that the closed-loop system satisfies $\phi$ implies that there exists a controller for $\Ss_b$ meeting the same specification; in fact, the strategy for $\Ss_a$ can be refined for $\Ss_b$. Moreover, for any specification $\phi$, if $(x,x') \in R$ and the controller can ensure $\phi$ from $x$, then it can ensure $\phi$ from $x'$; the symmetric notion holds: if the controller cannot ensure $\phi$ from $x'$, then it cannot ensure it  from $x$. An additional reason for the importance of AS is that it commutes with composition, making this notion suitable for control design of a composition of systems, such as the scheduling problem we tackle in §\ref{sec:casestudy}. For a thorough exposition about these facts and how to synthesize controllers for several types of specifications we refer the reader to \cite{tabuada2009verification}. Here we are interested in reducing a system $\Ss_a$ preserving an \emph{if-and-only-if} property; namely, \emph{for any specification} there exists a controller for $\Ss_a$ iff there exists a controller for $\Ss_b$. The most known notion for this is that of alternating bisimulation:
\begin{definition}[Alternating bisimulation] Two LTSs $\Ss_a$ and $\Ss_b$ are said to be \emph{alternatingly bisimilar}, denoted by $\Ss_a \bisim_{AS} \Ss_b,$ if there is an ASR $R$ from $\Ss_a$ to $\Ss_b$ such that its inverse $R^{-1}$ is an ASR from $\Ss_b$ to $\Ss_a$.
\end{definition}
A relaxed notion w.r.t.~bisimulation is that of equivalence:
\begin{definition}[Alternating simulation equivalence (ASE)] Two LTSs $\Ss_a$ and $\Ss_b$ are said to be \emph{alternating-simulation equivalent}, denoted by $\Ss_a \equivAS \Ss_b,$ if there is an ASR $R$ from $\Ss_a$ to $\Ss_b$ and an ASR $R'$ from $\Ss_b$ to $\Ss_a$.
\end{definition}
ASE reduces to bisimulation when $R' = R^{-1};$ nevertheless, it preserves by definition the if-and-only-if property we are interested in. Moreover, a second relation is an extra degree of freedom to find a reduced system that is ASE to the original. There is a price to pay for this freedom: the controller designed for the reduced system will not be as \emph{permissive} as the best controller that could be created by the original system; in other words, it may contain fewer actions available to pick from at any point in the system's run. Nonetheless, this can be regarded as a benefit, considering the sheer size the strategies for large LTSs can have.


\section{Main result}\label{sec:theory}
In this section, we present our main result:%
\footnote{When a proof is not right after the result statement, see it in the Appendix.} %
given an LTS $\mathcal{S},$ there exists a polynomial time algorithm that constructs a minimal LTS $\mathcal{S}_{\min}$ equivalent to $\mathcal{S}$ modulo alternating simulation (AS). %
That is, $\Ss_{\min} \equivAS \Ss$ and $|\Ss_{\min}| \leq |\Ss'|$ for any $\Ss'$ satisfying $\Ss' \equivAS \Ss$. %
(i) We first provide an overview of the algorithm to obtain such a minimal system.  
(ii) We then provide the details of the each step of the algorithm and prove its correctness by showing that all steps preserve alternating simulation equivalence. (iii) We show that the output of the algorithm is indeed the unique minimum LTS (up to isomorphism) alternating-simulation equivalent to the input LTS $\Ss$. 
This, in turn, implies that for every LTS there is a unique minimum LTS equivalent modulo AS system that can be constructed using our algorithm. 

\subsection{Overview of the algorithm}
The algorithm can be summarized as follows.  For a system $S \coloneqq (X,X_0,U,Y,\delta, H)$, we denote by $\trsize(S):=|\delta|+|X_0|$, a measure for number of transitions in the system \footnote{We add the cardinality of $X_0$ to total number of transitions because in principle the results we use from \cite{chatterjee2012faster} assumes that there is a unique initial state. Multiple initial states can be simulated by adding silent transitions from a dummy initial state to all the states in $X_0$ which requires $|X_0|$ extra transitions}.
Let $|X|= n$ and $\trsize(S) = m$.

\begin{figure}
    \centering
    \small
    \begin{tikzpicture} [node distance = 1cm, on grid, auto]
        \node (q00) {$q_{0,1}$};
        \node (q01) [right = of q00] {$q_{1,1}$};
        \node (q10) [right = of q01] {$q_{0,2}$};
        \node (q12) [above right = of q10] {
            $q_{1,3}$};
        \node (q11) [below right = of q12] {
            $q_{1,2}$};
        \node (q13) [below right = of q10] {
            $q_{1,4}$};
            
        \path [-stealth, thick]
            (q00) edge (q01)
            (q10) edge (q12) edge (q13)
            (q12) edge  (q13)
            (q13) edge   (q12)
            (q12) edge (q11)
            (q13) edge (q11);
    \end{tikzpicture}
    \caption{Maximal alternating simulation relation $\MAS$ for the system in Fig.~\ref{fig:twexample}: $q \edge{} q'$ means that $(q,q') \in \MAS.$ Self-loops and relations implied from transitivity are omitted.} 
    \label{fig:MAS}
\end{figure}
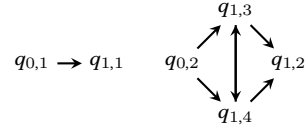
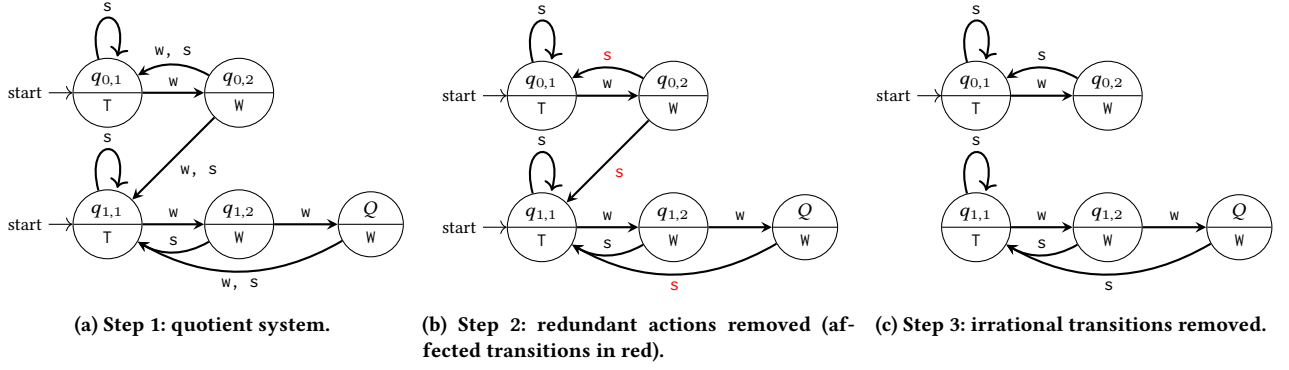
\begin{figure*}[t!]
	\begin{subfigure}[t]{0.32\linewidth}
		\centering
        \footnotesize
        \begin{tikzpicture} [node distance = 1.75cm, on grid, auto]
            \node (q00) [state with output, initial] {$q_{0,1}$ \nodepart{lower} \texttt{T}};
            \node (q01) [state with output, right = of q00] {
                $q_{0,2}$ \nodepart{lower} \texttt{W}    };
            \node (q10) [state with output, initial, below = of q00] {$q_{1,1}$ \nodepart{lower} \texttt{T}};
            \node (q11) [state with output, right = of q10] {
                $q_{1,2}$ \nodepart{lower} \texttt{W}};
            \node (q12) [state with output, right = of q11] {
                $Q$ \nodepart{lower} \texttt{W}};
                
            \path [-stealth, thick]
                (q00) edge node {\texttt{w}}   (q01)
                (q00) edge [loop above]  node {\texttt{s}} (q00)
                (q01) edge [bend right] node[above] {\texttt{w, s}}   (q00)
                (q01) edge node {\texttt{w, s}}   (q10)
                (q10) edge node {\texttt{w}}   (q11)
                (q10) edge [loop above]  node {\texttt{s}} (q10)
                (q11) edge [bend left] node[above] {\texttt{s}}   (q10)
                (q11) edge node {\texttt{w}}   (q12)
                (q12) edge [bend left] node {\texttt{w, s}}   (q10);
        \end{tikzpicture}
        \caption{Step 1: quotient system.}
        \label{fig:S1}
	\end{subfigure}~~
	\begin{subfigure}[t]{0.32\linewidth}
		\centering
        \footnotesize
        \begin{tikzpicture} [node distance = 1.75cm, on grid, auto]
            \node (q00) [state with output, initial] {$q_{0,1}$ \nodepart{lower} \texttt{T}};
            \node (q01) [state with output, right = of q00] {
                $q_{0,2}$ \nodepart{lower} \texttt{W}    };
            \node (q10) [state with output, initial, below = of q00] {$q_{1,1}$ \nodepart{lower} \texttt{T}};
            \node (q11) [state with output, right = of q10] {
                $q_{1,2}$ \nodepart{lower} \texttt{W}};
            \node (q12) [state with output, right = of q11] {
                $Q$ \nodepart{lower} \texttt{W}};
                
            \path [-stealth, thick]
                (q00) edge node {\texttt{w}}   (q01)
                (q00) edge [loop above]  node {\texttt{s}} (q00)
                (q01) edge [bend right] node[above] {\color{red} \texttt{s}}   (q00)
                (q01) edge node {\color{red} \texttt{s}}   (q10)
                (q10) edge node {\texttt{w}}   (q11)
                (q10) edge [loop above]  node {\texttt{s}} (q10)
                (q11) edge [bend left] node[above] {\texttt{s}}   (q10)
                (q11) edge node {\texttt{w}}   (q12)
                (q12) edge [bend left] node {\color{red} \texttt{s}}   (q10);
        \end{tikzpicture}
        \caption{Step 2: redundant actions removed (affected transitions in red).}
        \label{fig:S2}
	\end{subfigure}~~
	\begin{subfigure}[t]{0.32\linewidth}
	    \centering
        \footnotesize
        \begin{tikzpicture} [node distance = 1.75cm, on grid, auto]
            \node (q00) [state with output, initial] {$q_{0,1}$ \nodepart{lower} \texttt{T}};
            \node (q01) [state with output, right = of q00] {
                $q_{0,2}$ \nodepart{lower} \texttt{W}    };
            \node (q10) [state with output, below = of q00] {$q_{1,1}$ \nodepart{lower} \texttt{T}};
            \node (q11) [state with output, right = of q10] {
                $q_{1,2}$ \nodepart{lower} \texttt{W}};
            \node (q12) [state with output, right = of q11] {
                $Q$ \nodepart{lower} \texttt{W}};
                
            \path [-stealth, thick]
                (q00) edge node {\texttt{w}}   (q01)
                (q00) edge [loop above]  node {\texttt{s}} (q00)
                (q01) edge [bend right] node[above] {\texttt{s}}   (q00)
                (q10) edge node {\texttt{w}}   (q11)
                (q10) edge [loop above]  node {\texttt{s}} (q10)
                (q11) edge [bend left] node[above] {\texttt{s}}   (q10)
                (q11) edge node {\texttt{w}}   (q12)
                (q12) edge [bend left] node {\texttt{s}}   (q10);
        \end{tikzpicture}
        \caption{Step 3: irrational transitions removed.}
        \label{fig:S3}
	\end{subfigure}
	\caption{\label{fig:1-3} System of Fig.~\ref{fig:twexample} after steps 1 (left), 2 (middle) and 3 (right), where $Q = \{(q_{1,3}), (q_{1,4})\}$.}
\end{figure*}
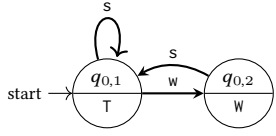
\begin{figure}
    \centering
    \footnotesize
    \begin{tikzpicture} [node distance = 1.75cm, on grid, auto]
        \node (q00) [state with output, initial] {$q_{0,1}$ \nodepart{lower} \texttt{T}};
        \node (q01) [state with output, right = of q00] {
            $q_{0,2}$ \nodepart{lower} \texttt{W}    };
        \path [-stealth, thick]
            (q00) edge node {\texttt{w}}   (q01)
            (q00) edge [loop above]  node {\texttt{s}} (q00)
            (q01) edge [bend right] node[above] {\texttt{s}}   (q00);
    \end{tikzpicture}
    \caption{System of Fig.~\ref{fig:twexample} after step 4. This is a minimal system modulo ASE.}
    \label{fig:S4}
\end{figure}

    \textbf{Step 0: Construct the maximal alternating simulation relation}, denoted by $\MAS$, from $\mathcal{S}$ to itself. This could be constructed using fixed-point algorithms as in \cite{alur1998alternating} or the more efficient algorithm presented in \cite{chatterjee2012faster}, whose complexity is $O(m^2)$. 
    
    \textbf{Step 1: Create a quotient system using $\MAS$} of $\mathcal{S}$ by combining all the equivalent states (and hence all their incoming transitions and outgoing transitions) to get a quotient of the system modulo AS.
    This requires $O(n+m)$ computations given the partition $\mathcal{P}$ (which can be constructed while building $\MAS$) as constructing the quotient transition relation from $\delta$ requires taking the union of all the outgoing transitions from any state in the given partition.
    
   Recall that, if $(q,q') \in \MAS$ then if the controller can meet a specification from the state $q$ then it will definitely meet it from state $q'$. Moreover, if the controller fails to meet the specification from $q'$ it will definitely fail from $q$. In other words, $q'$ (resp.~$q$) is more advantageous position for the controller (resp.~environment) as compared to $q$ (resp.~$q'$). This intuition is central to the next two steps.
    
    \textbf{Step 2: Remove irrational choices and redundant choices for the controller}: For every $x \in X$ and every $a,b \in U(x), a \ne b$ if for every $x_b \in \Post(x,b)$ there exists an $x_a \in \Post(x,a)$ such that $(x_a,x_b) \in$ $\MAS$, then delete all transitions from $x$ on $a$. In other words, remove $a$ from $U(x)$. This is because, for every possible environment move on taking an action $b$ leads to a more (or equally) advantageous state for the controller as compared to any possible state the system can end up on action $a$ by controller.
    To check this, every transition is compared with every other transition at most once. Hence, the complexity of this step in the worst case is bounded by $O(m^2)$.

    \textbf{Step 3: Remove sub-optimal irrational choices for the environment}: For every pair $x_1, x_2 \in X_0$, if $(x_1 ,x_2) \in \MAS$, then the choice of environment to start from $x_2$ will be irrational as $x_1$ is more advantageous position for the environment to start with. Hence, we remove $x_2$ from the initial state set (which is clearly an irrational move for the environment). Similarly, if $(x_1, x_2) \in \MAS$, then for every $a \in U$ if $x' \in \Pre(x_1,a) \cap \Pre(x_2,a)$, remove transition $(x',a,x_2)$ from $\delta$. This is because, if the system is at $x'$, and if the controller chooses an action $a$, the choice of moving to $x_2$ instead of $x_1$ is irrational for the environment as $x_1$ is more advantageous state for the environment. hence, we delete the transition $(x',a,x_2)$.
    Similarly to step 3, before its deletion (or not), any transition is compared with all other transitions at most once. Hence, the worst case complexity is bounded by $O(m^2)$.
    
     \textbf{Step 4: Remove Inaccessible States}: Finally remove all the states that are not accessible from any initial state. This is a routine step with complexity $O(n+m)$. Note that while it seems that Steps 3 and 4 only remove transitions, this does not mean that they do not contribute in the reduction of number of states. Due to the removal of transitions, it could happen that a large fraction of the graph becomes unreachable. This is the step that cashes in the benefit of steps 3 and 4 in terms of reduction in state size. 

The maximal alternating simulation relation from our working example (Fig.~\ref{fig:twexample}) is depicted in Fig.~\ref{fig:MAS}. Figures \ref{fig:1-3} and \ref{fig:S4} illustrate the successive application of each step 1--4 on it.

\subsection{Preserving equivalence Modulo AS: correctness results}

In this section, we formally present the construction/reduction mentioned in each step 1--4 and show that those reductions preserve equivalence modulo AS. These proofs are available in the appendix, due to space limitations. We also present results on the dimension reduction resulting from each step.
We fix $\mathcal{S} \coloneqq (X,X_0, U,Y, \delta, H)$ for this section as a given LTS and apply our reduction steps. For any $i \in \{1,2,3,4\}$ the system resulting of applying step $i$: $S^i(\mathcal{S})$ is the system $S_i$.

\textbf{Step 1: Creating a quotient system}. First, a quotient system $\mathcal{S}_1$ of  $\mathcal{S}$ is created using $\MAS$ as follows. %
Consider the partition $\mathcal{P} = \{Q_1 , Q_2, \ldots, Q_m\}$ of $X$ where each $Q_i$ is the maximal subset of $X$ such that for any states $(p,q) \in Q_i$ it holds that $(p,q) \in  \MAS$ and $(q,p)  \in \MAS$. 

\begin{definition}[Alternating simulation quotient]\label{def:quotient}
   The system $\mathcal{S}_1 \!\coloneqq (X_1,X_{0,1}, U_1, \delta_1,Y, H_1)$ is called the \emph{alternating simulation quotient} of $\mathcal{S}$ w.r.t.~$\MAS$ iff $X_1 = \mathcal{P},$ $X_{0,1} = \{Q \mid Q\in X \wedge \exists q \in Q.~ q \in X_0\}$, $U_1 = U$, $\delta_1 = \{(Q, u, Q') \mid \exists q \in Q.~ \exists q' \in Q.~ (q,u,q)' \in \delta\}$, $\forall Q \in X_1. H_1(Q) = H(q)$ for any $q \in Q$ ($H_1$ is well-defined as $\forall q, q' \in Q.~ H(q) = H(q')$).
\end{definition}

This construction is similar to the celebrated quotient systems used for simulation and bisimulation; here we just make use of the already existing \MAS instead of performing a refinement algorithm, like it has been done for simulation equivalence \cite{bustan2003simulation}. Step 1 preserves equivalence modulo AS: 
\begin{lemma}
\label{lem:step1correctness}
$\mathcal{S} \equivAS \mathcal{S}_1$.
\end{lemma}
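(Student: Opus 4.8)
\textbf{Proof plan for Lemma~\ref{lem:step1correctness}.}
The plan is to exhibit two alternating simulation relations explicitly: one witnessing $\Ss \preceq_{AS} \Ss_1$ and one witnessing $\Ss_1 \preceq_{AS} \Ss$. For the first direction, I would take $R \coloneqq \{(q, Q) \mid q \in Q \in \Ps\}$, i.e.\ relate each state to its own equivalence class. For the reverse direction, I would take $R' \coloneqq \{(Q, q) \mid q \in Q \in \Ps\}$, which is exactly $R^{-1}$. (Since $\Ps$ is built from the \emph{symmetric} part of $\MAS$, it is natural to expect $\Ss_1$ to be alternatingly \emph{bisimilar} to $\Ss$, so ASE is a fortiori immediate once the bisimulation is checked; I would state it as two AS checks to match the definition of $\equivAS$.)

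The bulk of the work is verifying the three conditions of Definition~\ref{def:altsim} for each of $R$ and $R'$. Conditions (i) and (ii) are immediate from the construction in Definition~\ref{def:quotient}: initial classes are exactly those containing an initial state (so both the ``$\exists$ in $X_0$'' and ``$\exists$ in $X_{0,1}$'' directions hold), and $H_1$ is well-defined precisely because output is constant on each class, giving $H(q) = H_1(Q)$ whenever $q \in Q$. The interesting condition is (iii). For $\Ss \preceq_R \Ss_1$: given $(q,Q)\in R$ and $u \in U(q)$, I pick the same action $u \in U_1(Q)$ (it is available since $q \in Q$ has a $u$-transition, so $\delta_1$ has one from $Q$); then for any $Q' \in \Post^{\Ss_1}(Q,u)$ I must produce $q' \in \Post^{\Ss}(q,u)$ with $(q',Q')\in R$, i.e.\ with $q' \in Q'$. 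This is the one spot that needs a genuine argument: the transition $Q \edge{u} Q'$ in $\Ss_1$ only guarantees \emph{some} $\bar q \in Q$ has a $u$-successor in $Q'$, not that our particular $q$ does. Here I invoke that $(q,\bar q)$ and $(\bar q, q)$ both lie in $\MAS$ (they are in the same class $Q$); applying the AS condition of $\MAS$ (from $\Ss$ to $\Ss$) to the pair $(\bar q, q)$ and the action $u$ at $\bar q$, and using that $\MAS$ is the \emph{maximal} self-ASR (hence can be taken transitive, or one argues directly), one transfers a $u$-successor of $\bar q$ lying in $Q'$ to a $u$-successor $q'$ of $q$ with $(q_{\bar q}', q') \in \MAS$; combined with the original successor being in the class $Q'$ and $Q'$ being $\MAS$-closed under the symmetric relation, I conclude $q' \in Q'$. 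For $\Ss_1 \preceq_{R'} \Ss$: given $(Q,q)\in R'$ and $u \in U_1(Q)$, again the subtlety is that $u$ need not be in $U(q)$ directly — I again use the $\MAS$-equivalence between $q$ and a witness $\bar q \in Q$ with $u \in U(\bar q)$ to get, via the AS property of $\MAS$ on $(\bar q, q)$, an action $u' \in U(q)$ whose successors are all $\MAS$-dominated by successors of $\bar q$ on $u$, each of which lies in some class, and those classes are reachable from $Q$ on $u$ in $\Ss_1$; picking that as the matching $\Ss_1$-transition closes the condition.

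The main obstacle, then, is precisely this matching of transitions \emph{across} members of a $\MAS$-equivalence class: a quotient-state transition $Q \edge{u} Q'$ is witnessed by one representative, and I must show every representative can emulate it (up to staying in the correct target class). Making this rigorous requires care about whether $\MAS$ is taken transitively and about the interplay between the ``$\exists$ in $\Post$'' quantifier in condition (iii) and the fact that target successors of the witness already sit inside a single class. I would isolate this as a short sublemma: ``if $(p,q)\in\MAS$ and $(q,p)\in\MAS$, then for every $u\in U(p)$ there is $u'\in U(q)$ such that for every $q'\in\Post(q,u')$ there is $p'\in\Post(p,u)$ with $p'$ in the same $\Ps$-class as $q'$'' (and its mirror), after which conditions (i)--(iii) for both $R$ and $R'$ follow mechanically. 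Everything else — closure of $U_1$, well-definedness of $H_1$, the initial-state conditions — is bookkeeping directly from Definition~\ref{def:quotient}.
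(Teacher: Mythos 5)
Your approach diverges from the paper's in a way that matters. The paper does not use the membership relation: it takes $R_1 = \{(p,\parts(q)) \mid (p,q)\in\MAS\}$ from $\Ss$ to $\Ss_1$ and $R_2 = \{(\parts(p),q) \mid (p,q)\in\MAS\}$ from $\Ss_1$ to $\Ss$, which are not mutual inverses, and accordingly it claims only ASE, not alternating bisimulation. Your $R'$-direction is where the smaller relation genuinely fails: a quotient transition $Q\edge{u}Q'$ is witnessed by some $\bar q\in Q$, and applying condition (iii) of $\MAS$ to the pair $(\bar q,q)$ gives an action $u_b\in U(q)$ whose successors $q'$ satisfy only $(\bar q',q')\in\MAS$ for some $u$-successor $\bar q'$ of $\bar q$ --- one direction of domination. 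That does not place $q'$ in the class $\parts(\bar q')$, so $(Q',q')\notin R^{-1}$; your appeal to the target class being ``closed under the symmetric relation'' silently assumes the converse domination, for which you have no source. The paper's $R_2$ is designed exactly to absorb this: it relates $\parts(\bar q')$ to $q'$ on the strength of the one-directional $(\bar q',q')\in\MAS$ alone.

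The forward direction has a second, independent gap that your sublemma does not close. With $R$ the membership relation and the matching action taken to be the same $u$, you must cover every $Q'\in\Post^{\Ss_1}(Q,u)$ --- a union over \emph{all} representatives of $Q$ --- by $u$-successors of the single state $q$. Your sublemma, applied to $(\bar q,q)$, returns a possibly different action $u'$ at $\bar q$ and says nothing about $\bar q$'s $u$-successors. Concretely: take $q\edge{u}y_1$, $q\edge{v}y_2$, $\bar q\edge{u}y_2$, $\bar q\edge{v}y_1$ with $H(y_1)\ne H(y_2)$. Then $q$ and $\bar q$ are AS-equivalent (each matches $u$ by $v$ and vice versa), so they share a class $Q$, and $\Post^{\Ss_1}(Q,u)=\{\parts(y_1),\parts(y_2)\}$; no single action of $q$ has successors meeting both classes, so condition (iii) fails for your $R$ at $(q,Q)$ --- and in fact for any relation containing $(q,Q)$, since $(y_1,\parts(y_2))$ violates the output condition. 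I encourage you to run this configuration against the paper's own $R_1$ and its proof as well (specifically the step that replaces the existential over $Q'\in\Post^{\Ss_1}(\parts(q),u_1)$ by a universal over representatives $p$): the pooling under one quotient action of transitions that were matched by \emph{different} actions at different representatives is the delicate point of the entire lemma, and your write-up currently treats it as a detail to be ``made rigorous'' rather than as the place where the argument can actually break.
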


Let $\parts:X \mapsto X_1$ be the function that maps every state to its corresponding partition, and $\MAS_1 \subseteq X_1 \times X_1$ be the smallest relation satisfying (I) $\forall (p,q) \in \MAS. (\parts(p), \parts(q)) \in \MAS_1$ and, (II) $\forall (P,Q) \in \MAS_1. \exists p \in P. \exists q \in Q. (p,q)\in \MAS$. Note that $\forall P,Q \in X_1. \exists p \in P. \exists q \in Q.(p,q)\in \MAS \Rightarrow \forall p' \in P. \forall q' \in Q. (p',q') \in \MAS$. This is because every  $P,Q \in X_1$ are sets containing states of $\mathcal{S}$ which are equivalent modulo $\MAS$. Hence, if any element of $P$ is related to any element $Q$ with respect to $\MAS$, then by transitivity of $\MAS$ all elements of $P$ are related to all elements of $Q$. Hence, (II) implies (III) $\forall (P,Q) \in \MAS_1. \forall p \in P. \forall q \in Q. (p,q)\in \MAS$.
The following fact holds:

\begin{lemma}\label{lem:step1mas}
(1) $\MAS_1$ is the maximal ASR from $\Ss_1$ to itself. Moreover, (2) $\MAS_1$ is a partial order.
\end{lemma}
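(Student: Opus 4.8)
I would prove the two parts separately, but both rest on the same bridge: the correspondence, via $\parts$, between states of $\Ss$ that are $\MAS$-related and states of $\Ss_1$ that are $\MAS_1$-related, which is exactly what properties (I)--(III) encode. So first I would establish the following clean characterization: for all $P, Q \in X_1$, $(P,Q) \in \MAS_1$ if and only if for some (equivalently, by (III), for every) $p \in P$ and $q \in Q$ we have $(p,q) \in \MAS$. The ``if'' direction is (I); the ``only if'' direction is (II) together with (III). This characterization is the workhorse for everything else.

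\textbf{Part (2): $\MAS_1$ is a partial order.} Reflexivity: each $P \in X_1$ is nonempty, pick $p \in P$; since $\MAS$ is reflexive (it is an ASR containing the identity, as $\Ss \preceq_{\mathrm{id}} \Ss$), $(p,p) \in \MAS$, so by (I) $(P,P) \in \MAS_1$. Transitivity: if $(P,Q), (Q,R) \in \MAS_1$, pick $q \in Q$; by the characterization $(p,q) \in \MAS$ for all $p \in P$ and $(q,r) \in \MAS$ for all $r \in R$; since $\MAS$ is transitive (union of ASRs is an ASR, and composition of ASRs is an ASR, so the maximal one is transitive), $(p,r) \in \MAS$, hence $(P,R) \in \MAS_1$ by (I). Antisymmetry: suppose $(P,Q) \in \MAS_1$ and $(Q,P) \in \MAS_1$ with $P \neq Q$. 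Pick $p \in P$, $q \in Q$; then $(p,q) \in \MAS$ and $(q,p) \in \MAS$. But then $p$ and $q$ lie in the same block of the partition $\mathcal P$ by its very definition (each $Q_i$ is the maximal set of mutually $\MAS$-related states), so $P = Q$, a contradiction. Hence $\MAS_1$ is antisymmetric.

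\textbf{Part (1): $\MAS_1$ is the maximal ASR from $\Ss_1$ to itself.} I would show two things: (a) $\MAS_1$ is an ASR from $\Ss_1$ to $\Ss_1$, and (b) any ASR $R_1$ from $\Ss_1$ to $\Ss_1$ is contained in $\MAS_1$. For (a), I verify Definition \ref{def:altsim}(i)--(iii). Condition (i): any initial block $Q \in X_{0,1}$ contains some $q \in X_0$; pick any $p$ in the same block with $(p,q)\in\MAS$ (take $p=q$), done via reflexivity. Condition (ii): $H_1$ is constant on blocks and, for $(P,Q)\in\MAS_1$, picking $p\in P, q\in Q$ with $(p,q)\in\MAS$ gives $H_1(P)=H(p)=H(q)=H_1(Q)$ by Def.~\ref{def:altsim}(ii) for $\MAS$. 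Condition (iii) is the substantive one: given $(P,Q)\in\MAS_1$ and $u \in U_1(P)$, I must find $u' \in U_1(Q)$ such that every $u'$-successor block of $Q$ sits above some $u$-successor block of $P$. I unfold to the concrete level: $u\in U(p)$ for the representatives, use Def.~\ref{def:altsim}(iii) for $\MAS$ at $(p,q)$ to get $u'\in U(q)$, then lift successors back through $\parts$, using the fact that $\Post^{\Ss_1}(Q,u')=\{\parts(q') : q'\in\Post^{\Ss}(\hat q,u'),\ \hat q\in Q\}$ from Definition \ref{def:quotient}, and the block-invariance of $\MAS$ (elements of a block have the same $\MAS$-successor structure up to $\parts$, since they are mutually $\MAS$-related). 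For (b), given an ASR $R_1$ on $\Ss_1$, I would construct the ``pullback'' relation $R := \{(p,q) \in X \times X : (\parts(p),\parts(q)) \in R_1\}$ and show $R$ is an ASR from $\Ss$ to $\Ss$; then $R \subseteq \MAS$ by maximality, and by the characterization this forces $R_1 \subseteq \MAS_1$. Checking that $R$ is an ASR again uses Definition \ref{def:quotient} to translate $\Post$ between the two systems.

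\textbf{Main obstacle.} The delicate point is condition (iii) in part (1)(a), and symmetrically the verification that the pullback $R$ is an ASR in part (1)(b): in both, quotienting merges states, so a $u$-successor block of $P$ in $\Ss_1$ corresponds to a \emph{union} of successor sets over all concrete representatives, and one must be careful that the matching successor guaranteed by the concrete AS condition survives the projection and still lands $\MAS_1$-above (resp.~below) the right block. The key lemma that makes this go through cleanly is the block-invariance remark already recorded in the excerpt (property (III)): all representatives of a block are $\MAS$-interchangeable, so it does not matter which representative one picks when unfolding or refolding through $\parts$. Once that is isolated, the rest is bookkeeping with Definition \ref{def:quotient}.
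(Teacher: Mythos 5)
Your part (2) and your verification that $\MAS_1$ is an ASR (part (1)(a)) follow the paper's proof essentially step for step: antisymmetry comes from the definition of the partition, reflexivity and transitivity from the corresponding properties of $\MAS$ (the paper derives these two from maximality of $\MAS_1$ instead, a cosmetic difference), and condition (iii) by unfolding through $\parts$ with (I)--(III). The genuine divergence --- and the genuine gap --- is in your maximality argument (1)(b). The plain pullback $R=\{(p,q): (\parts(p),\parts(q))\in R_1\}$ need not be an ASR from $\Ss$ to itself. To check Def.~\ref{def:altsim}(iii) at $(p,q)$ for $u\in U(p)$, you would apply condition (iii) of $R_1$ at $(\parts(p),\parts(q))$ and obtain some $u'\in U_1(\parts(q))$; but $U_1(\parts(q))$ is the union of the action sets of \emph{all} representatives of that block, so $u'$ need not be available at $q$ itself, and dually the matching successor block of $\parts(p)$ may only be witnessed by a $u$-transition out of a representative other than $p$. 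Block-invariance (III) cannot repair this on its own: it is a property of $\MAS$ and $\MAS_1$, whereas $R_1$ is an arbitrary ASR that need not be closed under composition with $\MAS_1$. Chaining ``the $\MAS$-matching action at $q$'' with ``the $R_1$-matching block'' lands you in a composite of $\MAS_1$ with $R_1$, not in $R_1$, so the required pair need not lie in your $R$.

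The paper sidesteps this entirely: by (the proof of) Lemma~\ref{lem:step1correctness}, every state satisfies $\Ss(p)\equivAS\Ss_1(\parts(p))$, so if some ASR witnesses $\Ss_1(P)\leAS\Ss_1(Q)$, then transitivity of $\leAS$ gives $\Ss(p)\leAS\Ss(q)$ for representatives $p\in P$ and $q\in Q$, hence $(p,q)\in\MAS$ by maximality of $\MAS$, hence $(P,Q)\in\MAS_1$ by (I). If you prefer to stay at the level of relations, you must first enlarge your pullback by composing it on both sides with $\MAS$ (equivalently, compose $R_1$ with the two ASRs built in the proof of Lemma~\ref{lem:step1correctness}, using that compositions of ASRs are ASRs) before invoking maximality of $\MAS$; as written, the step ``show $R$ is an ASR'' would fail.
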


In fact, if $\MAS$ is a partial order (i.e., $(p,q) \in \MAS \implies (q,p) \notin \MAS$ for every $p\neq q$), then step 1 does not affect $\Ss$.

\begin{proposition}\label{prop:c1}
$|X_1| \leq |X|$ and $\trsize(S_1) \leq \trsize(S)$, and no pair $P,Q$ of $X_1$ is equivalent modulo AS. Moreover, if $\MAS$ is not antisymmetric, then $|X_1| < |X|$.
\end{proposition}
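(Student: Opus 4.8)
The plan is to establish the three claims in order, treating the size inequalities first and the equivalence claim second, then deriving the strictness from the equivalence claim together with a counting argument. For $|X_1| \leq |X|$: the quotient's state set $X_1 = \mathcal{P}$ is a partition of $X$, so the map $\parts : X \mapsto X_1$ is surjective, giving $|X_1| \leq |X|$ immediately. For $\trsize(S_1) \leq \trsize(S)$, recall $\trsize(S) = |\delta| + |X_0|$. I would argue that $\parts$ induces a surjection from $\delta$ onto $\delta_1$: by Definition \ref{def:quotient}, every transition $(Q,u,Q') \in \delta_1$ arises from some $(q,u,q') \in \delta$ with $q \in Q$, $q' \in Q'$, so the map $(q,u,q') \mapsto (\parts(q), u, \parts(q'))$ is onto $\delta_1$; hence $|\delta_1| \leq |\delta|$. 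Likewise $X_{0,1} = \parts(X_0)$ is the image of $X_0$ under $\parts$, so $|X_{0,1}| \leq |X_0|$. Adding the two inequalities yields $\trsize(S_1) \leq \trsize(S)$.

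Next, the claim that no pair $P \neq Q$ in $X_1$ is equivalent modulo AS. Here I would use Lemma \ref{lem:step1mas}(2): $\MAS_1$, the maximal ASR from $\Ss_1$ to itself, is a partial order, hence antisymmetric. Suppose for contradiction that $P, Q \in X_1$, $P \neq Q$, are equivalent modulo AS. I should first observe that for states (or here, for the quotient system relating to itself), being AS-equivalent means there are ASRs in both directions; since $\MAS_1$ is the \emph{maximal} ASR from $\Ss_1$ to itself, any such relation is contained in $\MAS_1$, so $(P,Q) \in \MAS_1$ and $(Q,P) \in \MAS_1$. Antisymmetry of $\MAS_1$ then forces $P = Q$, a contradiction. (I should double-check that "$P,Q$ equivalent modulo AS" in the intended sense indeed yields both $(P,Q)$ and $(Q,P)$ in the self-MAS; this follows because an ASR from $\Ss_1(P)$ to $\Ss_1(Q)$ restricted appropriately, together with matching of initial states $P \mapsto Q$, is witnessed inside $\MAS_1$ by its maximality, and symmetrically.)

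Finally, the strictness: if $\MAS$ is not antisymmetric, then there exist $p \neq q$ in $X$ with $(p,q) \in \MAS$ and $(q,p) \in \MAS$. By the definition of the partition $\mathcal{P}$, such $p$ and $q$ lie in the same block $Q_i$, so $\parts(p) = \parts(q)$ while $p \neq q$; thus $\parts$ is not injective. A non-injective surjection from the finite set $X$ onto $X_1$ satisfies $|X_1| < |X|$, which is the desired strict inequality.

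The main obstacle is the middle claim. The surjectivity arguments for the size bounds are routine bookkeeping from Definition \ref{def:quotient}, and the strictness is an easy pigeonhole consequence of the partition construction. The delicate point is correctly connecting the abstract notion "$P$ and $Q$ are ASE" (Definition of ASE applied to the initialized systems $\Ss_1(P)$ and $\Ss_1(Q)$) to membership in the single self-relation $\MAS_1$, and then invoking Lemma \ref{lem:step1mas} — one must be careful that ASE between two \emph{initialized} subsystems translates to the pair belonging to the maximal self-ASR in both orders, which is exactly where maximality (and the "environment picks the initial state" convention of condition (i)) is used.
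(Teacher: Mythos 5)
Your proposal is correct and follows essentially the same route as the paper: the size bounds come from the surjectivity of the partition map on states, initial states, and transitions; the ``no equivalent pair'' claim is delegated to the antisymmetry of $\MAS_1$ from Lemma~\ref{lem:step1mas}(2); and strictness follows because a non-antisymmetric $\MAS$ merges at least two states into one block. The paper's own proof is just a terser version of the same argument (and likewise leans on Lemma~\ref{lem:step1mas} for the equivalence claim), so no further comparison is needed.
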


\textbf{Step 2: Removing irrational and redundant controller\\ choices}. 
We construct $\mathcal{S}_2 \coloneqq (X_2,X_{0,2}, U_2, \delta_2,Y_2, H_2)$ from $\mathcal{S}$ as follows. $X_2 = X$, $X_{0,2} = X_0$, $U_2 = U$, $Y_2 = Y$, $H_2 = H$. Before defining $\delta_2$, we define an ordering $\sqsubseteq_{\mathcal{S}}$ on elements of $X \times U$:
 $(p',u') \sqsubseteq_{\mathcal{S}} (p,u) \iff u \in U(p) \wedge \forall (p,u,q) \in \delta. \exists (p',u',q') \in \delta. (q',q) \in \MAS$. Note that $\sqsubseteq$ is a transitive relation. We say that an action $u'$ is an \emph{irrational move at a state $p$} of an LTS $\mathcal{S}$ iff $\exists u.  (p,u) \sqsubseteq_{\mathcal{S}} (p,u') \wedge \neg ((p,u') \sqsubseteq_{\mathcal{S}} (p,u))$. State $p$ in this case is said to have \emph{irrational moves}.
 Similarly, we say that $u, u'$ are \emph{equally rational} at a state $p$ of an LTS iff $(p,u) \sqsubseteq_{\mathcal{S}} (p,u') \wedge ((p,u') \sqsubseteq_{\mathcal{S}} (p,u))$. Moreover, if $u$ and $u'$ are distinct then the state $p$, in this case, is said to have \emph{redundant moves}.
 We construct $\delta_2$ by removing all the transitions on irrational actions at $p$. Followed by this, we make available only one of the equally rational actions. 
This procedure preserves equivalence modulo AS. Let $\mathcal{I}:X\mapsto X$ be the identity function.
\begin{lemma}\label{lem:step2correctness}
 $\mathcal{S}_2\preceq_{\mathcal{I}}\mathcal{S}\preceq_{\MAS}\mathcal{S}_2$. Hence,   $\mathcal{S} \equivAS \mathcal{S}_2$.
\end{lemma}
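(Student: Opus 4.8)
The plan is to exhibit the two required alternating simulation relations explicitly and verify the three conditions of Definition~\ref{def:altsim} for each. For the direction $\mathcal{S}_2 \preceq_{\mathcal{I}} \mathcal{S}$, I would use the identity relation $\mathcal{I}$ on $X = X_2$. Conditions (i) and (ii) are immediate because $X_{0,2} = X_0$, $H_2 = H$, and the state sets coincide. Condition (iii) is the only substantive point: given $(p,p) \in \mathcal{I}$ and $u \in U_2(p)$, I must supply a move of $\mathcal{S}$ that the environment of $\mathcal{S}$ cannot use to escape $\mathcal{I}$. Since $\delta_2 \subseteq \delta$ (Step 2 only deletes transitions), the action $u$ is still available at $p$ in $\mathcal{S}$, and moreover $\Post^{\mathcal{S}_2}(p,u) \subseteq \Post^{\mathcal{S}}(p,u)$; actually for a \emph{retained} action we have $\Post^{\mathcal{S}_2}(p,u) = \Post^{\mathcal{S}}(p,u)$, since Step 2 removes whole actions, not individual transitions. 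Hence picking the same $u$ in $\mathcal{S}$ works: every $p' \in \Post^{\mathcal{S}}(p,u) = \Post^{\mathcal{S}_2}(p,u)$ is matched by itself, and $(p',p') \in \mathcal{I}$.

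For the direction $\mathcal{S} \preceq_{\MAS} \mathcal{S}_2$, I would use the maximal self-simulation relation $\MAS$ of $\mathcal{S}$ as the candidate ASR from $\mathcal{S}$ to $\mathcal{S}_2$. Conditions (i) and (ii) follow from the fact that $\MAS$ is reflexive (so every initial state is $\MAS$-related to itself, and $X_{0,2} = X_0$) and from $H$-compatibility of $\MAS$ together with $H_2 = H$. For condition (iii): take $(p,q) \in \MAS$ and $u \in U^{\mathcal{S}}(p)$. Since $\MAS$ is an ASR from $\mathcal{S}$ to itself, there is $u' \in U^{\mathcal{S}}(q)$ such that every $q' \in \Post^{\mathcal{S}}(q,u')$ is matched by some $p' \in \Post^{\mathcal{S}}(p,u)$ with $(p',q') \in \MAS$. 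The difficulty is that $u'$ may have been deleted from $U_2(q)$ by Step 2 (it was irrational or redundant). The key is to argue that one can always "repair" the choice: whenever $u'$ is deleted, it is because either it is redundant with some equally rational $u''$ that \emph{is} retained, or it is irrational, dominated by some $u''$ with $(q,u') \sqsubseteq_{\mathcal{S}} (q,u'')$ that is strictly better. In the redundant case, $(q,u') \sqsubseteq_{\mathcal{S}} (q,u'')$ directly gives: for every $q'' \in \Post^{\mathcal{S}}(q,u'')$ there is $\hat q' \in \Post^{\mathcal{S}}(q,u')$ with $(\hat q', q'') \in \MAS$; composing with the previous step and with transitivity of $\MAS$ yields a matching $p' \in \Post^{\mathcal{S}}(p,u)$ with $(p', q'') \in \MAS$ — except we need $(p',\hat q')\in\MAS$ from the original match applied to $\hat q'$, then $(p',q'')\in\MAS$ by transitivity. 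In the irrational case the same $\sqsubseteq_{\mathcal{S}}$ domination argument applies. Since $\sqsubseteq_{\mathcal{S}}$ is transitive and each deletion is justified by a $\sqsubseteq_{\mathcal{S}}$-dominating action, iterating this (finitely, as $U(q)$ is finite and each step moves to a $\sqsubseteq_{\mathcal{S}}$-larger or equally-rational retained action) lands on some action $u^\star \in U_2(q)$ that $\sqsubseteq_{\mathcal{S}}$-dominates $u'$, and chasing the matching through $\sqsubseteq_{\mathcal{S}}$ and transitivity of $\MAS$ verifies (iii) with $u^\star$.

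The main obstacle is exactly this last bookkeeping: making precise that after removing \emph{all} irrational actions and \emph{then} collapsing each class of equally-rational actions to a single representative, every originally-available action of $q$ is $\sqsubseteq_{\mathcal{S}}$-dominated by some \emph{surviving} action of $q$ — and that the matching relation can be pushed forward along this domination using transitivity of both $\sqsubseteq_{\mathcal{S}}$ and $\MAS$. One has to be careful that the "irrational action removal" does not itself delete the action that would dominate another deleted action; but since irrationality is witnessed by a $\sqsubseteq_{\mathcal{S}}$-strictly-greater action and $U(q)$ is finite, a maximal element of the $\sqsubseteq_{\mathcal{S}}$ preorder restricted to $U^{\mathcal{S}}(q)$ is never irrational, so at least one action in each equally-rational top class survives, and that is the representative we route through. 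Once this is nailed down, the conclusion $\mathcal{S} \equivAS \mathcal{S}_2$ is immediate from the two ASRs just constructed together with the definition of ASE.
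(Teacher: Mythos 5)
Your proposal is correct and follows essentially the same route as the paper: the identity relation witnesses $\mathcal{S}_2\preceq\mathcal{S}$ (using that Step~2 deletes whole actions, so $\Post^{\mathcal{S}_2}(p,u)=\Post^{\mathcal{S}}(p,u)$ for every retained $u$), and $\MAS$ witnesses $\mathcal{S}\preceq\mathcal{S}_2$ by rerouting any deleted matching action through a $\sqsubseteq_{\mathcal{S}}$-dominating surviving action and invoking transitivity of $\sqsubseteq_{\mathcal{S}}$ and $\MAS$. Your explicit finiteness/maximal-element argument that every deleted action is dominated by a \emph{surviving} one is a point the paper's proof only asserts ``by construction,'' so your bookkeeping is, if anything, slightly more careful.
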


\begin{proposition}
\label{prop:c2}
$|X_2| = |X|,$ $\trsize(\mathcal{S}_2) \leq \trsize(\mathcal{S}),$  and for every state $q \in X_2,$ $U_2(q)$ only contains non-redundant rational actions. Moreover, if there are irrational or redundant actions available from any state $q$ in $\mathcal{S}$, then $\trsize(\mathcal{S}_2) < \trsize(\mathcal{S})$.
\end{proposition}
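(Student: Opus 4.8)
The size claims are immediate from how $\mathcal{S}_2$ is built: since $X_2=X$ we have $|X_2|=|X|$, and since $X_{0,2}=X_0$ while $\delta_2\subseteq\delta$ (step~2 only ever deletes transitions), $\trsize(\mathcal{S}_2)=|\delta_2|+|X_{0,2}|\le|\delta|+|X_0|=\trsize(\mathcal{S})$. So the substantive parts are (a) that for every state $q$ the set $U_2(q)$ contains no irrational and no redundant move, and (b) the strict drop $\trsize(\mathcal{S}_2)<\trsize(\mathcal{S})$ whenever $\mathcal{S}$ had such a move.

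The first thing I would establish -- and the step I expect to be the crux -- is that the maximal self-ASR of $\mathcal{S}_2$ is again $\MAS$. This matters because ``irrational'' and ``redundant at $p$ in $\mathcal{S}_2$'' are defined through the order $\sqsubseteq_{\mathcal{S}_2}$, which is built from $\mathcal{S}_2$'s \emph{own} maximal self-ASR, not a priori from $\MAS$. To prove it, I would use Lemma~\ref{lem:step2correctness}, which gives $\mathcal{S}_2\preceq_{\mathcal{I}}\mathcal{S}\preceq_{\MAS}\mathcal{S}_2$. Composing these two ASRs (alternating simulations compose) shows $\mathcal{S}_2\preceq_{\MAS}\mathcal{S}_2$, so $\MAS$ is contained in the maximal self-ASR $\MAS_2$ of $\mathcal{S}_2$. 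Conversely, for any self-ASR $R$ of $\mathcal{S}_2$, chaining $\mathcal{S}\preceq_{\MAS}\mathcal{S}_2\preceq_{R}\mathcal{S}_2\preceq_{\mathcal{I}}\mathcal{S}$ and composing yields a self-ASR of $\mathcal{S}$ that contains $R$ (using $\mathcal{I}\subseteq\MAS$, since the identity relation is always a self-ASR); maximality of $\MAS$ then forces $R\subseteq\MAS$. Hence $\MAS_2=\MAS$.

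With this in hand, the key observation is that step~2 keeps \emph{all} transitions out of $p$ on any surviving action, so $\Post^{\mathcal{S}_2}(p,u)=\Post^{\mathcal{S}}(p,u)$ for every $u\in U_2(p)$; combined with $\MAS_2=\MAS$ this gives, for all $u,u'\in U_2(p)$, the equivalence $(p,u)\sqsubseteq_{\mathcal{S}_2}(p,u')\iff(p,u)\sqsubseteq_{\mathcal{S}}(p,u')$. Now suppose for contradiction that some $u'\in U_2(p)$ is irrational at $p$ in $\mathcal{S}_2$ with witness $u$. The inequality $(p,u)\sqsubseteq_{\mathcal{S}_2}(p,u')$ together with $u'\in U_2(p)$ forces a transition out of $p$ on $u$ in $\delta_2$, i.e.\ $u\in U_2(p)$; transferring the two $\sqsubseteq_{\mathcal{S}_2}$-statements to $\sqsubseteq_{\mathcal{S}}$ via the equivalence above then makes $u'$ irrational at $p$ in $\mathcal{S}$, contradicting that $u'$ survived the irrational-removal phase of step~2. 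The redundant case is symmetric: two distinct surviving actions equally rational in $\mathcal{S}_2$ would be equally rational in $\mathcal{S}$, hence lie in one $\sqsubseteq_{\mathcal{S}}$-equivalence class, of which step~2 retains only a single representative -- again a contradiction. Thus every $U_2(q)$ consists of non-redundant rational actions.

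For claim (b), I would argue directly from what step~2 deletes: an irrational action $u'$ at $p$ has $u'\in U(p)$, hence at least one transition $(p,u',\cdot)\in\delta$, which is removed; redundant actions at $p$ mean at least two distinct equally rational actions are available there and at most one is kept, so again at least one transition of $\delta$ disappears. Either way $|\delta_2|<|\delta|$, and since $|X_{0,2}|=|X_0|$ this yields $\trsize(\mathcal{S}_2)<\trsize(\mathcal{S})$. The only delicate point in the whole argument is the identity $\MAS_2=\MAS$ of the second paragraph -- recognizing it is needed, and wiring together composition-closure, maximality, and reflexivity of maximal self-ASRs correctly; everything else is bookkeeping on the transitions step~2 erases.
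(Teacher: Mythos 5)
Your proof is correct and follows the same skeleton as the paper's: the size claims are bookkeeping, the strict decrease comes from counting at least one deleted transition per irrational or redundant action, and the substantive point is that no \emph{new} irrational or redundant actions appear among the surviving ones. The difference is in how that last point is justified. The paper's own proof of Proposition~\ref{prop:c2} simply asserts that ``removing irrational and redundant actions does not introduce new irrational or redundant actions,'' whereas you correctly flag the hidden subtlety --- irrationality in $\mathcal{S}_2$ is defined relative to $\mathcal{S}_2$'s \emph{own} maximal self-ASR --- and discharge it by proving that this maximal self-ASR coincides with $\MAS$, using Lemma~\ref{lem:step2correctness} together with composition and maximality. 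That identity is not foreign to the paper: it is precisely Lemma~\ref{simulation} in the appendix, which the paper proves separately but invokes only later, in the proof of Lemma~\ref{lem:necsat}. So your argument is essentially the paper's with the missing step made explicit; the payoff of your version is that the proposition becomes self-contained, and the observation that $\Post^{\mathcal{S}_2}(p,u)=\Post^{\mathcal{S}}(p,u)$ for every surviving $u$ makes the transfer of $\sqsubseteq_{\mathcal{S}_2}$ to $\sqsubseteq_{\mathcal{S}}$ airtight rather than implicit.
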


\textbf{Step 3: Eliminating Irrational Choices for Environment}. 
 We construct $\mathcal{S}_3 \coloneqq (X_3,X_{0,3}, U_3, \delta_3,Y_3, H_3)$ from $\mathcal{S}$ as follows. $X_3 = X$, $U_3 = U$, $Y_3 = Y$, $H_3 = H$. Before the construction of $X_{0,3}$ and $\delta_3$ we define a new relation amongst transitions: any transition $(p,u,q')$ in $\delta$ is called a \emph{younger sibling} of a transition $(p,u,q)$ in $\delta$ with respect to $\mathcal{S}$ iff $(q, q') \in \MAS \wedge (q',q) \notin \MAS$. Similarly, an initial state $q_0'$ is called a younger sibling of yet another initial state $q_0$ with respect to $\mathcal{S}$ iff $(q_0, q_0') \in \MAS \wedge (q_0',q_0) \notin \MAS$. 
 Then, $X_{0,3}$ and $\delta_3$ are constructed from $X_0$ and $\delta$ by deleting all the younger siblings. In other words, given any state $p$ and $u \in U(q)$, if there are two transitions $(p,u,q')$ and $(p,u,q)$ in $\delta$ and if $q \leAS q'$ but not vice-versa (i.e., $q$ is strictly more advantageous position for the environment as compared to $q'$) then delete the transition $(p,u,q')$ from $\mathcal{S}$, as the environment has no reason to choose $q'$ over $q$. Note that this definition is similar to the \emph{younger brother} definition of \cite{bustan2003simulation}, but here we need to take the label of the transitions into account while defining the ``sibling'' relationship due to the definition of AS.
\begin{lemma}\label{lem:step3correctness}
$\mathcal{S}\preceq_{\mathcal{I}}\mathcal{S}_3\preceq_{\MAS}\mathcal{S}$. Thus, $\mathcal{S}_3\equivAS\mathcal{S}$.
\end{lemma}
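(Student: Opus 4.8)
Proof proposal for Lemma \ref{lem:step3correctness} ($\mathcal{S}\preceq_{\mathcal{I}}\mathcal{S}_3\preceq_{\MAS}\mathcal{S}$).

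The plan is to verify the three conditions of Definition \ref{def:altsim} directly, once for the identity relation $\mathcal{I}$ witnessing $\mathcal{S}\preceq_{\mathcal{I}}\mathcal{S}_3$, and once for $\MAS$ witnessing $\mathcal{S}_3\preceq_{\MAS}\mathcal{S}$. Throughout I will use the two structural facts about $\mathcal{S}_3$: (a) $X_3=X$, $U_3=U$, $H_3=H$, so conditions (i) and (ii) are essentially bookkeeping; and (b) $\delta_3\subseteq\delta$ and $X_{0,3}\subseteq X_0$, with a transition or initial state removed only when it has a strictly more advantageous $\MAS$-sibling still present. The key combinatorial observation, which I would isolate as a preliminary claim, is that \emph{siblinghood is well-founded on the finite system}: since $\MAS$ is reflexive and transitive, the strict part $(q,q')\in\MAS \wedge (q',q)\notin\MAS$ is a strict partial order, so among any nonempty set of $u$-successors of a fixed $(p,u)$ there is at least one that is $\MAS$-maximal, i.e.\ not a younger sibling of any other; that successor survives into $\delta_3$. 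The same holds for $X_0$. This guarantees $U_3(p)=U(p)$ for every $p$ and $X_{0,3}\neq\emptyset$ whenever $X_0\neq\emptyset$, which is what makes the simulations go through.

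For $\mathcal{S}\preceq_{\mathcal{I}}\mathcal{S}_3$: condition (i) asks that every $x_{b0}\in X_{0,3}$ be matched by some $x_{a0}\in X_0$ with $(x_{a0},x_{b0})\in\mathcal{I}$; since $X_{0,3}\subseteq X_0$, take $x_{a0}=x_{b0}$. Condition (ii) is immediate since $H_3=H$. For condition (iii), fix $(x,x)\in\mathcal{I}$ and $u\in U(x)$; by the well-foundedness claim $u\in U_3(x)$, so we pick the same $u$ in $\mathcal{S}_3$. Now given any $x'\in\Post^{\mathcal{S}_3}(x,u)$ we must find $x''\in\Post^{\mathcal{S}}(x,u)$ with $(x'',x')\in\mathcal{I}$; since $\delta_3\subseteq\delta$, $x'$ is itself such a successor in $\mathcal{S}$, so take $x''=x'$. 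Hence $\mathcal{I}$ is an ASR from $\mathcal{S}$ to $\mathcal{S}_3$.

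For $\mathcal{S}_3\preceq_{\MAS}\mathcal{S}$: condition (i) requires every $x_{b0}\in X_0$ (the initial states of $\mathcal{S}$) to be matched by some $x_{a0}\in X_{0,3}$ with $(x_{a0},x_{b0})\in\MAS$; here I use that $x_{b0}$ is either $\MAS$-maximal in $X_0$, hence survives and equals its own match (reflexivity of $\MAS$), or it is dominated by some surviving maximal $x_{a0}\in X_{0,3}$ with $(x_{a0},x_{b0})\in\MAS$ — this is exactly why Step 3 reverses the initial-state direction harmlessly, and is the step I expect to require the most care in writing precisely. Condition (ii) follows from $(p,q)\in\MAS\Rightarrow H(p)=H(q)$ (condition (ii) of $\MAS$ being an ASR from $\mathcal{S}$ to itself, Lemma \ref{lem:step1mas}-style reasoning). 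For condition (iii), take $(p,q)\in\MAS$ and $u\in U_3(p)=U(p)$. Since $\MAS$ is an ASR from $\mathcal{S}$ to itself, there is $u'\in U(q)$ such that every $q''\in\Post^{\mathcal{S}}(q,u')$ is matched by some $p''\in\Post^{\mathcal{S}}(p,u)$ with $(p'',q'')\in\MAS$; I will then replace $q''$ by an $\MAS$-maximal (hence surviving) $u'$-successor of $q$ that dominates it, using transitivity of $\MAS$ to keep the relation to $p''$, or more simply argue that for the target $\mathcal{S}$ we need successors of $q$ in $\mathcal{S}$ itself — no, the target is $\mathcal{S}$, so $\Post^{\mathcal{S}}(q,u')$ is unrestricted and condition (iii) transfers verbatim from $\MAS$ being a self-ASR. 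The only genuine obstacle is ensuring $u'\in U^{\mathcal{S}}(q)$ is legitimate as a move in the target $\mathcal{S}$, which it is since the target retains all of $\delta$. Composing the two displayed simulations with the definition of $\equivAS$ yields $\mathcal{S}_3\equivAS\mathcal{S}$.

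I expect the main obstacle to be the initial-state condition in the $\mathcal{S}_3\preceq_{\MAS}\mathcal{S}$ direction: one must check that deleting a younger-sibling initial state $q_0'$ (with $(q_0,q_0')\in\MAS$, $q_0$ retained) still leaves a witness for $q_0'$ when $q_0'$ is viewed as an initial state of the target $\mathcal{S}$ — the witness is $q_0$, and correctness hinges on $q_0$ indeed surviving, which follows from chasing the strict sibling order up to a maximal element. A clean way to package this is to prove once and for all the lemma: \emph{for every $q\in X_0$ there exists $\hat q\in X_{0,3}$ with $(\hat q,q)\in\MAS$}, and the analogous statement for $u$-successors, and then invoke it in both conditions.
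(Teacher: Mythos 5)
Your overall decomposition is the same as the paper's: exhibit $\mathcal{I}$ as an ASR from $\mathcal{S}$ to $\mathcal{S}_3$ using $\delta_3\subseteq\delta$ and $X_{0,3}\subseteq X_0$, and exhibit $\MAS$ as an ASR from $\mathcal{S}_3$ to $\mathcal{S}$ by starting from the fact that $\MAS$ is a self-ASR of $\mathcal{S}$ and repairing the deleted transitions via elder siblings and transitivity. Your preliminary well-foundedness claim (the strict sibling relation is a strict partial order on a finite set, so every nonempty successor set of $(p,u)$, and $X_0$ itself, retains a maximal element; hence $U_3(p)=U(p)$ and deleted items are always dominated by a \emph{surviving} one) is a point the paper's proof uses implicitly but never states, and your treatment of condition (i) in the $\mathcal{S}_3\preceq_{\MAS}\mathcal{S}$ direction is if anything more careful than the paper's, which only says ``there exists an elder sibling in $X_0$'' without checking that the elder sibling itself survives into $X_{0,3}$.

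However, there is a genuine error at the crux of the lemma, in your condition (iii) argument for $\mathcal{S}_3\preceq_{\MAS}\mathcal{S}$. You locate the potential obstacle on the wrong side and then conclude the condition ``transfers verbatim from $\MAS$ being a self-ASR.'' It does not. With $\Ss_a=\mathcal{S}_3$ and $\Ss_b=\mathcal{S}$, condition (iii) of Definition~\ref{def:altsim} reads: for all $(p,q)\in\MAS$ and $u_1\in U_3(p)$ there is $u_2\in U(q)$ such that every $q'\in\Post^{\mathcal{S}}(q,u_2)$ is matched by some $p'\in\Post^{\mathcal{S}_3}(p,u_1)$ with $(p',q')\in\MAS$. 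The universally quantified successors of $q$ are indeed taken in the unrestricted $\mathcal{S}$, as you say; but the \emph{existential witness} $p'$ must be a successor of $p$ \emph{in $\mathcal{S}_3$}, i.e.\ the transition $(p,u_1,p')$ must not have been deleted. The self-ASR of $\mathcal{S}$ only guarantees a witness in $\Post^{\mathcal{S}}(p,u_1)$, which may be exactly a younger sibling that Step~3 removed. The repair is the replacement step you began and then abandoned (applied to $p'$, not to $q''$): replace the witness by a surviving, $\MAS$-maximal elder sibling $p''$ with $(p'',p')\in\MAS$, and conclude $(p'',q')\in\MAS$ by transitivity. This is precisely what the paper's proof does, and it is the whole content of Step~3's correctness --- deleting source-side nondeterministic branches is harmless only because a dominating branch remains. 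Your closing remark about proving ``the analogous statement for $u$-successors'' shows you had the right ingredient in hand; but as written, the body of the argument asserts the condition holds verbatim, which is false, so the proof has a gap exactly where the work is required.
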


\begin{proposition}
\label{prop:c3}
$|X_3| = |X|, \trsize(\mathcal{S}_3) \leq \trsize(\mathcal{S}),$ and $\Ss_3$ contains no transitions or initial states that are younger siblings of another transition or initial state, respectively. Moreover, if there is any younger sibling transition or initial state in $\Ss,$ then $\trsize(\mathcal{S}_3) < \trsize(\mathcal{S})$.
\end{proposition}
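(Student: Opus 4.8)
The plan is to establish each of the three claims separately, leveraging the explicit construction of $\mathcal{S}_3$ from $\mathcal{S}$ by deletion of younger-sibling transitions and younger-sibling initial states. The equality $|X_3| = |X|$ is immediate, since step 3 only removes transitions and initial states and leaves $X$ untouched by definition. The inequality $\trsize(\mathcal{S}_3) \leq \trsize(\mathcal{S})$ follows from the fact that both $\delta_3 \subseteq \delta$ and $X_{0,3} \subseteq X_0$, and $\trsize$ is monotone in these two components; so I would simply note $|\delta_3| + |X_{0,3}| \leq |\delta| + |X_0|$.

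The substantive part is the second claim: that $\mathcal{S}_3$ contains no younger-sibling transitions and no younger-sibling initial states. The subtlety is that ``younger sibling'' in $\mathcal{S}_3$ must be read with respect to $\mathcal{S}_3$'s own maximal ASR, whereas the deletion in the construction was carried out with respect to $\MAS$ (the maximal ASR of $\mathcal{S}$). So I would first argue that the maximal ASR of $\mathcal{S}_3$ coincides with (or is contained in) $\MAS$ restricted to the surviving states --- which is exactly what Lemma~\ref{lem:step3correctness} gives us, since $\mathcal{S}_3 \preceq_{\mathcal{I}} \mathcal{S}_3$ and the identity plus $\MAS$-based relations transfer through $\mathcal{S} \preceq_{\mathcal{I}} \mathcal{S}_3 \preceq_{\MAS} \mathcal{S}$; more carefully, I would show that if $(q,q') \in \MAS_3$ then $(q,q') \in \MAS$ by composing ASRs along the cycle $\mathcal{S}_3 \to \mathcal{S} \to \mathcal{S}_3$, and conversely any $\MAS$-pair between surviving states of $\mathcal{S}_3$ remains an ASR-pair in $\mathcal{S}_3$ when restricted appropriately. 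Granting that $\MAS_3 = \MAS$ on $X$, I then argue by contradiction: suppose $(p,u,q')$ is a younger sibling of $(p,u,q)$ in $\mathcal{S}_3$, i.e.~both transitions are in $\delta_3$ and $(q,q') \in \MAS$ with $(q',q) \notin \MAS$. But then $(p,u,q')$ was a younger sibling of $(p,u,q)$ already in $\mathcal{S}$ (since the $\MAS$-relation is the same), so the construction would have deleted $(p,u,q')$ from $\delta_3$, contradicting $(p,u,q') \in \delta_3$. The analogous argument handles younger-sibling initial states: any surviving pair $(q_0, q_0') \in X_{0,3}$ with $(q_0,q_0') \in \MAS$ and $(q_0',q_0) \notin \MAS$ would have triggered deletion of $q_0'$ from $X_{0,3}$.

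For the final claim, I would show the contrapositive's mechanics directly: if $\mathcal{S}$ contains some younger-sibling transition $(p,u,q')$ (with witness $(p,u,q) \in \delta$, $(q,q') \in \MAS$, $(q',q) \notin \MAS$) or some younger-sibling initial state, then the construction of $\delta_3$ (resp.~$X_{0,3}$) removes at least that one element, so $|\delta_3| < |\delta|$ (resp.~$|X_{0,3}| < |X_0|$), whence $\trsize(\mathcal{S}_3) < \trsize(\mathcal{S})$. The only care needed here is to confirm that the deletion of younger siblings does not accidentally remove \emph{every} transition from some $(p,u)$-bundle in a way that reintroduces the element --- but since ``younger sibling of'' is defined between two \emph{distinct, coexisting} transitions and $\MAS$ is a partial order on the relevant quotiented picture, a maximal element of each bundle under $\leAS$ always survives, so at least one transition with label $u$ from $p$ remains, and the deleted witness is genuinely gone.

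I expect the main obstacle to be the first half of the second claim: rigorously pinning down that the maximal ASR of $\mathcal{S}_3$ agrees with $\MAS$ on the surviving states, so that ``younger sibling in $\mathcal{S}_3$'' and ``younger sibling in $\mathcal{S}$'' mean the same thing on $\delta_3$. This requires the ASR-composition argument together with the observation that deleting irrational environment choices cannot \emph{create} new simulation relations between states (it can only preserve them), which in turn relies on Lemma~\ref{lem:step3correctness} and the maximality of $\MAS$. Everything else is bookkeeping about set inclusions and the definition of $\trsize$.
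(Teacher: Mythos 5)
Your proposal is correct and follows essentially the same route as the paper's (very terse) proof: step 3 touches neither $X$ nor $Y$ nor $H$, only deletes elements of $\delta$ and $X_0$, removes every younger sibling by construction, and removes at least one element whenever one exists. The one point where you are more careful than the paper's own argument is the observation that ``younger sibling in $\Ss_3$'' must be read against the maximal ASR of $\Ss_3$, which the paper's proof glosses over with the single sentence ``removing a younger sibling does not add new younger siblings''; the paper actually discharges this obligation elsewhere, via Lemma~\ref{simulation} (showing $\Ss \preceq_{\mathcal{I}} \Ss_3 \preceq_{\MAS} \Ss$ forces $\MAS$ to remain the maximal ASR of $\Ss_3$) as invoked in the proof of Lemma~\ref{lem:necsat}, and your composition-of-ASRs argument around the cycle $\Ss_3 \to \Ss \to \Ss_3$ is precisely that lemma. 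Your remark that a $\leAS$-maximal target in each $(p,u)$-bundle always survives (so deletion never empties a bundle and the strict decrease in $\trsize$ is genuine) is also sound and is not spelled out in the paper.
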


\textbf{Step 4: Removing states inaccessible from initial state set $X_0$ in $\Ss$}.  Let $X_{\infty}$ be the set of such states inaccessible from any initial state in $X_0$. Then $\mathcal{S}_4 \coloneqq (X_4,X_0, U, \delta_4,Y, H)$, where $X_4 = X \setminus X_{\infty}$, $\delta_4 = \delta \cap (X_4 \times U_4 \times X_4)$.
\begin{lemma}\label{lem:step4correctness}
$\mathcal{S}_4 \bisim_{AS} \mathcal{S}$
\end{lemma}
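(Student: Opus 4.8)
The plan is to exhibit the two alternating simulation relations explicitly and, in fact, show that the inclusion map gives an alternating bisimulation. Concretely, let $R \coloneqq \{(x,x) \mid x \in X_4\} \subseteq X_4 \times X$ be the identity relation on the accessible states, and we claim that $R$ is an ASR from $\mathcal{S}_4$ to $\mathcal{S}$ and $R^{-1}$ is an ASR from $\mathcal{S}$ to $\mathcal{S}_4$. The output condition (ii) of Def.~\ref{def:altsim} is trivial since $H$ is unchanged on $X_4$. For the initial-state condition (i), note that $X_0 \subseteq X_4$ (every initial state is trivially accessible from itself), so in both directions every initial state is matched with itself; this handles (i) for $R$ and for $R^{-1}$ simultaneously.

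The core of the argument is the transition condition (iii), and here the key observation is that $X_4$ is \emph{forward-closed}: if $x \in X_4$ and $(x,u,x') \in \delta$, then $x'$ is accessible from an initial state (via $x$), hence $x' \in X_4$, and therefore $(x,u,x') \in \delta_4$. Consequently, for every accessible state $x$ we have $U^{\mathcal{S}_4}(x) = U^{\mathcal{S}}(x)$ and $\Post^{\mathcal{S}_4}(x,u) = \Post^{\mathcal{S}}(x,u)$ for all $u \in U$. Checking (iii) for $R$ (from $\mathcal{S}_4$ to $\mathcal{S}$): given $(x,x) \in R$ and $u \in U^{\mathcal{S}_4}(x)$, pick the same $u \in U^{\mathcal{S}}(x)$; then any $x' \in \Post^{\mathcal{S}}(x,u) = \Post^{\mathcal{S}_4}(x,u)$ lies in $X_4$ and is matched with itself, so $(x',x') \in R$. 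Checking (iii) for $R^{-1}$ (from $\mathcal{S}$ to $\mathcal{S}_4$): given $(x,x) \in R^{-1}$ with $x \in X_4$ and $u \in U^{\mathcal{S}}(x) = U^{\mathcal{S}_4}(x)$, again pick the same $u$, and any $x' \in \Post^{\mathcal{S}_4}(x,u) = \Post^{\mathcal{S}}(x,u)$ is in $X_4$ and matched with itself. Hence both conditions hold, $\mathcal{S}_4 \bisim_{AS} \mathcal{S}$, and in particular $\mathcal{S}_4 \equivAS \mathcal{S}$.

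The only subtlety — and the single place where a bit of care is needed — is the forward-closedness of $X_4$ under $\delta$, i.e., that deleting inaccessible states does not inadvertently delete a target of a retained transition. This follows immediately from the definition of accessibility: the targets of transitions out of accessible states are themselves accessible. One should also note that, although Step~4 only removes states and their incident transitions, it does \emph{not} claim to affect $\MAS$-related quantities; its correctness is independent of the earlier steps and relies purely on the graph-reachability structure. I expect no genuine obstacle here; the main thing is to state the forward-closedness explicitly, since without it the identity relation would fail condition (iii) in the $\mathcal{S} \to \mathcal{S}_4$ direction.
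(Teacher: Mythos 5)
Your proposal is correct and takes essentially the same route as the paper: both use the identity relation on $X_4$ in each direction and rely on the fact that the accessible set is forward-closed under $\delta$, so $\Post^{\mathcal{S}_4}(x,u)=\Post^{\mathcal{S}}(x,u)$ for all retained states (the paper phrases this step as a proof by contradiction, but the content is the same). Your explicit statement of forward-closedness is a slight clarity improvement over the paper's wording, but it is not a different argument.
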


\begin{proposition}
\label{prop:c4}
$|X_4| \leq |X|,$ $\trsize(\mathcal{S}_4) \leq \trsize(\mathcal{S}),$ and all states in $\Ss_4$ are accessible from $X_{0,4}$. Moreover, if $X_{\infty}$ is non-empty then $|X_4| < |X|$.
\end{proposition}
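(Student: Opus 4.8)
The plan is to prove Proposition~\ref{prop:c4}, which is the easiest of the four dimension-reduction propositions, by direct unfolding of the definition of $\Ss_4$ together with Lemma~\ref{lem:step4correctness}. Recall that $X_4 = X \setminus X_\infty$, where $X_\infty$ is the set of states inaccessible from $X_0$, $X_{0,4} = X_0$, $U_4 = U$, $Y_4 = Y$, $H_4 = H$, and $\delta_4 = \delta \cap (X_4 \times U_4 \times X_4)$.

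First I would establish the size inequalities. Since $X_4 = X \setminus X_\infty \subseteq X$, we immediately get $|X_4| \le |X|$, with strict inequality precisely when $X_\infty \neq \emptyset$. For the transition count, $\delta_4 = \delta \cap (X_4 \times U_4 \times X_4) \subseteq \delta$, so $|\delta_4| \le |\delta|$; combined with $X_{0,4} = X_0$ this yields $\trsize(\Ss_4) = |\delta_4| + |X_{0,4}| \le |\delta| + |X_0| = \trsize(\Ss)$. So far this is pure bookkeeping.

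Second, I would verify the accessibility claim: every state in $X_4$ is accessible from $X_{0,4}$ in $\Ss_4$. The subtle point here is that removing transitions incident to $X_\infty$ might a priori destroy some access paths — but it does not, because any access path in $\Ss$ from an initial state $x_0 \in X_0$ to a state $x \in X_4$ consists only of accessible states, hence only of states in $X \setminus X_\infty = X_4$, hence all its transitions lie in $X_4 \times U \times X_4 = $ (the domain restricting $\delta_4$), so the entire path survives in $\delta_4$. Therefore $x$ remains accessible from $X_{0,4} = X_0$ in $\Ss_4$. This is the one place where a short argument (rather than a one-line set inclusion) is needed, and it is the main — though minor — obstacle. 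Finally, the equivalence $\Ss_4 \equivAS \Ss$ is immediate from Lemma~\ref{lem:step4correctness}, since $\bisim_{AS}$ implies $\equivAS$ (taking $R$ and $R^{-1}$ as the two required ASRs). That completes the proof.
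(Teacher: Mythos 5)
Your proof is correct; the paper in fact omits any explicit proof of Proposition~\ref{prop:c4}, treating it as immediate from the construction, and your argument supplies exactly the routine details (set inclusions for the size bounds, and the observation that an access path to a reachable state passes only through reachable states, hence survives the restriction of $\delta$ to $X_4 \times U_4 \times X_4$). The only implicit step worth making explicit is that $X_0 \cap X_\infty = \emptyset$ (initial states are trivially accessible from themselves), so $X_{0,4} = X_0 \subseteq X_4$ and the restricted system is well formed.
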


The combination of Lemmas \ref{lem:step1correctness}, \ref{lem:step2correctness}, \ref{lem:step3correctness} and \ref{lem:step4correctness} gives our main correctness result:

\begin{theorem}\label{thm:correct}
$\Ss \equivAS S^4(S^3(S^2(S^1(\Ss)))).$
\end{theorem}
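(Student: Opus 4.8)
The plan is to combine the four per-step correctness lemmas by exploiting the fact that alternating simulation equivalence is composable, i.e.\ if $\Ss \equivAS \Ss'$ and $\Ss' \equivAS \Ss''$ then $\Ss \equivAS \Ss''$. This transitivity follows because alternating simulation relations are closed under relational composition: if $R$ is an ASR from $\Ss_a$ to $\Ss_b$ and $R'$ is an ASR from $\Ss_b$ to $\Ss_c$, then $R' \circ R$ is an ASR from $\Ss_a$ to $\Ss_c$ (checking conditions (i)--(iii) of Definition~\ref{def:altsim} is a routine diagram chase), and symmetrically for the relations going the other way; hence $\equivAS$ is transitive (it is also reflexive via the identity relation, so it is an equivalence relation on LTSs).

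First I would fix notation for the chain of systems produced by the algorithm: set $\Ss^{(0)} \coloneqq \Ss$, $\Ss^{(1)} \coloneqq S^1(\Ss)$, $\Ss^{(2)} \coloneqq S^2(\Ss^{(1)})$, $\Ss^{(3)} \coloneqq S^3(\Ss^{(2)})$, and $\Ss^{(4)} \coloneqq S^4(\Ss^{(3)})$, so that the target system is exactly $\Ss^{(4)} = S^4(S^3(S^2(S^1(\Ss))))$. Then I would apply each lemma in turn, but with one important caveat: Lemmas~\ref{lem:step1correctness}--\ref{lem:step4correctness} are each stated for a generic fixed input LTS $\mathcal{S}$, so to chain them I must reapply each lemma to the actual intermediate system. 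Concretely: Lemma~\ref{lem:step1correctness} applied to $\Ss^{(0)}$ gives $\Ss^{(0)} \equivAS \Ss^{(1)}$; Lemma~\ref{lem:step2correctness} applied to $\Ss^{(1)}$ gives $\Ss^{(1)} \equivAS \Ss^{(2)}$; Lemma~\ref{lem:step3correctness} applied to $\Ss^{(2)}$ gives $\Ss^{(2)} \equivAS \Ss^{(3)}$; and Lemma~\ref{lem:step4correctness} applied to $\Ss^{(3)}$ gives $\Ss^{(3)} \bisim_{AS} \Ss^{(4)}$, which in particular implies $\Ss^{(3)} \equivAS \Ss^{(4)}$ (an alternating bisimulation witnesses equivalence by taking $R$ and $R^{-1}$ as the two required ASRs). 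Chaining these four equivalences by transitivity yields $\Ss \equivAS \Ss^{(4)}$, which is the claim.

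The main subtlety — and the only place where care is needed — is that the lemmas are phrased with respect to a single fixed symbol $\mathcal{S}$, and Step~1's output is the input to Step~2, etc.; one must be explicit that each construction $S^i(\cdot)$ and each correctness lemma is invariant under the choice of input LTS, so that relabelling the fixed system in each lemma is legitimate. This is immediate from the definitions (Definition~\ref{def:quotient} and the constructions in Steps~2--4 refer only to the data of whatever LTS is supplied, together with its own maximal self-ASR $\MAS$, which is itself a function of that LTS). Since there are no hidden global assumptions tying the lemmas to the original $\Ss$, the substitution is sound. Apart from this bookkeeping, the proof is a two-line application of transitivity of $\equivAS$, so I do not anticipate a real obstacle; the bulk of the work has already been discharged in the per-step lemmas.
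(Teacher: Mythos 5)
Your proposal is correct and matches the paper's own argument, which likewise just invokes Lemmas~\ref{lem:step1correctness}--\ref{lem:step4correctness} in sequence and chains them by transitivity of $\equivAS$ (with $\bisim_{AS}$ implying $\equivAS$ for the last step). Your explicit remark that each lemma must be re-instantiated on the intermediate system rather than on the original $\Ss$ is a point the paper glosses over, and your justification of it is sound.
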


\subsection{Optimality results}

\begin{theorem}[Necessary Condition for Minimal Equivalent System modulo AS ]
\label{thm:nec}
Given any LTS $\mathcal{S}$, a minimal LTS $\smin \coloneqq(X_{min},X_{0,min}, U_{min} \delta_{min},Y_{min}, H_{min})$ equivalent to the former modulo AS necessarily satisfies the following conditions:
\begin{enumerate}
    \item [$N_1$]For any $p,q \in X_{min}$, $(\smin(p) \equivAS \smin(q)) \Rightarrow p=q$. That is, no two distinct states are equivalent modulo AS to each other.
    \item [$N_2$]For any $p \in X_{min}$, $p$ does not have any irrational or redundant moves.
    \item [$N_3$]$ \nexists\, t_1, t_2 \in \delta_{min}, x_1, x_2 \in X_{0,min}$ such that $t_1$ is a younger sibling of $t_2$ or $x_1$ is a younger sibling of $x_2$.
    \item [$N_4$] All the states in $X_{min}$ are connected from some $x_0 \in X_{0,min}$.
\end{enumerate}
\end{theorem}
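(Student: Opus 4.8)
The plan is to prove the contrapositive of each of the four conditions: if a system $\Ss'$ that is AS-equivalent to $\Ss$ violates one of $N_1$--$N_4$, then it is not of minimal size, because one of the reduction steps of our algorithm (Steps~1--4) strictly decreases $\trsize$ or $|X|$ while preserving AS-equivalence. So suppose $\Ss' \equivAS \Ss$ and $\Ss'$ fails condition $N_i$. First I would invoke transitivity of $\equivAS$ (which follows immediately from the definition of ASE and the fact that ASRs compose): it suffices to argue entirely within $\Ss'$, forgetting $\Ss$, since any system strictly smaller than $\Ss'$ and AS-equivalent to $\Ss'$ is also AS-equivalent to $\Ss$. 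Then for each case:

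\emph{Case $N_1$ fails.} There exist $p \neq q$ in $X'$ with $\Ss'(p) \equivAS \Ss'(q)$. I would first observe that this local equivalence implies $(p,q)$ and $(q,p)$ both lie in the maximal alternating simulation relation of $\Ss'$ to itself --- this requires a small lemma that $\Ss'(p)\equivAS\Ss'(q)$ is in fact detected by the self-$\MAS$, which is plausible because $\MAS$ is the union of all ASRs from $\Ss'$ to itself and one can patch together the two witnessing relations appropriately. Consequently $\MAS$ for $\Ss'$ is not antisymmetric, so by Proposition~\ref{prop:c1} applying Step~1 to $\Ss'$ yields a strictly smaller system that is still AS-equivalent (Lemma~\ref{lem:step1correctness}), contradicting minimality of $\Ss'$.

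\emph{Cases $N_2$, $N_3$, $N_4$ fail.} These are the direct analogues: if $\Ss'$ has an irrational or redundant controller move, Proposition~\ref{prop:c2} together with Lemma~\ref{lem:step2correctness} gives a strictly smaller AS-equivalent system via Step~2; if $\Ss'$ has a younger-sibling transition or younger-sibling initial state, Proposition~\ref{prop:c3} with Lemma~\ref{lem:step3correctness} does the same via Step~3; and if $\Ss'$ has a state inaccessible from $X_0'$, Proposition~\ref{prop:c4} with Lemma~\ref{lem:step4correctness} does the same via Step~4. In each case the relevant ``Moreover'' clause of the proposition guarantees the reduction is \emph{strict} in at least one component of the size triple, which contradicts $|\Ss'|$ being minimal.

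The main obstacle I anticipate is the bookkeeping subtlety in the $N_1$ case --- showing that a \emph{pairwise} semantic equivalence $\Ss'(p)\equivAS\Ss'(q)$ is actually captured by the single self-relation $\MAS$, rather than by two possibly-different relations that need not be mutual inverses. One must be careful because ASE (unlike bisimulation) allows the forward and backward relations to differ, so ``$(p,q)\in R$ for some ASR $R$ and $(q,p)\in R'$ for some ASR $R'$'' does not literally mean $(p,q),(q,p)\in\MAS$ until one argues that $\MAS \supseteq R$ and $\MAS \supseteq R'$, which holds by maximality. A second, more cosmetic subtlety: the four conditions must be shown to hold of \emph{any} minimal system, not merely of the output of our algorithm; but since the argument above only uses that the hypothesized minimal system is AS-equivalent to $\Ss$ and applies the algorithm's single-step reductions to \emph{it}, this is handled uniformly. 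The converse direction (sufficiency --- that $N_1$--$N_4$ pin down the minimal system up to isomorphism) is, as the excerpt indicates, a separate theorem and not needed here.
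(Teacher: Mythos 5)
Your proposal is correct and follows essentially the same route as the paper: the paper's proof is exactly the contrapositive argument that a violation of condition $N_i$ lets Step $i$ (via Propositions~\ref{prop:c1}--\ref{prop:c4} and Lemmas~\ref{lem:step1correctness}--\ref{lem:step4correctness}) produce a strictly smaller AS-equivalent system, contradicting minimality. Your extra care on the $N_1$ case --- patching the two pointed-system ASRs with the identity so that both $(p,q)$ and $(q,p)$ land in the maximal self-ASR, making it non-antisymmetric --- is a valid filling-in of a detail the paper leaves implicit.
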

\begin{proof}
This theorem is a consequence of Propositions \ref{prop:c1},\ref{prop:c2},\ref{prop:c3}, \ref{prop:c4}. If any condition $i\in \{1,2,3,4\}$ is violated by $\smin$, Step $i$ can be applied to get a  strictly smaller system preserving equivalence modulo AS which contradicts that $\smin$ is minimal.
\end{proof}
\begin{lemma}
\label{lem:necsat}
$\mathcal{S}_{out} = S^4(S^3(S^2(S^1(\mathcal{S}))))$, satisfies the necessary conditions in Theorem \ref{thm:nec}.
\end{lemma}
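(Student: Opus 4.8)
The plan is to verify each of $N_1$--$N_4$ at the step that enforces it and then show that no later step can undo it. Set $\mathcal{T}_0 \coloneqq \mathcal{S}$ and $\mathcal{T}_i \coloneqq S^i(\mathcal{T}_{i-1})$ for $i \in \{1,2,3,4\}$, so that $\mathcal{T}_4 = \mathcal{S}_{out}$, and let $\MAS_i$ denote the maximal alternating simulation relation from $\mathcal{T}_i$ to itself (for $i=1$ this is the relation $\MAS_1$ of Lemma~\ref{lem:step1mas}, and the notation is thus consistent). The Step~1--4 constructions and Propositions~\ref{prop:c1}--\ref{prop:c4} are generic in the input LTS, so Proposition~\ref{prop:c1} applied to Step~1 with input $\mathcal{T}_0$ gives $N_1$ for $\mathcal{T}_1$, Proposition~\ref{prop:c2} with input $\mathcal{T}_1$ gives $N_2$ for $\mathcal{T}_2$, Proposition~\ref{prop:c3} with input $\mathcal{T}_2$ gives $N_3$ for $\mathcal{T}_3$, and Proposition~\ref{prop:c4} with input $\mathcal{T}_3$ gives $N_4$ for $\mathcal{T}_4 = \mathcal{S}_{out}$. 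As Step~4 is the last step, $N_4$ already holds for $\mathcal{S}_{out}$. It remains to prove three \emph{persistence} claims: Step~2 preserves $N_1$; Step~3 preserves $N_1$ and $N_2$; Step~4 preserves $N_1$, $N_2$ and $N_3$.

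All three rest on one structural invariant: pruning never enlarges the maximal self-simulation on the surviving states, i.e.\ $\MAS_i \subseteq \MAS_{i-1}$ for every $i \in \{2,3,4\}$, viewed as relations on the state set of $\mathcal{T}_i$. For $i = 4$ this is immediate from Lemma~\ref{lem:step4correctness}: Step~4 removes only states unreachable from $X_0$ together with their transitions, so each surviving state of $\mathcal{T}_4$ has the same outgoing transitions as in $\mathcal{T}_3$; restricting any self-ASR of $\mathcal{T}_3$ to the reachable states yields a self-ASR of $\mathcal{T}_4$, and conversely extending any self-ASR of $\mathcal{T}_4$ by the identity on the deleted states yields one of $\mathcal{T}_3$, whence $\MAS_4 = \MAS_3 \cap (X_4 \times X_4)$. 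For $i \in \{2,3\}$ the state set is unchanged, and one proves $\MAS_i \subseteq \MAS_{i-1}$ contrapositively, using the fixed-point characterisation of $\MAS_{i-1}$ on the finite system $\mathcal{T}_{i-1}$ (cf.~\cite{alur1998alternating}): if $(p,q) \notin \MAS_{i-1}$, there is a finite \emph{spoiler witness} against $(p,q)$ — a controller action picked at $p$ (or, in the initial-state clause of Definition~\ref{def:altsim}, an initial state) such that for every matching response by the duplicator at $q$ the environment can reach, on the $q$-side, a successor whose paired $p$-side successor again carries a finite spoiler witness. One then checks this witness survives the pruning of Step~$i$: an action removed in Step~2 is $\sqsubseteq$-dominated by a retained one, and a transition or initial state removed in Step~3 is a younger sibling of a retained one, and in each case substituting the dominating object for the removed one keeps the witness valid, using transitivity of $\MAS$ (respectively of $\sqsubseteq$ and of the younger-sibling order) to propagate the substitution down the witness tree without reinstating the spoiled relation. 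This is precisely the substitution already performed in the proofs of Lemmas~\ref{lem:step2correctness} and \ref{lem:step3correctness}, now carried out at the level of pairs of states rather than of whole systems.

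Granting the invariant, the persistence claims follow. First, $\mathcal{T}_i(p) \equivAS \mathcal{T}_i(q)$ holds iff $(p,q) \in \MAS_i$ and $(q,p) \in \MAS_i$: the nontrivial direction uses that condition~(iii) of Definition~\ref{def:altsim} is monotone in the relation, so a witnessing ASR between $\mathcal{T}_i(p)$ and $\mathcal{T}_i(q)$ may always be enlarged by the identity to become a self-ASR of $\mathcal{T}_i$. Since $\MAS_i$ is a preorder (the identity is a self-ASR and self-ASRs compose) and, by the invariant, $\MAS_i \subseteq \MAS_1$, which is a partial order by Lemma~\ref{lem:step1mas}(2), $\MAS_i$ is antisymmetric; hence $\mathcal{T}_i$ has no two distinct AS-equivalent states for every $i$, so $\mathcal{S}_{out}$ satisfies $N_1$. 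For $N_2$ and $N_3$: each asserts that a certain domination order — the $\sqsubseteq$-order on the actions available at a state, respectively the younger-sibling order on transitions and initial states — has no nontrivial comparabilities. A later Step~$j > i$ could create such a comparability only by deleting a transition that previously witnessed incomparability; but whenever Step~$j$ deletes such a transition it retains a $\MAS$-dominating one, which serves as a replacement witness, and the replacement is still valid under $\MAS_j \subseteq \MAS_i$. Hence $N_2$ survives Steps~3 and~4, and $N_3$ survives Step~4, so $\mathcal{S}_{out}$ satisfies $N_2$ and $N_3$ as well.

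The main obstacle is the structural invariant $\MAS_i \subseteq \MAS_{i-1}$ for the transition-pruning Steps~2 and~3 — that the removed actions and transitions are genuinely redundant for the self-simulation game, so that their removal cannot unlock new alternating-simulation pairs among the surviving states — together with getting the directions of the $\sqsubseteq$- and younger-sibling orders right in the substitution argument; everything else is bookkeeping on top of Lemmas~\ref{lem:step1correctness}--\ref{lem:step4correctness}, Lemma~\ref{lem:step1mas}, and Propositions~\ref{prop:c1}--\ref{prop:c4}.
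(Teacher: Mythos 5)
Your overall architecture is the same as the paper's: establish $N_i$ at Step~$i$ via Proposition~$i$, then show each later step preserves the earlier conditions, with everything hinging on an invariance property of the maximal self-simulation relation. Your persistence arguments for $N_2$ and $N_3$ (a deleted transition always leaves behind a $\MAS$-dominating replacement witness, propagated by transitivity) and your treatment of Step~4 via $\MAS_4 = \MAS_3 \cap (X_4\times X_4)$ match the paper's reasoning. The one place you genuinely diverge is how you establish the invariant for Steps~2 and~3. The paper proves a small auxiliary lemma: if $S' \preceq_{\mathcal{I}} S \preceq_{\MAS} S'$ (or the reverse), then $\MAS$ remains the \emph{maximal} self-ASR of $S'$ — this follows in a few lines because reflexivity of $\MAS$ gives $S(p) \equivAS S'(p)$ for every state $p$, and then transitivity of $\leAS$ on pointed systems transfers the maximal relation in both directions. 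Since Lemmas~\ref{lem:step2correctness} and~\ref{lem:step3correctness} already hand you exactly these sandwiches, you get the stronger conclusion $\MAS_i = \MAS_{i-1}$ essentially for free, and antisymmetry (hence $N_1$) transfers immediately.

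Your alternative — a spoiler-witness induction over the fixed-point computation to show $\MAS_i \subseteq \MAS_{i-1}$ — reaches a true conclusion but is both more laborious and, as sketched, not airtight: when you replace a pruned action $u_a$ by its dominating survivor $u_a''$, the successor pairs change from $(p',q')$ to $(p'',q')$ with $p' \leAS p''$; you know $(p'',q') \notin \MAS_{i-1}$ by transitivity, but the \emph{rank} of the spoiler witness against $(p'',q')$ in the approximation sequence need not be smaller than that against $(p',q')$, so the induction measure you implicitly rely on ("propagate the substitution down the witness tree") does not obviously decrease. This is repairable, but the repair amounts to re-deriving what the system-level transitivity argument gives directly. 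I would replace that paragraph with the sandwich argument; the rest of your proof then stands as is.
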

By Proposition \ref{prop:c1}, we know that after step 1 we get a $S^1(\mathcal{S})$ that satisfies $N_1$. 
The proof then shows that after performing each step $i$, we get a system satisfying $N_i$. Moreover, if the input to the system satisfied any of the previous properties, they will continue to respect it.

We call any LTS satisfying the conditions in Theorem \ref{thm:nec} as \emph{potentially minimal systems}.

In the following we show that the conditions in Theorem \ref{thm:nec} are also sufficient for minimality modulo ASE. 
In fact, we prove something stronger: such a minimal system is unique up to a variant of isomorphism which we introduce as bijective alternating bisimulation isomorphism (BABI). We show this by proving that any two potentially minimum systems $\Ss_1$ and $\Ss_2$ such that $\Ss_1  \equivAS \Ss_2$ implies that they are BABI to each other.
It is important to note that for two structures to be connected via a BABI implies the existence of a bijective alternating bisimulation relation, but the converse is not necessarily true. Hence, the former is stricter than the latter. In fact, the existence of a bijective alternating bisimulation does not necessarily preserve the transition size \footnote{Consider single state systems $B_1, B_2$ one with self loop on $a$ and other with two self loops each on $a$ and $\tilde{a}$.}.

\begin{definition}[Bijective Alternating Bisimulation Isomorphism]
\label{def:babi}
Given any two systems $\Ss_j \coloneqq (X_{j},X_{0,j}, U_{j} \delta_{j},Y_{j}, H_{j})$, $j\in \{1,2\}$, we say that $\Ss_1 \isop \Ss_2$ iff there exists a bijective function $\mathcal{A}:X_1 \mapsto X_2$ such that $\forall p \in X_1 . \mathcal{A}(p) = q$ implies:
\begin{enumerate}
    \item $p \in X_{0,1} \iff q \in X_{0,2}$.
    \item $H_1(p) = H_2(q)$. Vertex labelling is preserved.
    \item There exists a bijection $G_{p,q}: U_1(p) \mapsto U_2(q)$ such that $\forall a \in U_1(p). \Post^{\Ss_2}(q,b) = \{\mathcal{A}(p') \mid p' \in \Post^{\Ss_1}(p,a)\}$ where $b=G_{p,q} (a)$.
\end{enumerate}
\end{definition}
Hence, $\Ss_1 \isop \Ss_2$ implies  $|X_0| = |X_1|$ (implied by the existence of bijection $\mathcal{A}$), $|X_{0,1}| = |X_{0,2}|$ (implied by 1 and bijectivity of $\mathcal{A}$), total number of transitions are equal in both $\Ss_1, \Ss_2$ (implied by 3). Hence,  $|X_1| = |X_2| \wedge \trsize(\Ss_1) = \trsize(\Ss_2)$.
\begin{lemma}
\label{lem:minimum}
Let $\Ss_1$ and $\Ss_2$ be any potentially minimal systems. Then, $\Ss_1 \equivAS \Ss_2$ implies $\Ss_1 \isop \Ss_2$. 
\end{lemma}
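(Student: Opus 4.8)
The plan is to build the bijection $\mathcal{A}: X_1 \to X_2$ directly from the two alternating simulation relations that witness $\Ss_1 \equivAS \Ss_2$, and then to show that each of the four ``potential minimality'' conditions forces $\mathcal{A}$ to satisfy the three requirements of Definition~\ref{def:babi}. Let $R \subseteq X_1 \times X_2$ be an ASR from $\Ss_1$ to $\Ss_2$ and $R' \subseteq X_2 \times X_1$ an ASR from $\Ss_2$ to $\Ss_1$. First I would observe that the composition $R' \circ R \subseteq X_1 \times X_1$ is an ASR from $\Ss_1$ to itself (alternating simulations compose), so it is contained in $\MAS_1$ (the maximal such relation); symmetrically $R \circ R' \subseteq \MAS_2$. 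Now comes the key step: I claim that for each $p \in X_1$ there is a \emph{unique} $q \in X_2$ with $(p,q)\in R$, and moreover $(q,p)\in R'$. Existence of at least one such $q$ follows because $p$ must be reachable from some initial state (condition $N_4$) and reachable states stay related under an ASR; if $p_1, p_2$ were both related to $p$ via $R' \circ R$ then $\Ss_1(p_1) \equivAS \Ss_1(p_2)$, and since by $N_1$ no two distinct states of $\Ss_1$ are ASE, we get $p_1 = p_2$. Applying this in both directions shows $R$ restricted to reachable states is (the graph of) a bijection $\mathcal{A}$, with inverse the corresponding restriction of $R'$; this uses $N_4$ on both systems to discard unreachable states, of which there are none.

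Next I would verify the three clauses of BABI. Clause~2 ($H_1(p) = H_2(\mathcal{A}(p))$) is immediate from requirement (ii) of Definition~\ref{def:altsim} applied to $R$. Clause~1 ($p\in X_{0,1} \iff \mathcal{A}(p)\in X_{0,2}$) needs condition $N_3$ on initial states: requirement (i) of the ASR $R'$ matches every $q_0 \in X_{0,2}$ to some initial state of $\Ss_1$, but because $\mathcal{A}$ is a bijection and $\MAS_1$ has no younger-sibling initial states, a counting/antisymmetry argument pins $\mathcal{A}^{-1}(q_0)$ down to be initial, and symmetrically in the other direction. Clause~3 is the substantive one: given $p$ and $q = \mathcal{A}(p)$, I use requirement (iii) of $R$ to get, for each $a \in U_1(p)$, a witness $b \in U_2(q)$ with the ``$\forall$-successor--$\exists$-successor'' property, and requirement (iii) of $R'$ to get, for each $b' \in U_2(q)$, a witness $a' \in U_1(p)$ with the matching property. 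Composing these two and using that $\mathcal{A}$ is a bijection on states, one shows that the chosen $b$ satisfies $\Post^{\Ss_2}(q,b) = \mathcal{A}(\Post^{\Ss_1}(p,a))$ \emph{exactly}: any strict inclusion would produce either an irrational/redundant action (violating $N_2$) or a younger-sibling transition (violating $N_3$). This exactness makes the assignment $a \mapsto b$ well-defined and injective; running the same argument from $\Ss_2$ to $\Ss_1$ gives surjectivity, so $G_{p,q}$ is the required bijection $U_1(p) \to U_2(q)$.

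The main obstacle I anticipate is Clause~3, specifically proving the \emph{exact} equality $\Post^{\Ss_2}(q,b) = \mathcal{A}(\Post^{\Ss_1}(p,a))$ rather than the mere inclusion that the AS definition directly gives. The AS conditions only guarantee that successors on one side are dominated (in $\MAS$) by successors on the other, not that the successor sets correspond bijectively; closing this gap is exactly where all four minimality conditions must be brought to bear simultaneously. I would handle it by a careful case analysis: suppose $\mathcal{A}(p'') \in \Post^{\Ss_2}(q,b)$ but $p'' \notin \Post^{\Ss_1}(p,a)$; chasing $p''$ back through $R'$ and the witness for $b$ yields some $a'$-successor $p'$ of $p$ with $(p'', p') \in \MAS_1$, and then either $p'$ strictly $\MAS_1$-dominates a genuine $a$-successor (contradicting that $\Ss_1$ has no younger siblings, $N_3$) or the action $a'$ makes $a$ irrational or redundant at $p$ (contradicting $N_2$). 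A symmetric argument rules out $p'' \in \Post^{\Ss_1}(p,a)$ with $\mathcal{A}(p'') \notin \Post^{\Ss_2}(q,b)$. Care is needed because $\MAS_1$ and $\MAS_2$ are only partial orders (Lemma~\ref{lem:step1mas}) on the quotient level, so I must keep track of antisymmetry carefully when converting ``$\MAS$-domination'' into ``equality''; the potential-minimality of both systems is what supplies this antisymmetry at every relevant state and transition.
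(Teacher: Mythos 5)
Your overall strategy coincides with the paper's: define $\mathcal{A}$ from the two witnessing ASRs so that $(p,q)\in\mathcal{A}$ iff $\Ss_1(p)\equivAS\Ss_2(q)$, use $N_1$ for functionality/injectivity, $N_2$ and $N_3$ to force the exact successor-set equality in clause~3 of Def.~\ref{def:babi}, and $N_3$ on initial states for clause~1. Your identification of the exact equality $\Post^{\Ss_2}(q,b)=\mathcal{A}(\Post^{\Ss_1}(p,a))$ as the crux, and the way you propose to close it (chasing an action back through both relations to produce either a redundant/irrational action against $N_2$ or a younger sibling against $N_3$), is precisely the paper's argument.

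There is, however, one step that fails as justified: the claim that ``existence of at least one such $q$ follows because $p$ must be reachable from some initial state ($N_4$) and reachable states stay related under an ASR.'' This is not what an ASR propagates. For $R$ with $\Ss_1\preceq_R\Ss_2$, condition~(iii) of Def.~\ref{def:altsim} guarantees that every $b$-successor of $q$ is matched by \emph{some} $a$-successor of $p$; it does \emph{not} guarantee that every successor of $p$ is matched, so coverage propagates along runs of the abstract system $\Ss_2$, not of $\Ss_1$. For $R'$ the direction is reversed but coverage of $\Post^{\Ss_1}(p,a)$ is only guaranteed for those $a$ that arise as matches of some $b\in U_2(q)$, and showing that every $a\in U_1(p)$ is such a match already requires the clause-3 machinery (via $N_2$). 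Moreover you need both memberships $(p,q)\in R$ and $(q,p)\in R'$ simultaneously, which neither relation propagates on its own. The paper repairs exactly this by proving totality of $\mathcal{A}$ \emph{last}, by induction on the distance of $p$ from $X_{0,1}$ (using $N_4$), invoking at each step the already-established successor correspondence: if a predecessor of $p'$ is matched, then $p'$ is matched to a unique partner. Your proposal has all the ingredients to run this induction, but as ordered and justified, the totality claim is circular-free only if you defer it until after clause~3 is proved; as written, the cited mechanism is incorrect. (A minor additional slip: condition~(i) of $R'$ matches every initial state of $\Ss_1$ to an initial state of $\Ss_2$, not the direction you state; both directions are available, so this does not affect the argument.)
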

\begin{proof}
Given potentially minimal systems $j\in {1,2}$,
$\Ss_j \coloneqq (X_{j}, \allowbreak X_{0,j}, \allowbreak, U_{j}, \delta_{j},Y_{j}, H_{j})$, such that $\Ss_1 \equivAS \Ss_2$ we show that $\Ss_1 \isop \Ss_2$. As $\Ss_1 \equivAS \Ss_2$, denote the maximal ASR from $\Ss_1$ to $\Ss_2$ by $\MAS_1$ and that from $\Ss_2$ to $\Ss_1$ by $\MAS_2$. Let $\mathcal{A} \subseteq X_1 \times X_2$ such that  $\mathcal{A} \coloneqq \{(p,q) \mid (p,q) \in \MAS_1 \wedge (q,p) \in \MAS_2\} = \MAS_1 \cap (\MAS_2)^{-1}$. Note that any pair $(p,q)\in \mathcal{A}$ iff $\Ss_1(p) \equivAS \Ss_2(q)$. We prove the result by showing that $\aA$ is a bijection satisfying all the 3 conditions of the Def. $\ref{def:babi}$. Condition 2 is straightforward: every pair of states occurring in $\aA$ are equivalent modulo Alternating  Simulation and hence have identical labels. 

Now let us focus on Condition 3. We show that $\mathcal{A}$ is a relation satisfying condition 3 of Def. \ref{def:babi}. For that, we construct a relation $G_{p,q}$ satisfying condition 3; then we see it is a bijection. Note that any $(p,q) \in \aA$ implies (C1) $(p,q) \in \MAS_1 \wedge $ (C2) $(q,p) \in \MAS_2$.

\textbf{Construct a candidate relation $G'_{p,q}$ satisfying the consequent of condition 3 of Def.~\ref{def:babi}}.
The former implies (C1.1) for every $a \in U_1(p)$ we can \textit{choose} a $b \in U_2(q)$ such that for every state $q' \in \trnw^{\Ss_2}(q,b)$ we can find a state $p' \in \trnw^{\Ss_1}(p,a)$ such that $p'$ $\leAS$ $q'$ ($(p',q') \in \MAS_1$). 
\\(C2) and (C1.1) together imply (C2.1) for the $b$ chosen in previous step (C1.1) we can \textit{find} an $a' \in U_1(p)$ such that for every state $p'' \in \trnw^{\Ss_1}(p,a')$ we can find a state $q'' \in \trnw^{\Ss_2}(q,b)$ such that $q''$ $\leAS$ $p''$ ($(q'',p'') \in \MAS_2$. 

Combining (C1.1 and C2.1) we get (C3.1)$\forall a \in U_1(p). \exists b \in U_2(q).$ $\exists a' \in U_1(p)$ such that for every state $p'' \in \trnw^{\Ss_1}(p,a')$ there exists a state $q'' \in \trnw^{\Ss_2}(q,b)$ such that $(q'',p'') \in \MAS_2$. Moreover, for this $q''$ we can find a state $p' \in \trnw^{\Ss_1}(p,a)$ such that $(p',q'') \in \MAS_1 (p'\leAS q'')$. Hence, by transitivity of alternating simulation pre-order, for every state $p'' \in \trnw^{\Ss_1}(p,a')$ there exists a state  $p' \in \trnw^{\Ss_1}(p,a)$ such that $p'' \leAS p'$. Hence, $(p,a) \sqsubseteq_{\Ss_1}(p,a')$. Thus, if $a \ne a'$ then $a$ is either redundant or an irrational choice for the controller at state $p$ in LTS $\Ss_1$. This contradicts the assumption that $\Ss_1$ satisfies condition $N_2$. Hence, (C4)$a=a'$.

Thus combining (C3.1) and (C4) we get (C3) for any $a \in U_1(p)$ we can find $b \in U_2(q)$ such that for every state in $p' \in \trnw^{\Ss_1}(p,a)$ we can find $q' \in \trnw^{\Ss_2}(q,b)$ such. that $q'\leAS p'$ ($(q',p') \in \MAS_2$); at the same time, by (C1.1), for every state $q'' \in \trnw^{\Ss_2}(q,b)$ there exists a state in $p'' \in \trnw^{\Ss_1}(p,a)$ such that $p'' \leAS q''$ ($(p'',q'') \in \MAS_1$).

Note that (C3) is equivalent to $\psi(p,q) \coloneqq \forall a \in U_1(p). \exists b \in U_2(q). \varphi(p,q,a,b),$ where
$\varphi(p,q,a,b) = \varphi_1(p,q,a,b) \wedge \varphi_2(p,q,a,b),$
$\varphi_1 \coloneqq
\forall p' \in \trnw^{\Ss_1}(p,a). \exists q' \in \trnw^{\Ss_2}(q,b).  (q',p') \in \MAS_2$
$\varphi_2 \coloneqq 
\forall q'' \in \trnw^{\Ss_{2}}(q,b). \exists p'' \in \trnw^{\Ss_1}(p,a). (p'',q'') \in \MAS_1$.
Let $G'_{p,q} {\subseteq} U_1(p) \times U_2(q)$ such that $(a,b) {\in} G'_{p,q}$ iff $\varphi(p,q,a,b)$ holds. 

{\bf Verify that $G'_{p,q}$ satisfies the consequent of condition 3.} Note that for every $(a,b) \in G'_{p,q}$ we have that every state in $p' \in \trnw^{\Ss_1}(p,a)$ some state in $q' \in \trnw^{\Ss_2}(q,b)$ such that $q' \leAS p'$ ($(q',p') \in \MAS_2$, due to $\varphi_1$),  which in turn, due to $\varphi_2$, satisfies $q' \leAS p''$ for some state $p'' \in \trnw^{\Ss_1}(p,a)$ (i.e, $(p'',q') \in \MAS_1$). 

Now we prove that $p' = p''$ by contradiction. Suppose that $p' \ne p''$. Then, by transitivity of alternating simulation, $p'$ $\leAS$ $p''$. Hence, transition $(p,a,p'')$ is a younger sibling of transition $(p,a,p')$ which contradicts the assumption that $N_3$ is satisfied by $\Ss_1$. Hence (C5) $p' = p''$

Thus, (C5.1) for any $(a,b) \in G'_{p,q}$, for each  $p' \in \trnw^{\Ss_1}(p,a)$ there is a state $q' \in \trnw^{\Ss_2}(q,b)$ such that $p'\leAS q'$ (by $\varphi_1$). Moreover, this $q'$ is in turn alternately simulates $p'$ (by C5 and $\varphi_2$). Hence, $(p',q') \in \aA$. Now we prove that there is a unique $q'$ such that $(p',q') \in \aA.$ (C5.2) Suppose there exists a $p' \in \trnw^{\Ss_1}(p,a)$ that $\succeq$ two distinct states $q',q'' \in \trnw^{\Ss_2}(q,b)$, then by (C5.1) $(p',q') \in \aA$ and $(p',q'') \in \aA$. This would imply that $q'$ and $q''$ are equivalent modulo AS. This contradicts the assumption that $\Ss_2$ satisfies $N_1$.  Hence, for every $p' \in \trnw^{\Ss_1}(p,a)$ there exists a unique $q' \in \trnw^{\Ss_2}(q,b)$ such that $(p',q') \in \aA$. By symmetry of condition $\varphi$, for every $q' \in \trnw^{\Ss_2}(q,b)$ there exists a unique $p' \in \trnw^{\Ss_1}(p,a)$ such that $(p',q') \in \aA$.  

This implies (C6) $\Post^{\Ss_2}(q,b) = \{q' \mid (p',q') \in \aA$ and $p' \in \Post^{\Ss_1}(p,a)\}$. 
Hence, by $\psi(p,q)$ we have (C7) i.e.For any $a \in U_1(p)$ we can find $b \in U_2(q)$ such that $(a,b) \in G'_{p,q}$.

By symmetry, repeating all steps starting from (C2), we get (C8) for any $(p,q) \in \mathcal{A}$ we can construct a relation $G''_{q,p}\subseteq U_2(q) \times U_1(p)$ such that $\Post^{\Ss_1}(p,a) = \{p' | (q',p') \in \aA^{-1} \wedge q' \in \Post^{\Ss_1}(q,a)\}$, reading %
(9) $\forall b \in U_2(q). \exists a \in U_1(p). (b,a) \in G''_{q,p}$.

{\bf Building the bijection $G_{p,q}$.} We now prove that $G_{p,q} \coloneqq G'_{p,q} \cap G''^{-1}_{q,p}$ is a well-defined bijective function such that for any $a\in U_1(p), b\in U_2(q)$, $b = G_{p,q}(a) \implies \Post^{\Ss_2}(q,b) = \{q' \mid (p',q') \in \aA \text{ and } p' \in \Post^{\Ss_1}(p,a)\}$. This proves that $\aA$ satisfies the required condition 3.

(10) For $G_{p,q}$ to be a well-defined function, we need to show that for any $a \in U_1(p)$, there is (A) at least 1 and (B) at most 1 $b \in U_2(q)$ such that $(a,b) \in G_{p,q}$; 
(A) is implied by (C7). 

For (B), assume that for distinct $b_1, b_2 \in U_2(q)$ $(a,b_1), (a,b_2) \in G_{p,q}$. 
By (C6), we get that $\Post^{\Ss_2}(q,b_1) = \{q' \mid (p',q') \in \aA \allowbreak \text{ and } p \in \Post^{\Ss_1}(p,a)\} = \Post^{\Ss_2}(q,b_2)$. But this implies that $b_1$ is a redundant controller choice at state $q$ in LTS $\Ss_2$ which contradicts $N_2$ for system $\Ss_2$. Hence, $G_{p,q}$ is a well-defined function. 
Applying the same reasoning on $G^{-1}_{q,p} = G^{-1}_{p,q} \cap G''_{q,p},$ we get that $G^{-1}_{q,p}$ is also a well-defined function, proving that $G_{p,q}$ is a bijection.

As $G_{p,q}$ contains elements from $G'_{p,q}$, any  $(a,b) \in G_{p,q}$ satisfies (C6).  %
Hence $G_{p,q}$ is the required bijection for condition 3 in Def.~\ref{def:babi}.

\textbf{$\aA$ is a bijection and satisfies condition 1}: (C11) First we show that every initial state is related to a unique initial state. That is, (C11.1 $\aA_0 \coloneqq \aA \cap (X_{0,1} \times X_{0,2})$ is a bijection between $X_{0,1}$ and $X_{0,2}$. We first show by contradiction that $\aA_0$ is a well-defined function. If it is not, then there exists a state $p \in X_{0,1}$ such that (C11.2) either $p$ is not related to any state $q$ in $\aA_0$ or,  (C11.3) $\exists q,q' \in X_{0,2}. (p,q) \in \aA_0 \wedge (p,q') \in \aA_0 \wedge q \ne q'$. Note that $\MAS_2$ is an ASR from $\Ss_2$ to $\Ss_1$, hence from condition (C11.1) of Def.~\ref{def:altsim}, $p$ being an initial state of $\Ss_1$ implies $\exists q'. (q',p) \in \MAS_2$. Now, (due to similar restrictions imposed by condition (C11.1) for $\MAS_1$ being an ASR from $\Ss_1$ to $\Ss_2$) this $q'$ is related with some initial state $p'$ of $\Ss_1$. Hence, $\exists p'. (p',q') \in \MAS_1$. Now note that if $p' = p$, then $(p,q')$ should be in $\mathcal{A}_0$ (by definition) which contradicts the assumption that (C11.2) holds. If $p' \ne p$, we have $\Ss_1(p) \leAS \Ss_2(q')$ $\wedge$ $\Ss_1(q') \leAS \Ss_2,(p) \wedge p \ne p'$. Hence, by transitivity of $\leAS$, $p\leAS p'$, $p\ne p'$ and both are initial states. This implies that $p$ is an initial state which is  younger sibling of $p'$, which contradicts the assumption that $\Ss_1$ satisfies condition $N_3$. Note that to prove $\aA_0$ is a bijection, it suffices to show that $\aA_0^{-1}$ is a well-defined function, which is a symmetrical proof to that of $\aA_0$.

(C12) {\bf Now we show that $\aA$ is a partial function.} That is, every $p\in X_1$ is mapped to a unique $q \in X_2$ via $\aA$. Suppose it is not, i.e., there exists a state $p \in X_{1}$ which is related to two distinct states $q,q' \in X_2$. Hence,  $(p,q),(p,q') \in \aA$. By definition of $\aA$, we have that $(q,p) \in \MAS_2$ and $(p,q') \in \MAS_1$, implying (by transitivity) that $q \leAS q'$; symmetrically, $(p,q) \in \MAS_1$ and $(q',p) \in \MAS_2,$ implying that $q' \leAS q$. Thus, $q$ and $q'$ are equivalent modulo AS which is a contradiction as $\Ss_2$ satisfies $N_1$. Symmetrically, $\aA^{-1}$ is a partial function relation.  

(C13) Note that by (C11) every initial state is mapped to some initial state. By (C12), every state is mapped to a unique state. Hence, every initial state can only be mapped to a \emph{unique} initial state. 

{\bf We now show that $\aA$ (and by symmetry $\aA^{-1}$) is a well-defined function.} 
We already showed that $\aA$ (and $\aA^{-1}$) are partial functions (C12). 
It remains to be proved that a state in $X_1$ can be mapped to at least one state in $X_2$ under $\aA$ (and vice-versa under $\aA^{-1}$). We already showed the latter for states in $X_0$; we now show it for the remaining states. We prove this using contradiction. Assume that there exists a state in $X_1$ that is not mapped to any state in $X_2$ under $\aA$. Let $\P$ be the set of all such states. As $X_1$ is a finite set, so is $\P$. Note that by assumption $N_4$, $\Ss_1$ does not contain any inaccessible state. Hence, every state in $p \in X_1$ can be reached from some initial state in $p_0 \in X_{0,1}$ in $|X_1|$ or less steps. Let $c$ be the minimum number of steps required  to reach the state $p'\in \P$ that is the nearest to the initial state set. That is, no state in $\P$ can be reached in $c-1$ or less steps and there is at least 1 state $p' \in \P$ that is reachable from initial state in $c$ steps. Consider a state $p \in \Pre(p',a)$ for some $a \in U_1$. Because $p$ is reachable in $c-1$ steps, there exists a $q \in X_2$ such that $(p,q) \in \aA$. Now we recover (C5.2): for every $a \in U_1(p). \exists b \in U_2(q). \forall p'' \in \trnw^{\Ss_1}(p,a)$ there exists a unique $q'' \in \trnw^{\Ss_2}(q,b)$ such that $(p'',q'') \in \aA$. This implies that for  $p'$ too there exists a unique $q' \in \trnw^{\Ss_2}(q,b)$ such that $(p',q')\in \aA$. This leads to the contradiction, thus $\aA$ is a well-defined function.
By symmetry, the same holds for $\aA^{-1}$. This implies $\aA$ is a bijection.
\end{proof}

Lemmas \ref{lem:necsat} and \ref{lem:minimum} imply our main optimality result:

\begin{theorem}\label{thm:weminimize}
The system $\mathcal{S}_{out} = S^4(S^3(S^2(S^1(\mathcal{S}))))$ is the unique (up to BABI) minimal system that is ASE to $\Ss$.
\end{theorem}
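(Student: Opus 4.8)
The plan is to assemble the theorem from the three ingredients already established: the correctness result (Theorem~\ref{thm:correct}), the fact that the algorithm's output satisfies the necessary conditions $N_1$--$N_4$ (Lemma~\ref{lem:necsat}), and the rigidity of potentially minimal systems (Lemma~\ref{lem:minimum}). The one background fact I would make explicit first is that $\equivAS$ is transitive: if $R$ is an ASR from $\Ss_a$ to $\Ss_b$ and $R'$ is an ASR from $\Ss_b$ to $\Ss_c$, then the relational composition $R'\circ R$ is an ASR from $\Ss_a$ to $\Ss_c$ --- conditions (i)--(iii) of Def.~\ref{def:altsim} chain through directly (for (iii), the action witnessing $u_a$ through $x_b$ is then matched through $x_c$, and the successors compose back). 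Applying this to both directions shows $\Ss_a \equivAS \Ss_b$ and $\Ss_b \equivAS \Ss_c$ give $\Ss_a \equivAS \Ss_c$. Combined with Theorem~\ref{thm:correct} this gives $\Ss_{out} \equivAS \Ss$, and combined with Lemma~\ref{lem:necsat} it gives that $\Ss_{out}$ is a \emph{potentially minimal} system (i.e., it satisfies $N_1$--$N_4$).

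Next I would invoke well-foundedness of the size order: since sizes are triples of naturals, the set of LTSs that are ASE to $\Ss$ contains at least one of minimal size; call it $\smin$. By Theorem~\ref{thm:nec}, $\smin$ satisfies $N_1$--$N_4$, hence is potentially minimal. Since $\smin \equivAS \Ss$ and $\Ss \equivAS \Ss_{out}$, transitivity yields $\smin \equivAS \Ss_{out}$. Now both $\smin$ and $\Ss_{out}$ are potentially minimal and ASE to one another, so Lemma~\ref{lem:minimum} gives $\smin \isop \Ss_{out}$. As observed right after Def.~\ref{def:babi}, a BABI forces $|X|$, $|X_0|$, and $\trsize$ to agree, so $|\smin| = |\Ss_{out}|$. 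Therefore $\Ss_{out}$ attains the minimal size and, being itself ASE to $\Ss$, is a minimal LTS equivalent to $\Ss$ modulo AS.

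For uniqueness up to BABI, let $\Ss'$ be \emph{any} minimal LTS with $\Ss' \equivAS \Ss$. By Theorem~\ref{thm:nec}, $\Ss'$ is potentially minimal; by transitivity, $\Ss' \equivAS \Ss_{out}$; Lemma~\ref{lem:minimum} then yields $\Ss' \isop \Ss_{out}$. Hence every minimal ASE abstraction of $\Ss$ is BABI-isomorphic to the algorithm's output, which is exactly the claim. (As an alternative to the well-foundedness appeal, one can argue directly that no ASE system is strictly smaller than $\Ss_{out}$: given any $\Ss'$ with $\Ss'\equivAS\Ss$ and $|\Ss'|\le|\Ss_{out}|$, run the algorithm on $\Ss'$ to obtain a potentially minimal $\Ss'_{out}$ with $|\Ss'_{out}|\le|\Ss'|$ and $\Ss'_{out}\equivAS\Ss_{out}$, then apply Lemma~\ref{lem:minimum} to get $|\Ss'_{out}|=|\Ss_{out}|$, forcing $|\Ss'|=|\Ss_{out}|$.)

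Since all the combinatorial heavy lifting is already discharged inside Lemma~\ref{lem:minimum}, I expect no deep obstacle here; the only point requiring care is bookkeeping around the word ``minimal.'' Minimality is with respect to the product \emph{partial} order on size triples, so the argument genuinely uses the existence of a minimal element (well-foundedness) rather than of a minimum, and Lemma~\ref{lem:minimum} is precisely what upgrades ``some minimal system has the same size as $\Ss_{out}$'' to ``$\Ss_{out}$ itself is minimal, and every minimal ASE system is BABI to it.''
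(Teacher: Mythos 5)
Your proposal is correct and follows essentially the same route as the paper, which proves this theorem simply by combining Lemma~\ref{lem:necsat} and Lemma~\ref{lem:minimum}; you have merely made explicit the bookkeeping the paper leaves implicit (transitivity of $\equivAS$, existence of a minimal element via well-foundedness of the size order, and the size-preservation consequence of BABI). Your observation that minimality is with respect to a partial order, so that Lemma~\ref{lem:minimum} is what collapses all minimal elements to a single BABI class containing $\Ss_{out}$, is a worthwhile clarification but not a departure from the paper's argument.
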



\section{Case Study: scheduling PETC systems}\label{sec:casestudy}

Event-triggered control (ETC) is an aperiodic sampled-data control paradigm where a \emph{plant} samples its state and sends it to a \emph{controller} upon the occurrence of a designed event. Immediately after, the controller calculates a \emph{control input} that is sent to the actuators of the plant. Despite reducing control-related traffic, ETC's aperiodic traffic makes it challenging to accommodate multiple ETC loops sharing a communication channel: packet collisions are bound to happen, putting the stability of the controlled plants at risk. Therefore, a scheduler must be introduced in the system, in order to adjust the traffic and prevent said collisions, while ensuring stability and performance of the individual plants.

\begin{figure}
    \centering
    \includegraphics[width=\linewidth]{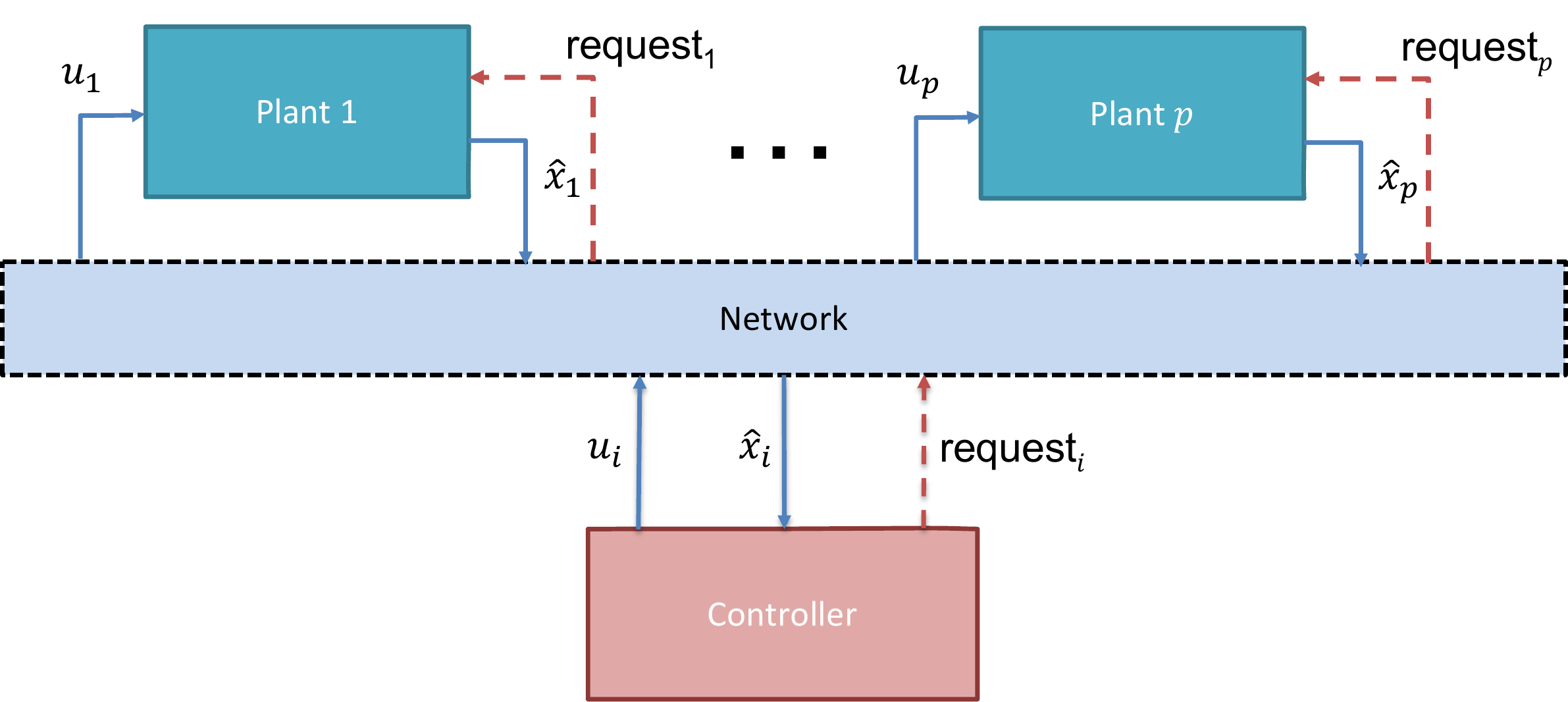}
    \caption{A network of $p$ ETC systems. Plant $i$ can decide (based on the event occurrence) when to send its state sample $\hat\xv_i$ or the controller can request it.}
    \label{fig:scheme}
\end{figure}
Figure \ref{fig:scheme} depicts a networked control system (NCS) with multiple control loops sharing a single communication channel. The plants are described by an ordinary differential equation (ODE), and the controller runs individual \emph{control functions} for each of the plants, as follows:
\begin{equation}\label{eq:plant}
    \begin{aligned}
        \dot\xv_i(t) &= f_i(\xv_i(t), \uv_i(t), \wv_i(t)), \\
        \uv_i(t) &= g_i(\hat\xv_i(t)),
    \end{aligned}
\end{equation}
where $\xv_i(t) \in \R^{n_i}$ is the state of plant $i$, $\uv_i(t) \in \R^{m_i}$ is its control input, and $\wv_i(t) \in \R^{d_i}$ represents the external disturbances that act on it. The variable $\hat{\xv}$ represents the sampled-and-held version of state $\xv$, satisfying
\begin{equation}\label{eq:sampledstate}
    \hat\xv_i(t) =
        \begin{cases}
            \xv(t_{i,k}), & \text{if } t \in [t_{i,k}, t_{i,k+1}), \\
            \O, & \text{otherwise,}
        \end{cases}
\end{equation}
where $t_{i,k} \in \R_+$ represents the $k$-th communication instant for the data of plant $i$. In regular ETC, the communication instants are dictated by a \emph{triggering condition}, such as the seminal one proposed in \cite{tabuada2007event}:
\begin{equation}\label{eq:triggeringtimes}
    \begin{aligned}
    t_{i, k+1} = t^{\mathrm{trigger}}_{i, k+1} &\coloneqq \sup\{t \in h\N \mid t > t_{i,k} \text{ and } \phi(\xv_i(t),\hat\xv_i(t)) \leq 0\}, \\
    \phi(\xv_i(t),\hat\xv_i(t)) &= \norm[\xv_i(t) - \hat\xv_i(t)] - \sigma_i\norm[\xv_i(t)],
    \end{aligned}
\end{equation}
where $\sigma_i \in [0, 1)$ is a design parameter. The parameter $h$ discretizes the time axis, meaning that events can only take place in multiples of $h$. This represents, for simplicity, also the \emph{channel occupancy time}, which is the time it takes for a state measurement $\hat\xv_i(t)$ and the subsequent control action $u_i(t)$ to be sent over the network. In fact, this discretization makes the sampling effectively a periodic event-triggered control, or PETC \cite{heemels2013periodic}.

If multiple control systems operate with communication instants dictated by \eqref{eq:triggeringtimes}, it is generally impossible to prevent communication conflicts in the network; hence we introduce a possibility for the controller to \emph{request} a state sample for any plant \emph{before} its event actually happens. This can prevent collisions, while it is also sound from a control-systems perspective: in ETC, events are designed to happen before an underlying Lyapunov function stops decreasing sufficiently fast, thus ensuring closed-loop stability, see, e.g., \cite{tabuada2007event, heemels2012introduction}. This makes \emph{early sampling} a safe choice from a control performance perspective, and this feature has been extensively exploited in the event-based literature \cite{mazo2010iss, anta2008self}, including in the context of scheduling of ETC systems \cite{gleizer2020scalable}. Therefore, the sampling times $t_{i,k}$ can either occur upon triggering of the condition as in \eqref{eq:triggeringtimes}, or be \emph{requested} earlier by the scheduler, satisfying
\begin{equation}\label{eq:earlysamplingtimes}
t_{i,k+1} \in \{t \in h\N \mid t > t_{i,k} \text{ and } t \leq t^{\mathrm{trigger}}_{i, k+1}\}.
\end{equation}

The quantity $\tau_{i,k} \coloneqq t_{i,k+1} - t_{i,k}$ is called \emph{inter-sample time}. When given these degrees of freedom, the most fundamental question one needs to answer is whether it is possible for a scheduler to coordinate the traffic generated by the $p$ PETC loops while avoiding collisions and ensuring that the communications are timely. We assume that the device that runs the scheduler is capable of listening to all traffic, thus having access to the sampled states of all systems. In fact, this can be the same device that runs the control functions, which is the case depicted in Fig.~\ref{fig:scheme}.

\fakeparagraph{The early-sampling PETC schedulability problem} Consider a network containing $p$ control-loops \eqref{eq:plant} and $C < p$ communication channels with channel occupancy time $h$. Our main goal is to determine whether there exists a strategy that, at every time $t \in h\N$, given the available sampled states $\hat\xv_i(t_{i,k}), \forall i,k$ such that $t_{i,k} < t$, determines which (if any) loops must send their samples to the controller. The number of loops sending their samples must be no greater than $c$, and for each loop $i$, $t \leq t^{\mathrm{trigger}}_{i, k+1}$ must hold; that is, no controller can miss its \emph{deadline} $t^{\mathrm{trigger}}_{i, k+1}$. If a scheduler can be found, we also want to retrieve one such scheduling strategy for real-time implementation.

For simplicity, we assume for the rest of this paper that the time units are selected such that $h = 1$.

\subsection{PETC traffic models as finite-state transition systems}

The problem described above can be seen as a safety control synthesis problem for a hybrid system, which is in general undecidable \cite{alur1995algorithmic, henzinger1995undecidability}. To deal with decidable problems, the control loops $i$ have been abstracted as timed-game automata (TGA) in \cite{kolarijani2015traffic}, and later as regular transition systems in \cite{gleizer2020scalable}, by assuming the same discrete nature of sampling instants as we assume here. For details on how to construct such abstractions, see \cite{gleizer2020scalable, gleizer2021hscc} for linear systems without disturbance, and \cite{delimpaltadakis2021traffic} for general perturbed nonlinear systems. In these abstractions, each state $q$ is a different region $\Rs_q \subset \R^n$ and associated an interval of possible inter-sample times $\{\tau^\mathrm{low}_q, \tau^\mathrm{low}_q + 1, ... \tau^\mathrm{high}_q\}$ at which a trigger can occur. The scheduler can choose to sample earlier than $\tau^\mathrm{low}_q,$ or sample during the aforementioned interval as long as a trigger has not yet occurred. From each state $q$, the set of possible regions reached depends on the chosen inter-sample time $\tau$, regardless of whether the sample is determined by the scheduler or the triggering condition. Hence, the abstraction process outputs a set of transitions $\Delta \subset Q \times T \times Q,$ where $(q,c,q') \in \Delta$ means that $q' \in Q$ can be reached from $q \in Q$ if the inter-sample time is $\tau \in T$. From this, we derive the following definition of PETC traffic model:
\begin{definition}[PETC traffic model]\label{def:trafficmodel} A finite PETC traffic model with scheduler actions is the transition system $\Ss_{\mathrm{PETC}} \coloneqq (X,X_0, \allowbreak \{\texttt{w, s}\}, \delta_\textrm{wait} \cup \delta_\textrm{sched} \cup \delta_\textrm{trigger}, H)$ where
\begin{itemize}
    \item $X = \{(q,c) \mid q \in Q, c \in \{1,2,...,\tau^\mathrm{high}_q\}\},$
    \item $\delta_\textrm{wait} = \{(q,c),\texttt{w},(q,c+1) \mid (q,c) \in X \text{ and } c < \tau^\mathrm{high}_q\},$
    \item $\delta_\textrm{sched} = \{(q,c),\texttt{s},(q',0) \mid (q,c) \in X \text{ and } (q,c,q') \in \Delta\},$
    \item $\delta_\textrm{trigger} = \{(q,c),\texttt{w},(q',0) \mid (q,c) \in X \text{ and } c \geq \tau^\mathrm{low}_q$ \\ $\text{ and } (q,c,q') \in \Delta\},$
    \item $H(q,c) = \texttt{T}$ if $c=0$, or $\texttt{W}$ otherwise.
\end{itemize}
\end{definition}
The actions {\tt w} (for wait) and {\tt s} (for sample) are the scheduler actions; as the spontaneous trigger of a given loop is out of the control of the scheduler, these transitions are considered (adversarial) nondeterminism for the scheduler. This is why the set $\delta_\textrm{trigger}$ is a set of sampling transitions, but they occur when the action {\tt wait} is chosen. The output $T$ represents when a transmission has just occurred, while $W$ means that the loop waited. The initial state depends on the particularities of the scheduling problem and will be discussed later.

Our running example, Fig.~\ref{fig:twexample}, depicts a simple PETC traffic model with only two regions. This example contains only two regions $\Rs_0$ and $\Rs_1$, mapped into $q_0$ and $q_1$, respectively, with $\tau^\mathrm{low}_{q_0} = \tau^\mathrm{high}_{q_0} = 2$, and $\tau^\mathrm{low}_{q_1} = 3$ and $\tau^\mathrm{high}_{q_1} = 4$. The states $(q_1,3)$ and $(q_1,4)$ represent the \emph{triggering phase} of place $q_1$: even if the scheduler decide to wait, the sampling can occur in any of these states.

\subsection{A general result on ETC scheduling}

Using the reduction in Section \ref{sec:theory}, a first general result can be derived for scheduling of PETC.

\begin{definition}[Reduced PETC traffic model]\label{def:reducedtraffic} A reduced PETC traffic model with scheduler actions is the transition system $\Ss'_{\mathrm{PETC}} \coloneqq (X',X_0 \cap X',\{\texttt{w, s}\}, \delta'_\textrm{wait} \cup \delta'_\textrm{sched}, H)$ where
\begin{itemize}
    \item $X' = \{(q,c) \mid q \in Q, c \in \{1,2,...,\tau^\mathrm{low}_q\}\},$
    \item $\delta'_\textrm{wait} = \{(q,c),\texttt{w},(q,c+1) \mid (q,c) \in X' \text{ and } c < \tau^\mathrm{low}_q\},$
    \item $\delta'_\textrm{sched} = \{(q,c),\texttt{s},(q',0) \mid (q,c) \in X' \text{ and } (q,c,q') \in \Delta\},$
    \item $H(q,c) = \texttt{T}$ if $c=0$, or $\texttt{W}$ otherwise.
\end{itemize}
\end{definition}

The difference between Def.~\ref{def:reducedtraffic} and Def.~\ref{def:trafficmodel} is that, in the former, the sampling always happens at most at $\tau^\mathrm{low}_q$ for every $q$, and that this point in time it is a scheduled sampling. In other words, there is no event-based sampling anymore, but the scheduler may decide to sample at the first moment in which it knows that an event trigger could occur. This is very similar in spirit to \emph{self-triggered control} (STC, see \cite{anta2008self, mazo2010iss}), where the controller chooses the sampling time by predicting a worst-case situation in which the event-triggered control would occur. Thus, Def.~\ref{def:reducedtraffic} is can also be regarded as a traffic model for STC systems, again allowing early sampling. Fig.~\ref{fig:twreduced} shows the reduced model from Fig.~\ref{fig:twexample}. The interesting fact is that these two approaches are equivalent from a schedulability perspective:
\begin{proposition}\label{prop:etcisthesameasstc}%
\!\!\footnote{See the proof in the Appendix.} %
The PETC traffic model from Def.~\ref{def:trafficmodel} and its reduced model from Def.~\ref{def:reducedtraffic} are alternating-simulation equivalent, provided $X_0 \subseteq X'.$
\end{proposition}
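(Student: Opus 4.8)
The plan is to exhibit explicit alternating simulation relations in both directions between $\Ss_{\mathrm{PETC}}$ (Def.~\ref{def:trafficmodel}) and $\Ss'_{\mathrm{PETC}}$ (Def.~\ref{def:reducedtraffic}), rather than appealing to the minimization algorithm. The natural candidate from $\Ss'_{\mathrm{PETC}}$ to $\Ss_{\mathrm{PETC}}$ is the identity embedding $R' = \{((q,c),(q,c)) \mid (q,c) \in X'\}$, since $X' \subseteq X$ and $\Ss'_{\mathrm{PETC}}$ is obtained by deleting the ``triggering-phase'' states $(q,c)$ with $\tau^\mathrm{low}_q < c \le \tau^\mathrm{high}_q$ together with all $\delta_\textrm{trigger}$-transitions, while keeping $\delta'_\textrm{wait} \subseteq \delta_\textrm{wait}$ and $\delta'_\textrm{sched} \subseteq \delta_\textrm{sched}$. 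For the reverse direction I would build a relation $R \subseteq X \times X'$ that sends every triggering-phase state $(q,c)$ with $c \ge \tau^\mathrm{low}_q$ back to $(q,\tau^\mathrm{low}_q)$, and is the identity on $X'$; intuitively, once the abstraction has reached the point where a spontaneous trigger is possible, the reduced model at $(q,\tau^\mathrm{low}_q)$ can still simulate whatever the scheduler could achieve, because \texttt{s} from $(q,\tau^\mathrm{low}_q)$ in $\Ss'_{\mathrm{PETC}}$ reaches exactly the same set $\{(q',0) \mid (q,\tau^\mathrm{low}_q,q') \in \Delta\}$ of outcomes, and a \texttt{w}-trigger from $(q,c)$ in $\Ss_{\mathrm{PETC}}$ lands in $\{(q',0) \mid (q,c,q') \in \Delta\}$.

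The key verification steps, carried out for each relation, are the three clauses of Def.~\ref{def:altsim}. Clause (ii) (output matching) is immediate: both relations only pair states with the same $(q,\cdot)$-region-discreteness of $c$ being zero or not, and $H$ depends only on whether $c=0$. Clause (i) (initial states) uses the hypothesis $X_0 \subseteq X'$: for $R'$, every initial state of $\Ss'_{\mathrm{PETC}}$ is $X_0 \cap X' = X_0$, which is matched by itself in $\Ss_{\mathrm{PETC}}$; for $R$, every initial state of $\Ss_{\mathrm{PETC}}$ lies in $X_0 \subseteq X'$, hence is a triggering-free state paired with itself. Clause (iii) is the substantive one. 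For $R'$ (direction $\Ss'_{\mathrm{PETC}} \preceq \Ss_{\mathrm{PETC}}$): given a controller action $u \in \{\texttt{w},\texttt{s}\}$ available at $(q,c) \in X'$, pick the same action; one must check that every $u$-successor in $\Ss_{\mathrm{PETC}}$ is matched. For $u = \texttt{s}$ this is exact since $\delta'_\textrm{sched}$ and $\delta_\textrm{sched}$ agree on $X'$. For $u = \texttt{w}$ there is a subtlety: if $c = \tau^\mathrm{low}_q$ then in $\Ss_{\mathrm{PETC}}$ the action \texttt{w} may trigger (successors $(q',0)$) or wait (successor $(q,c+1) \notin X'$), whereas in $\Ss'_{\mathrm{PETC}}$ we might not even have \texttt{w} available at $(q,\tau^\mathrm{low}_q)$ (since $\delta'_\textrm{wait}$ requires $c < \tau^\mathrm{low}_q$). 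So at such states one must instead match the move by \texttt{s}, and verify $\Post^{\Ss_{\mathrm{PETC}}}((q,c),\texttt{w}) $'s triggering part is dominated — this is exactly where the direction of AS (controller of the bigger system picks the \emph{response}) is used.

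The main obstacle I anticipate is precisely this asymmetry at the boundary states $(q,\tau^\mathrm{low}_q)$ and in the triggering phase $\tau^\mathrm{low}_q < c \le \tau^\mathrm{high}_q$: one has to argue carefully that when the environment forces a spontaneous trigger in $\Ss_{\mathrm{PETC}}$, the reduced model's scheduled \texttt{s}-action produces a superset (for one direction) or matching set (for the other) of reachable next-regions, so that the existential ``$\exists x'_a \in \Post(x_a,u_a)$'' in clause (iii) can always be satisfied; and conversely that the reduced model never has a controller move the full model cannot dominate. This hinges on the structural fact that $\Delta$ relates $(q,c,q')$ for all $c$ in $q$'s inter-sample range to the same successor regions reachable via either sampling mechanism — a property that must be read off from Def.~\ref{def:trafficmodel}, where $\delta_\textrm{sched}$ and $\delta_\textrm{trigger}$ both quote the same $\Delta$. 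Once this ``successor-region independence of $c$'' observation is isolated as a small lemma, both AS relations should go through routinely. An alternative, cleaner route is to observe that $\Ss'_{\mathrm{PETC}}$ coincides with (or is bisimilar to) the output $\Ss_{\mathrm{out}}$ of the minimization algorithm applied to $\Ss_{\mathrm{PETC}}$ — the running example in Figs.~\ref{fig:twexample}--\ref{fig:S4} already suggests this — in which case Theorem~\ref{thm:correct} gives the result immediately; I would mention this as a remark but base the actual proof on the two explicit relations, since that argument is self-contained and does not require re-deriving that the algorithm's steps specialize to the triggering-phase collapse.
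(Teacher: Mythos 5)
There is a genuine gap in the relation you propose for the direction $\Ss_{\mathrm{PETC}} \preceq_{AS} \Ss'_{\mathrm{PETC}}$. Your $R$ sends every triggering-phase state $(q,c)$ with $c > \tau^\mathrm{low}_q$ to $(q,\tau^\mathrm{low}_q)$, and is the identity on $X'$. Condition (iii) of Def.~\ref{def:altsim} applied to such a pair forces you to match the controller move $\texttt{s}$ at $(q,c)$ by $\texttt{s}$ at $(q,\tau^\mathrm{low}_q)$, and then every state in $\Post^{\Ss'}((q,\tau^\mathrm{low}_q),\texttt{s}) = \{(q',0) \mid (q,\tau^\mathrm{low}_q,q')\in\Delta\}$ must be $R$-related to some state in $\Post^{\Ss}((q,c),\texttt{s}) = \{(q'',0) \mid (q,c,q'')\in\Delta\}$; since $R$ is the identity on those post-states, this requires $\{q' \mid (q,\tau^\mathrm{low}_q,q')\in\Delta\} \subseteq \{q'' \mid (q,c,q'')\in\Delta\}$. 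That is exactly the ``successor-region independence of $c$'' you flag as the needed lemma, but it is not a property of these abstractions: the paper explicitly states that the set of regions reached from $q$ \emph{depends on the chosen inter-sample time}, and $\delta_\textrm{sched}$ and $\delta_\textrm{trigger}$ only coincide for a \emph{fixed} $c$. So the collapsing relation can violate clause (iii), and the lemma you would isolate is false in general.

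The fix is that an ASR need not be total on the left: clauses (ii)--(iii) quantify only over pairs in $R$, and clause (i) only concerns initial states, which by hypothesis lie in $X_0 \subseteq X'$. Simply take $R$ to be the identity on $X'$ (viewed as a subset of $X \times X'$) and leave the triggering-phase states unrelated; the only nontrivial case is then $x=(q,\tau^\mathrm{low}_q)$, where the move $\texttt{w}$ of $\Ss_{\mathrm{PETC}}$ is matched by $\texttt{s}$ of $\Ss'_{\mathrm{PETC}}$ using the fact that $\Post(x,\texttt{s}) \subseteq \Post(x,\texttt{w})$ via $\delta_\textrm{trigger}$ — the observation you correctly make, though you attach it to the wrong direction (your $R'$, from $\Ss'_{\mathrm{PETC}}$ to $\Ss_{\mathrm{PETC}}$, has no such subtlety since $\texttt{w}$ is simply unavailable at $(q,\tau^\mathrm{low}_q)$ in the reduced model and hence never needs matching). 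The route you relegate to a remark is essentially the paper's actual proof: $\Post(x,\texttt{s}) \subset \Post(x,\texttt{w})$ makes $\texttt{w}$ an irrational action at $(q,\tau^\mathrm{low}_q)$, so it can be removed by Lemma~\ref{lem:step2correctness}, after which all $(q,c)$ with $c>\tau^\mathrm{low}_q$ become unreachable and are removed by Lemma~\ref{lem:step4correctness}, yielding exactly Def.~\ref{def:reducedtraffic} (no claim of full minimality is needed or made).
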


The interpretation of this result is simple: the choice of waiting at time $\tau^\mathrm{low}_q$ has no advantage over sampling, because in the worst case the environment may choose to sample anyway. Hence, from a schedulability perspective, \emph{ETC brings no benefit over a STC-like sampling strategy that chooses to trigger on the earliest ETC triggering time}. Naturally, this general result does not give the minimal system, which depends on the structure of the particular abstraction, as will be illustrated in the next section.

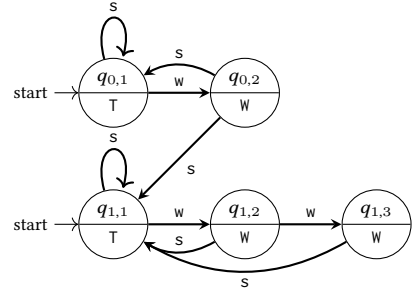
\begin{figure}
    \centering
    \footnotesize
      \centering
      \begin{tikzpicture} [node distance = 1.75cm, on grid, auto]
        \node (q00) [state with output, initial] {$q_{0,1}$ \nodepart{lower} \texttt{T}};
        \node (q01) [state with output, right = of q00] {
            $q_{0,2}$ \nodepart{lower} \texttt{W}    };
        \node (q10) [state with output, initial, below = of q00] {$q_{1,1}$ \nodepart{lower} \texttt{T}};
        \node (q11) [state with output, right = of q10] {
            $q_{1,2}$ \nodepart{lower} \texttt{W}};
        \node (q12) [state with output, right = of q11] {
            $q_{1,3}$ \nodepart{lower} \texttt{W}};
            
        \path [-stealth, thick]
            (q00) edge node {\texttt{w}}   (q01)
            (q00) edge [loop above]  node {\texttt{s}} (q00)
            (q01) edge [bend right] node[above] {\texttt{s}}   (q00)
            (q01) edge node {\texttt{s}}   (q10)
            (q10) edge node {\texttt{w}}   (q11)
            (q10) edge [loop above]  node {\texttt{s}} (q10)
            (q11) edge [bend left] node[above] {\texttt{s}}   (q10)
            (q11) edge node {\texttt{w}}   (q12)
            (q12) edge [bend left] node {\texttt{s}}   (q10);
    \end{tikzpicture}
    \caption{Reduced PETC traffic model of Fig.~\ref{fig:twexample}.}
    \label{fig:twreduced}
\end{figure}

\subsection{Numerical example}\label{ssec:num}

Consider $p$ two-dimensional open-loop-unstable linear systems, borrowed from \cite{tabuada2007event}, of the form \eqref{eq:plant} where
\begin{equation}
    \begin{aligned}
    f_i(\xv_i(t), \uv_i(t), \wv_i(t)) &\coloneqq \begin{bmatrix} 0 & 1 \\ -2 & 3 \end{bmatrix}\xv_i(t) = \begin{bmatrix} 0 \\ 1 \end{bmatrix} \uv_i(t), \\
    g_i(\hat\xv_i(t)) &\coloneqq \begin{bmatrix}1 & -4\end{bmatrix}\hat\xv_i(t).
    \end{aligned}
\end{equation}
The triggering condition is \eqref{eq:triggeringtimes} with $\sigma = 0.7.$ Since all systems have the same model, only one traffic abstraction is needed. We use the abstraction method in \cite{gleizer2021cdc}, where a parameter $l \in \N$ is given to define the depth of the abstraction process: the higher $l$ is, the tighter the simulation relation is w.r.t.~ the original infinite system. Denote by $\Delta_l \subseteq Q_l \times T \times Q_l$ the transition relation from the abstraction using depth $l$, and the resulting PETC traffic model (Def.~\ref{def:trafficmodel}) by $\Ss_l$.

We consider the problem of scheduling on a single channel. From a practical perspective, the scheduling problem requires an initialization phase. When the systems are connected to the network, their states will only be known to the scheduler (and the controller) after the first sample. Because there is only one channel, the timing of the initial transmissions have to be decided by the scheduler, and this timing must be bounded to keep the plant's state under a reasonable distance from its initial value. Let  $T_0$ be this time bound (in number of steps). To model this initialization phase, we append to $\Ss_l$ the states $i_1, i_2, ... i_{T_0}$ and transitions $i_k \edge{\texttt{w}} i_{k+1}$ for all $k < T_0$ and $i_k \edge{\texttt{s}} (q,0)$ for all $k \leq T_0$ and $q \in Q_l$. The initial set is simply $X_0 = \{i_1\}.$ In this example, $T_0$ was set to 10.

\begin{table}\caption{\label{tab:traffic_stats} Size of abstractions before and after minimization, and CPU time to minimize the system (in all cases $|X_0| = 1$).}
	\begin{center}
		\begin{tabular}{cc|cc|cc|cc|c}
		    & & \multicolumn{2}{c}{Original} & \multicolumn{2}{c}{Quotient} & \multicolumn{2}{c}{Minimal} & CPU\\
			 & & $|X|$ & $|\delta|$ & $|X|$ & $|\delta|$ & $|X|$ & $|\delta|$ & time \\
			\hline
			\multirow{3}*{$l$} & 1 & 153 & 832 & 118 & 571 & \bf 11 & \bf 21 & 657 ms \\
			& 2 & 518 & 1879 & 405 & 1566 & \bf 11 & \bf 21 & 8.24 s\\
			& 3 & 683 & 2412 & 604 & 2262 & 587 & 2126 & 15 s\\
			\hline
		\end{tabular}
	\end{center}
\end{table}

We implemented our minimization algorithm in Python and performed the minimization on $\Ss_l, l = 1,2,3.$ The statistics of the traffic model before and after minimization modulo ASE are displayed in Table \ref{tab:traffic_stats}. The additional reduction w.r.t.~only step 1 (quotient system) is evident in all cases. The most interesting phenomenon is the striking reduction of the traffic models for $l=1,2$ to a system with only 11 states and 21 transitions, which is depicted in Fig.~\ref{fig:reallysmall}. Not only this is a massive reduction which greatly simplifies the scheduling problem, it also informs the user that refining the traffic model by increasing $l$ from 1 to 2 is \emph{irrelevant when it comes to schedulability}. As Fig.~\ref{fig:reallysmall} suggests, these traffic models reduce to a single task with recurring deadline of five steps, after the initial phase. Only with $l=3$ more complex behavior can be enforced by the scheduler, which becomes apparent by the fact that the minimization is not so impactfull: 14\% in states and 12\% in transitions. This is to be expected because the original systems we abstract are deterministic, and higher values of $l$ reduce the nondeterminism of the abstraction, giving less room for transition elimination in our algorithm. In all cases, the CPU times are within seconds, with an approximately quadratic dependence on the size of the original system. It is worth noting that our Python implementation uses the naive fixed-point algorithm to get the MAS relation, and this step dominated the CPU time of the reduction. Since the times were satisfactory, no performance optimizations were attempted. 

\begin{figure}
    \centering
    \footnotesize
    \begin{tikzpicture} [->, >=stealth', node distance = 0.7cm, on grid, auto, every state/.style={inner sep=.4mm, minimum size=0, semithick}]
        \node (q00) [state, initial] {\texttt{W}};
        \node (q01) [state, right = of q00] {\texttt{W}};
        \node (q02) [state, right = of q01] {\texttt{W}};
        \node (q03) [state, right = of q02] {\texttt{W}};
        \node (q04) [state, right = of q03] {\texttt{W}};
        \node (q05) [state, right = of q04] {\texttt{W}};
        \node (q06) [state, right = of q05] {\texttt{W}};
        \node (q07) [state, right = of q06] {\texttt{W}};
        \node (q08) [state, right = of q07] {\texttt{W}};
        \node (q09) [state, right = of q08] {\texttt{W}};
        \node (q10) [state, right = of q09] {\texttt{T}};
        
        \path [-stealth, dotted, semithick]
            (q00) edge (q01)
            (q01) edge (q02)
            (q02) edge (q03)
            (q03) edge (q04)
            (q04) edge (q05)
            (q05) edge (q06)
            (q06) edge (q07)
            (q07) edge (q08)
            (q08) edge (q09)
            (q09) edge (q10)
            (q10) edge[bend left] (q06);
            
        \path [-stealth, semithick]
            (q00) edge[bend left] (q10)
            (q01) edge[bend left] (q10)
            (q02) edge[bend left] (q10)
            (q03) edge[bend left] (q10)
            (q04) edge[bend left] (q10)
            (q05) edge[bend left] (q10)
            (q06) edge[bend left] (q10)
            (q07) edge[bend left] (q10)
            (q08) edge[bend left] (q10)
            (q09) edge[bend left] (q10)
            (q10) edge[loop above] (q10);
            
    \end{tikzpicture}
    \caption{Minimized system for the numerical example, $l \in \{1, 2\}$. State labels are their outputs, dashed lines are \texttt{w} actions and full lines are \texttt{s} actions.
    \label{fig:reallysmall}}
\end{figure}
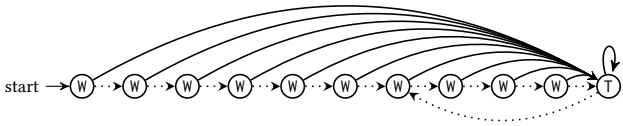

\begin{table}\caption{\label{tab:sched_stats} Scheduler size and CPU time using BDDs.}
	\begin{center}
		\begin{tabular}{cc|cc|cc}
		    & & \multicolumn{2}{c}{Original} & \multicolumn{2}{c}{Minimal} \\
			 $p$ & $l$ & Schedule size & CPU time & Schedule size & CPU time \\
			\hline
			2 & 1 & 3 kB & 3 ms & 894 B & 498 \textmu s \\
			3 & 1 & 7.6 kB & 8 ms & 1.9 kB & 783 \textmu s \\
			4 & 1 & 19 kB & 16 ms & 4.2 kB & 1.4 ms \\
			5 & 1 & 47 kB & 36 ms & 9.5 kB & 2.5 ms \\
			6 & 1 & None & \bf 7.35 s & None & 101 ms \\
			6 & 2 & None & \bf 15.4 min & None & \bf 84 ms \\
			6 & 3 & None & 35.5 min & None & 28.1 min \\
			\hline
		\end{tabular}
	\end{center}
\end{table}

Because of the refinement properties of the abstractions $\Ss_l$ (namely $\Ss_{l+1} \preceq_{AS} \Ss_l \preceq \Ss_{l+1}$), scheduling with these abstractions is sound but not complete: if $p$ ETC plants are detected to be unschedulable for $l,$ one may still find a schedule using a higher value of $l$. Thus, we employed the following scenario: first, set $l=1$ and $p=2$ and increase $p$ until the systems are unschedulable; then, increase $l$ and try again. We used the ETCetera tool \cite{etcetera} to solve the scheduling problem, which has the functionality to create the traffic models $\Ss_l$, perform the parallel composition, and solve the safety game: always avoid a state whose output contains more than one $\mathtt{T}$. Our first attempt used a Python implementation of the composition and safety game solution, where the transitions are encoded with dictionaries. Without minimization and with $p=2$, the scheduling problem took only 801 ms to be concluded, a number close to the 657 ms taken to minimize each system; this is expected, given the quadratic complexity of the minimization algorithm. However, with only $p=3$ \emph{the scheduling problem without reduction crashed due to memory overflow.}%
\footnote{The experiments were performed in a Intel(R) Xeon(R) W-2145 CPU @ 3.70GHz with 31 GB RAM.}%
\emph{After performing the minimization, we were able to compute a scheduler for $p=5$}, a process that took 28.7 min to conclude. With $p=6$, memory overflow also occurred with the minimal systems. Our second attempt to solve the scheduling problem used BDDs to encode the transition systems. Table \ref{tab:sched_stats} summarizes the results of this experiment. As expected, for all cases in $l \in \{1,2\}$ the problem was solved significantly faster with the minimized systems. The difference is much more significant in the non-schedulable cases, which is to be expected because it often requires more iterations in the fixed-point algorithm to detect that no schedule is viable. The difference is particularly massive for $l=2$, owing to the immense reduction of the system dimensions in this case. For the case with $l=3$ the time reduction was not as significant as in the aforementioned cases, which is in par with the smaller system size reduction that was obtained in this case.

\section{Conclusion and Future Work}

We have revisited the notion of alternating simulation equivalence, and argued about the benefits it can bring for size reduction of finite transition systems in the context of controller synthesis. An algorithm was devised to produce minimal abstractions modulo alternating simulation equivalence. The applicability of these theoretical developments was then illustrated in the context of scheduling, providing interesting insights for the analysis of schedulability of event triggered systems.

This work opens the door to several further investigations, in particular: (i) extending the ASE notion to weighted transition systems to produce abstractions preserving quantitative properties; (ii) extensions of these same ideas to timed games; (iii) designing on-the-fly versions of the proposed reduction algorithm; and (iv) implementing symbolically the abstraction algorithm employing binary decision diagrams.

\begin{acks}
	This work is supported by the \grantsponsor{GSERC}{European Research Council}{https://erc.europa.eu/} through the SENTIENT project, Grant No.~\grantnum[https://cordis.europa.eu/project/id/755953]{GSERC}{ERC-2017-STG \#755953}.
\end{acks}

\bibliographystyle{ieeetr} 
\bibliography{mybib} 

\appendix

\newpage
\section{Correctness and reduction proofs}

\begin{proof}[Proof of Lemma \ref{lem:step1correctness}]
We first observe that if any pair of states $(p,q) \in$ $\MAS$ then for all states $p'$ equivalent to $p$ and $q'$ equivalent to $q$ modulo AS, we have that $(p',q') \in \MAS$ (by transitivity of $\MAS$).
Hence $\forall (p,q) \in$ $\MAS$ $\iff$ $\forall p' \in \parts(p), q' \in \parts(q) (p',q') \in \MAS$, where $\parts:X \mapsto X_1$ is the function that maps every state to its corresponding partition. 

To prove the lemma we show that (A) $R_1 \coloneqq \{(p, \parts(q)) \mid (p,q) \in \MAS\}$ is an ASR from $\Ss$ to $\Ss_1$ and (B) $R_2 \coloneqq \{(\parts(p), q) \mid (p,q) \in \MAS\}$ is an ASR from $\Ss_1$ to $\Ss$.

For (A) to be true, $R_1$ must satisfy requirements (i), (ii) and (iii) from Def.~\ref{def:altsim}. Note that $R_1$ contains the set $\{(p, \parts(p) \mid p \in X\}.$ 
By construction, for every state $q \in X_0$, $\parts(q) \in X_{0,1}$ and for every $q \in X$, $H(q) = H_1(\parts(q))$, so requirements (i) and (ii) are satisfied. Assume that $R_1$ does not satisfy requirement (iii), i.e., $\psi \coloneqq \exists q \in X.~ \exists u \in U(q).~ \varphi(q, u)$ where  $\varphi(q,u) \coloneqq \forall u_1 \in U_1(\parts(q)).~ \exists Q' \in \Post^{\Ss_1}(\parts(q),u_1).~ \forall q'\in\Post^{\Ss}(q,u).~ \allowbreak (q',Q') \notin R_1$. 
Now, for any such $Q'$, note that by  construction of $\mathcal{S}_1$, $(\parts(q),$ $u_1,Q') \in \delta_1$ implies that $\exists p \in \parts(q), p' \in Q'.~(p,u_1,p') \in \delta$. However, notice that $(q',Q') \notin R_1 \implies \forall p' \in Q'\,(q',p') \notin \MAS$. Therefore, $\varphi(q, u)$ implies $\varphi'(q,u) \coloneqq \forall u_1 \in \allowbreak U_1(\parts(q)).~ \exists Q' \in \Post^{\Ss_1}(\parts(q),u_1), \exists p' \in Q'.\, \forall q'\in\\\Post^\Ss(q,u).$ $(q',p') \notin \MAS.$

Now let us inspect the set $U_1(\parts(q))$. By definition of $\delta_1,$ $U_1(\parts(q)) \coloneqq \{u \mid u \in U(p), (q,p) \in \MAS, (p,q) \in \MAS\}$ 
Hence, $\forall u_1 \in U_1(\parts(q))\,\exists Q' \in \Post^{\Ss_1}(\parts(q),u_1), \exists p' \in Q'$ can be replaced by $\forall p.(p,q)(q,p)\in\MAS\,\forall u_1 \in U(p).\,\exists p' \in \Post^\Ss(p,u_1)$. Applying this and $\varphi'$ we obtain 
$\varphi'' \coloneqq \forall p$ such that $(p,q),(q,p)\in\MAS.~\forall u_1 \in U(p) .~ \exists p' \in \Post^\Ss(p,u_1) \forall q'\in\Post^\Ss(q,u).~ (q',p') \notin \MAS.$ The formula $\psi$ therefore implies $\exists q \in X.~ \exists u \in U(q).\forall p.\,(p,q),(q,p)\in\MAS.~\forall u_1 \in U(p) .~ \exists p' \in \Post^\Ss(p,u_1).~ \forall q'\in\Post^\Ss(q,u)$ $(q',p') \notin \MAS.$ Finally, we particularize $\forall p$ to  $p = q$ (which is a sound step as $\MAS$ has all reflexive entries,  i.e. $(q,q) \in MAS$), changing $\parts(q)$ to $q$ (this is implied by transitivity of $\leAS$ and $\equivAS$)  and rearrange the initial existential quantifiers to get $\exists q.~ \exists u \in U(q).\,\forall u_1 \in U(q) .~ \exists p' \in \Post^{\Ss}(q,u_1).~ \forall q'\in\Post^{\Ss}(q,u).$ \\$(q',p') \notin \MAS.$ This implies that \MAS (that contains all the reflexive pairs) fails to satisfy requirement (iii) of for system $\Ss$ to itself, which is a contradiction.

For (B), conditions (i) and (ii) also hold trivially. We use contradiction again for condition (iii): suppose $\phi \coloneqq \exists (P,q) \in R_2.\,\exists u_1\in U_1(P)\,\forall u \in U(q)\,\exists q' \in \Post^\Ss(q)\, \underline{\forall P' \in \Post^{\Ss_1}(P,u_1).\,(P',q')}$ $\underline{\notin R_2}.$ By construction of $\delta_1,$ the underlined subformula is equivalent to $\forall p^* \in P.\,u_1\in U(p^*)\, \forall p' \in \Post^{\Ss}(p^*, u_1).\,(\parts(p'),q') \notin R_2.$ By construction of $R_2$, and because all states in $\parts(p')$ are equivalent, $(\parts(p'),q') \notin R_2 \implies (p',q') \notin \MAS$. Hence, the underlined subformula implies $\forall p^* \in P.\,u_1\in U(p^*)\, \forall p' \in \Post^{\Ss}(p^*, u_1).\,(p',q') \notin \MAS.$ Substituting in $\phi,$ it implies $\phi' \coloneqq \exists (P,q) \in R_2.\,\exists u_1\in U_1(P)\,\forall u \in U(q)\,\exists q' \in \Post^\Ss(q)\, \forall p^* \in P.\,u_1\in U(p^*)\, \forall p' \in \Post^{\Ss}(p^*, u_1).\,(p',q') \notin \MAS.$

Now, note that by construction of $R_2$, $(P,q) \in R_2 \implies \exists p \in P.\, (p,q) \in \MAS.$ Using this fact and particularizing $\forall p^*.\,u_1\in U(p^*)$ by $p,$ (which is sound because $u_1 \in U(p)$) $\phi'$ implies $\exists (p,q) \in \MAS\,\exists u_1\in U_1(P)\,\forall u \in U(q)\,\exists q' \in \Post^\Ss(q)\, \forall p' \in \Post^{\Ss}(p, u_1)$ $.\,(p',q') \notin \MAS.$ This is a contradiction to the fact that $\MAS$ is a maximal ASR from $\Ss$ to itself.%
\end{proof}

\begin{proof}[Proof of Lemma \ref{lem:step1mas}]
(1) \textbf{$\MAS_1$ is an AS}: We first show that $\MAS_1$ is indeed an AS from $S_1$ to itself by showing that $\MAS_1$ satisfies all 3 requirements to be an AS from $S_1$ to itself. By definition, $\forall P \in X_1$, $(P,P) \in \MAS_1$. Hence, Requirement 1, trivially holds. 
As \MAS is an AS, $\forall (p,q) \in \MAS. H(p) = H(q)$. Moreover, by definition of $\mathcal{S}_1$, $\forall P \in X_1. \forall p\in P. H(p) = H'(P)$.
By definition (III), $(P,Q) \in \MAS_1$ implies $\forall p \in P,q \in Q. (p,q) \in \MAS_1$, which implies $H'(P)=H'(Q)$ (Requirement 2). 
As $\MAS$ is an AS from $\mathcal{S}$ to itself. Hence, 
$\forall (p,q) \in \MAS. \forall u_1 \in U(p). \exists u_2 \in U(q). \forall (q,u_2, q') \in \delta. \exists (p,u_1, p') \in \delta. (p',q') \in \MAS$.
This along with (I), (II) and (III) implies
$\forall (p,q). (\parts(p),\parts(q))\in \MAS_1 \Rightarrow [\forall u_1 \in U_1(\parts(p)). \exists u_2 \in U_1(\parts(q)) \forall q'. (\parts(q),u_2, \parts(q')) \allowbreak \in \delta \exists p'. (\parts(p),u_1, \parts(p')) \in \delta_1. (\parts(p'),\parts(q')) \in \MAS_1$ which is equivalent to requirement 3.
\\$\MAS_1$ \textbf{is Maximal}: Suppose $\MAS_1$ is not maximal. Hence, $\exists P,Q \in X_1$ such  that $\mathcal{S}_1(P) \leAS \mathcal{S}_1(Q)$ but $(P,Q) \notin \MAS_1$. Note that  $(P,Q) \notin \MAS_1$  implies $\exists p \in P. \exists q \in Q (p,q) \notin \MAS$. However, $\mathcal{S}_1(P) \leAS \mathcal{S}(p)$ and $\mathcal{S}_1(Q)\leAS \mathcal{S}(q)$. By transitivity of $\leAS$, $\mathcal{S}(p) \leAS \mathcal{S}(q)$. This implies, $\MAS$ is not a maximal AS from $\mathcal{S}$ to itself which is a contradiction.
\\(2) As $\MAS_1$ is a maximal alternating simulation relation from $\mathcal{S}_1$ to itself. $\forall P,Q \in X_1 \mathcal{S_1 (P)} \leAS \mathcal{S_1(Q)}$ implies $(P,Q) \in \MAS_1$.$\leAS$ is reflexive and transitive which implies $\MAS_1$ is reflexive and transitive. Suppose $\MAS_1$ is not anti-symmetric. There exists a distinct pair of states $P, Q \in X_1$ such that $(P,Q) \in \MAS_1 \wedge (Q,P) \in \MAS_1$. This, along with the definition of $\MAS_1$ implies, $\exists p \in P. \exists q \in Q. (p,q) \in \MAS \wedge (q,p) \in \MAS$. But this implies, $p$ and $q$ lie in the same partition. Hence, $P=Q$, which is a contradiction.
\end{proof}

\begin{proof}[Proof of Prop.~\ref{prop:c1}]
If $\MAS$ is not anti-symmetric, it has at least one pair of states $p,q$ such that $(p,q)\in \MAS$ and $(q,p)\in \MAS$. While creating the quotient, these states are combined together reducing the number of states by at least 1. Moreover, we never add a new transition in this reduction.
\end{proof}

\begin{proof}[Proof of Lemma \ref{lem:step2correctness}]
To prove (A) the identity map $\mathcal{I}: X \mapsto X$ is an ASR from $\mathcal{S}_2$ to $\mathcal{S}$ and (B) $\MAS$ is an ASR from $\mathcal{S}$ to $\mathcal{S}_2$.

(A) Requirements (i) and (ii) are trivially satisfied as $X_0 = X_{0,2}$, and $H = H_2$. To prove that requirement (iii). Suppose it does not. Then, $\exists q.~\exists u \in U_2(q).~ \forall u' \in U(q). \exists q'' \in \Post^{\Ss}(q,u'). \forall (q' \in \Post^{\Ss_1}(q,u)$  $q' \ne q''$ holds. By definition of $\delta_2$, for all $q \in X$, $U(q) \supseteq U_2(q)$ and $\forall u \in U_2(q).q' \in X$ it holds that $\Post^{\Ss_2}(q,u) = \Post^{\Ss_2}(q,u)$. Just substituting $u'=u$ (in which case we can substitute $\Post^{\Ss_1} = \Post^{\Ss}$) (iii) implies $\Is$ is not an AS from $S_2$ to itself which is a contradiction. 

(B) Requirements (i) and (ii) are trivially satisfied as $X_0 = X_{0,2}$, and $H = H_2$. To show that requirement (iii) is satisfied, we need to show that $\forall (p,q) ~\in \MAS.$  $\forall u_1 \in U(P). ~\exists u_2 \in U_2(q).$ $\forall (q,u_2,q') \in \delta_2. ~\exists (p,u_1, p') \in \delta. ~(p',q')\in \MAS$.
Note that $(q,u_2,q') \in \delta_2$ condition is within the scope of $\exists u_2. u_2 \in U_2(q)$. By construction, $(q,u_2,q') \in \delta_2$ is equivalent to $(q,u_2,q') \in \delta$ when $u_2 \in U_2(q)$. Hence, we need to show $\forall (p,q).~\in \MAS.$ $\forall u_1 \in U(P).$ $\exists u_2 \in U_2(q).~ \forall (q,u_2,q')\in \delta.$ $\exists (p,u_1, p') \in \delta. (p',q')\in \MAS$. In other words, we need to show $\kappa \coloneqq \forall(p,q) \in \MAS.$ $\forall u_1 \in U(P).$ $\exists u_2 \in U_2(q). (p,u_1)\sqsubseteq_{\mathcal{S}} (q,u_2)$.
We know that $\MAS$ is an ASR from $\mathcal{S}$ to itself. Hence, it satisfies $\phi \coloneqq \forall (p,q).~ \in \MAS \forall u_1 \in U(p).~ \exists u_2 \in U(q).~\wedge  (p,u_1)\sqsubseteq{\mathcal{S}} (q,u_2)$. Now observe that
$u_2 \in U(q) \iff u_2 \in U_2(q) \vee u_2\in U(q) \setminus U_2(q)$. By using this identity in the formula $\phi$, we get
$\gamma = \forall (p,q) \in \MAS. \forall u_1 \in U(p). \exists u_2.  [u_2 \in U_2(q) \wedge  (p,u_1)\sqsubseteq_{\mathcal{S}} (q,u_2)] \vee \underline{[u_2\in U(q) \setminus U_2(q) \wedge (p,u_1)\sqsubseteq_{\mathcal{S}} (q,u_2)]}$. We analyze the underlined sub formula. Note that, by construction of $\delta_2$, $u_2 \in U(q) \setminus U_2 (q)$ implies $\exists u_2'\in U_2(q) (q,u_2) \sqsubseteq_{\mathcal{S}} (q,u'_2)$. Using this implication in the underlined formula, we get $\varphi(p,u_1,q,u_2) = [\exists u_2'\in U_2(q).~(q,u_2) \sqsubseteq_{\mathcal{S}} (q,u'_2) \wedge (p,u_1)\sqsubseteq_{\mathcal{S}} (q,u_2)]$. Recall that $\sqsubseteq_{\mathcal{S}}$ is a transitive relation. Hence, $ \varphi \equiv \varphi'(p,u_1,q) = [\exists u_2'\in U_2(q). (p,u_1)\sqsubseteq_{\mathcal{S}} (q,u'_2)]$. Note that $\varphi'$ no longer contains $u_2$ as free variable. Hence, substituting the underlined formula with $\varphi'(p,u_1,q)$ we get:
$\forall (p,q) \in \MAS.~ \forall u_1 \in U(p).~ \exists u_2.~  [u_2 \in U_2(q) \wedge  (p,u_1)\sqsubseteq_{\mathcal{S}} (q,u_2)] \vee \underline{[\exists u_2'. u_2'\in U_2(q) (p,u_1)\sqsubseteq_{\mathcal{S}} (q,u'_2)]}$. Renaming $u_2'$ as $u_2$, we get $\kappa$. Hence, $\phi \implies \kappa$, which proves the result.
\end{proof}

\begin{proof}[Proof of Prop.~\ref{prop:c2}]
If there exists a state $q \in X$ and distinct $u, u' \in U$ such that $(q,u) \preceq{\MAS} (q,u')$ then either $u$ is an irrational action or, both $u$ and $u'$ are equally rational at state $q$. In both the cases the transitions outgoing from $q$ labelled $u'$ will be removed. Hence $|\delta'| < |\delta|$. Note that removing irrational and redundant actions do not introduce new irrational or redundant actions. As $\delta_2$ is constructed from $\delta$ by removing all the transitions corresponding to irrational and redundant actions the former contain only rational moves at every state.
\end{proof}

\begin{proof}[Proof of Lemma \ref{lem:step3correctness}]
Let $Q_{ys}$ and $\delta_{ys}$ be set of all the initial states and transitions, respectively, which are younger siblings. Hence, $X_{0,2} = X_0 \setminus Q_{ys}$ and $\delta_2 = \delta \setminus \delta_{ys}$.  We show (A)Identity map $\mathcal{I}:X \mapsto X$ is an AS from $\mathcal{S}$ to $\mathcal{S}_2$ (B)$\MAS$ is an AS from $\mathcal{S}_2$ to $\mathcal{S}$. 
(A) As $X_{0,2} \subseteq X_0$, Requirement 1 is trivially satisfied. The state set $X$ and the output map $H$ are same in both $\mathcal{S}_2$ and $\mathcal{S}$ implying satisfaction of requirement 2. Suppose requirement 3 is not satisfied. $\exists (p,p) \in \mathcal{I}.~\exists u_1 \in U(p). ~\forall u_2 \in U(p). ~\exists (p,u_2,p') \in \delta_2. ~\forall(p,u_1,p'') \in \delta.~ p' \ne p''$. But $\delta_2 \subseteq \delta$. This leads to a contradiction (as it implies that a transition exists in $\delta_2$ but not in $\delta \supseteq \delta_2$).

(B) We now show that $\MAS$ is an AS from $\mathcal{S}_2$ to $\mathcal{S}$. We need to show that for every initial state $q \in X_0$ there exists an initial state $q' \in X_0$ such that $(q',q) \in \MAS$. As $\MAS$ is reflexive, we have that for every initial state $q \in X_0\setminus Q_{ys}$ there exists an initial state $q \in X_0$ such that $q,q \in \MAS$. For $q \in Q_{ys}$ there exists a $q' \in X_0$ (the ``elder sibling'') such that $(q',q) \in \MAS$. Requirement 2 trivially holds as the set of states and output map is identical in both $\mathcal{S}_2$ and $\mathcal{S}$. 
For proving requirement 3, recall that $\MAS$ is an AS from $\mathcal{S}$ to itself. Hence, $\varphi = \forall (p,q) \in \MAS.~ \forall u_1 \in U(p). ~\exists u_2 \in U(q). ~\forall q'. ~(q,u_2,q') \in \delta \rightarrow~\exists p'. (p,u_1,p') \in \delta \wedge (p',q') \in \MAS$.
As $\delta_2 = \delta \setminus \delta_{ys} $, $\exists p'. ~(p,u_1,p') \in \delta$ is equivalent to $\{\exists p'. ~(p,u_1,p') \in \delta_{ys}\} \vee \{\exists p'. ~(p,u_1,p') \in \delta_2\}$. Thus, we get,
$\varphi' =  \forall (p,q) \in \MAS. ~\forall u_1 \in U(p). ~\exists u_2 \in U(q). ~\forall q'.~ (q,u_2,q') \in \delta \rightarrow [\{\exists p'. ~(p,u_1,p') \in \delta_2 \wedge (p',q') \in \MAS\} \vee   \{\exists p'. ~(p,u_1,p') \in \delta_{ys} \wedge (p',q') \in \MAS)\}]$. $(p,u_1,p') \in \delta_{ys} \wedge (p',q') \in \MAS$ implies(by existence of ``elder brother'' for $p,u_1,p'$) $\exists p''.~ (p,u_1,p'') \in \delta_2 \wedge (p,u_1,p') \in \delta_{ys} \wedge (p'',p') \wedge (p', q')$  implies(by transitivity of $\leAS$) $\{\exists p''. (p,u_1,p'') \in \delta_2 \wedge (p'',q') \}$. Finally substituting this implication in $\varphi'$ we get:
\\$\forall (p,q) \in \MAS.~ \forall u_1 \in U(p). ~\exists u_2 \in U(q). ~ \forall q'. (q,u_2,q') \in \delta. ~$ $[ \{ \exists p'. ~(p,u_1,p') \in \delta_2 \wedge (p',q') \in \MAS\} \vee $  \\$\underline{\{\exists p'. ~\exists p''. ~(p,u_1,p'') \in  \delta_2 \wedge (p'',q')\}}]$

Removing $\exists p'$ from the underlined formula as there is no reference to $p'$ in the formula within this quantifier. 
\begin{quote}
    $\forall (p,q) \in \MAS. \forall u_1 \in U(p). ~\exists u_2 \in U(q).~ \forall q'. ~(q,u_2,q') \in \delta \rightarrow [\{\exists p'. ~(p,u_1,p') \in \delta_2 \wedge (p',q') \in \MAS\} \vee   \underline{\{\exists p''.~ (p,u_1,p'') \in \delta_2 \wedge (p'',q') \in \MAS)\}}]$
\end{quote}

Renaming $p''$ to $p'$ in the underlined subformula and applying idempotence ($\phi \vee \phi \equiv \phi$), we get

\begin{quote}
    $\forall (p,q) \in \MAS.~ \forall u_1 \in U(p).~ \exists u_2 \in U(q).~ \forall q'.~ (q,u_2,q') \in \delta \rightarrow \exists p'.~ (p,u_1,p') \in \delta_2 \wedge (p',q') \in \MAS$
\end{quote}
which is the requirement 3.
\end{proof}

\begin{proof}[Proof of Prop.~\ref{prop:c3}]
If there exists a transition (or an initial state) that is a younger sibling, we remove that transition  from $\mathcal{S}$ (or make the state non-initial) reducing $\trsize(S)$ by at least 1. Note that we do not add or remove any states. Hence, the state size is not affected. By construction, we eliminate all the transitions or initial states which are younger siblings of another transition or initial state, respectively. Note that, removing a younger sibling does not add new younger siblings.
\end{proof}

\begin{proof}[Proof of Lemma \ref{lem:step4correctness}]
 We show that the identity map $\mathcal{I}: X_4 \mapsto X_4$ satisfies all the requirements for being an ASR from $\mathcal{S}$ to $\mathcal{S}_4$ and vice-versa (note that $\mathcal{I} = \{(x,x) \mid x \in X_4\}$ is a subset of both $X \times X_4$ and $X_4 \times X$, hence it is a valid relation in both directions).  Requirement (i) trivially holds (for both directions) as the initial state sets are the same. Requirement (ii) holds (for both directions) as the output map is the same and $\mathcal{I}$ is an identity function. Suppose requirement (iii) does not hold. That is, $\exists (p,p) \in \mathcal{I}. \exists u_1 \in U(p). \forall u_2 \in U_4(p). \exists (p,u_2,p') \in \delta_4. \forall (p,u_1,p'') \in \delta. p' \ne p''$. This is a contradiction since $\delta_4 \subseteq \delta$ (by construction) and the statement implies that there exists a transition in $\delta_4$ not present in $\delta$. For the inverse direction, assume again that requirement (iii) does not hold. That is, 
 $\exists (p,p) \in \mathcal{I}.~ \exists u_1 \in U_4(p).~ \forall u_2 \in U(p).~ \exists (p,u_2,p') \in \delta.~ \forall (p,u_1,p'') \in \delta_4.~ p' \ne p''$. If 
 $p' \in X_4$,  it leads to a contradiction (with $u_2 = u_1$) since $\delta_4$ contains all elements of $\delta$ where the source and the target states are in $X_4$. If $p' \notin X_4$, then by definition $p$ is not reachable from any state in $X_4$ which again is a contradiction as $p \in X_4$.
\end{proof}

\begin{lemma}
\label{simulation}
Let $S, S'$ be any LTS over same set of states $X$. Let $MAS$ be the maximal alternating simulation from $S$ to itself. 
Let $\mathcal{I}:X \mapsto X$ be an identity function. 
$S' \preceq_{\mathcal{I}} S \preceq_{MAS} S'$ (or $S \preceq_{\mathcal{I}} S' \preceq_{MAS} S)$ implies $MAS$ is the maximal alternating simulation relation from $S'$ to itself.  
\end{lemma}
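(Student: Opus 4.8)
The plan is to reduce everything to two standard facts about alternating simulation relations (ASRs). The first is that ASRs compose: if $R$ is an ASR from $\Ss_a$ to $\Ss_b$ and $R'$ is an ASR from $\Ss_b$ to $\Ss_c$, then $R' \circ R \coloneqq \{(x_a, x_c) \mid \exists x_b.\, (x_a, x_b) \in R \wedge (x_b, x_c) \in R'\}$ is an ASR from $\Ss_a$ to $\Ss_c$; this is exactly the statement behind transitivity of $\preceq_{AS}$ and is verified by concatenating the witnesses of conditions (i)--(iii) of Def.~\ref{def:altsim}. The second is that for any LTS the identity relation on its state set is an ASR from the system to itself, whence $\mathcal{I} \subseteq MAS$ and, in particular, $MAS$ is reflexive. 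I give the argument for the hypothesis $S' \preceq_{\mathcal{I}} S \preceq_{MAS} S'$; the remaining case follows by swapping $S$ and $S'$ and reversing every composition.

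First I would show that $MAS$ is an ASR from $S'$ to itself. Since $\mathcal{I}$ is an ASR from $S'$ to $S$ and $MAS$ is an ASR from $S$ to $S'$, composition gives that $MAS \circ \mathcal{I}$ is an ASR from $S'$ to $S'$; and since $\mathcal{I}$ is the identity on the common state set $X$, we have $MAS \circ \mathcal{I} = MAS$. Hence $S' \preceq_{MAS} S'$.

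Next, for maximality, let $R$ be an arbitrary ASR from $S'$ to $S'$; I claim $R \subseteq MAS$. Composing $MAS$ (an ASR from $S$ to $S'$), then $R$ (from $S'$ to $S'$), then $\mathcal{I}$ (from $S'$ to $S$), one obtains that $\mathcal{I} \circ R \circ MAS = R \circ MAS$ is an ASR from $S$ to $S$. As $MAS$ is the maximal ASR from $S$ to itself, $R \circ MAS \subseteq MAS$. Using reflexivity of $MAS$: for any $(x,y) \in R$ we have $(x,x) \in MAS$ and $(x,y) \in R$, so $(x,y) \in R \circ MAS \subseteq MAS$; thus $R \subseteq MAS$. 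Together with the previous paragraph, this proves $MAS$ is the maximal ASR from $S'$ to itself.

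The main obstacle is organizational rather than mathematical: one has to keep the composition directions straight (which system sits on which side of each relation) and, for a self-contained write-up, spell out the composition-of-ASRs fact, whose condition (iii) is the delicate point, since the action chosen in the intermediate system must be fixed before quantifying over that system's successors, so the witnesses have to be threaded in the right order. Once composition and the reflexivity of $MAS$ are available, both halves of the lemma collapse to the trivial identities $R \circ \mathcal{I} = R$ and $\mathcal{I} \circ R = R$.
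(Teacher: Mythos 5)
Your proof is correct and follows essentially the same route as the paper's: both halves rest on transitivity of alternating simulation (which you package as composition of ASRs through the identity) together with reflexivity of $MAS$ and its maximality over $S$. The only difference is presentational — the paper argues pointwise with states $S'(p)\leAS S'(q)$ and derives a contradiction, while you work directly at the level of relations via $R\circ MAS\subseteq MAS$, which is a slightly cleaner packaging of the same argument.
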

\begin{proof}
Given $S' \preceq_{\mathcal{I}}S \preceq_{MAS} S'$.
\\Hence, $\forall (p,q) \in MAS. S'(P) \leAS S(P) \leAS S(Q)$. This implies (1)$S' \preceq_{\MAS_1} S'$ (by transitive). 
As $MAS$ is also a maximal relation from $S$ to itself, it contains all the reflexive pairs i.e. pairs of the form $(p,p)$.  Hence, $\forall p \in X. S'(p) \leAS S(p) \leAS S' (p)$. Hence,(2) $\forall p \in X. S(p) \equivAS S'(p)$. 
We now show that $MAS$ is the maximal relation satisfying (1). Suppose it is not. Then, there exists $(p',q') \notin \MAS_1$ such that $S'(p') \leAS S'(q')$. By (2) $S(p') \equivAS S'(p') \leAS S'(q') \equivAS S(q')$. This implies $S(p') \leAS S(q')$. But this is a contradiction as $(p',q') \notin MAS$ and $MAS$ is a maximal alternating simulation from $S$ to itself. (3) Hence, $MAS$ is the maximal alternating relation for $S'$ to itself. 

Similar argument as above proves, if $S \preceq_{\mathcal{I}}S' \preceq_{MAS} S$ is satisfied, $MAS$ is a maximal alternating simulation from $S'$ to itself.
\end{proof}

\begin{proof}[Proof of Lemma \ref{lem:necsat}]
Let $T_0 = \mathcal{S}$ and $T_i = S^i(T_{i-1})$. Let $T_i \coloneqq (X_i,X_{0,i}, U_i,Y_i, \delta_i, H_i)$. By construction of $T_2,T_3$. $X_1=X_2=X_3$. Let $X=X_1$. 
Moreover, recall that $X_1 =X_2 = X_3$.

By Proposition~\ref{prop:c1}, we know that $S^1(\mathcal{S})$ satisfies $N_1$ and has a maximal alternating simulation relation $\MAS_1$ to itself which is anti-symmetric. Note that former is equivalent to latter.

By Proposition~\ref{prop:c1} $S^2(S^1(\mathcal{S}))$ satisfies $N_2$ and
By Lemma \ref{lem:step2correctness}, $S^2(S^1(\mathcal{S})) \preceq_{\mathcal{I}}S^1(\mathcal{S}) \prec_{\MAS_1} S^2(S^1(\mathcal{S}))$. By Lemma \ref{simulation}, latter implies $\MAS_1$ is the maximal alternating relation from $S^2(S^1(\Ss))$ to itself. This implies that it satisfies condition $N_1$ too.

By Prop.~\ref{prop:c3}, $T_3$ satisfies $N_3$.  
By Lemma \ref{lem:step3correctness} $S^2(S^1(\Ss))\preceq_{\mathcal{I}}S^3(S^2(S^1(\mathcal{S}))\preceq_{\MAS_1}S^2(S^1(\mathcal{S})$. By Lemma \ref{simulation}, $\MAS_1$ is a maximal alternating simulation from $S^3(S^2(S^1(\mathcal{S}))$ to itself and $\MAS_1$ is already shown to be anti-symmetric. Hence, $S^3(S^2(S^1(\mathcal{S}))$ satisfies $N_1$.
For the sake of readability let $T_2 = S^2(S^1(\mathcal{S}), T_3 = S^3(S^2(S^1(\mathcal{S}))$ We now show that $T_3$ satisfies $N_2$.
We only delete non-deterministic transitions on each action to get $T_3$ from $T_2$ Hence, (1) $\forall p \in X$ $U_2(p)=U_3(p)$. Moreover, to get $T_3$ we only delete transitions which are younger siblings. Hence, (2) $\forall p \in X. \forall a \in U_2(p)$ every state $p'\in \Post_{T_2}(p,a)$ alternately simulates some state $p'' \in \Post_{T_3}(p,a)$ (where $p'' = p'$ if $(p,a,p'')$ was not deleted in $T_3$
else $p''$ is such that $(p,a,p'')$ is an elder sibling to $(p,a,p')$). Suppose $T_3$ has either an irrational action or redundant action at state $p$. This implies (3) there exist distinct $u, u'$ such that $(p,u') \sqsubseteq_{T_3} (p,u)$. In other words, every state $q \in \Post_{T_3}(p,u)$ alternately simulates a state $q'' \in \Post_{T_3}(p,u')$.
Moreover, by construction, (4)$\forall p \in X. \forall b \in U_3(p)$ every state $p'\in \Post_{T_3}(p,b)$ alternately simulates some state $p'' \in \Post_{T_3}(p,b)$. Combining (2),(3), (4) by substituting $a=u \in (2)$ and $b=u'$ in (4), distinct $u,u'$ imply $(p,u') \sqsubseteq_{T_2} (p,u)$. This implies $T_2$ does not satisfy $N_2$ which results in a contradiction.

Finally, note that, deletion of inaccessible states wouldn't affect the equivalence modulo alternating simulation equivalence. Moreover, trivially, removing these states won't add new states equivalent to an existing state, add irrational or redundant moves or, add younger siblings to the transition system hence preserving, $N_1$, $N_2$, $N_3$. And by definition, $\mathcal{S}_{out} = S^4(S^3(S^2(S^1(\mathcal{S})))$ satisfies $N_4$. Hence, $\mathcal{S}_{out}$ is the output of our algorithm satisfying all the above mentioned conditions.
\end{proof}

\section{Other proofs}

\begin{proof}[Proof of Prop.~\ref{prop:etcisthesameasstc}]
Consider the identity relation $I \coloneqq \{(x,x) \mid x \in X\}$ as a trivial ASR from $\Ss_{\mathrm{PETC}}$ to itself. Take any state $x = (q,\tau^\mathrm{low}_q).$ Then $\Post(x,\mathtt{s}) \subset \Post(x,\mathtt{w}),$ where this subset relation is strict. Therefore, the action $\mathtt{w}$ is an irrational action on $x$, thus we can remove it from $x$, preserving ASE by Lemma \ref{lem:step2correctness}. This removal renders $(q,c)$ unreachable for all $c > \tau^\mathrm{low}_q$ (owing also to the fact that, by assumption, any such $(q,c)$ is not initial). Thus, these states are removed, again preserving ASE by Lemma \ref{lem:step4correctness}. The obtained system is as in Def.~\ref{def:reducedtraffic}.
\end{proof}
		
\end{document}